\newcommand\setrow[1]{\gdef\rowmac{#1}#1\ignorespaces}
\newcommand\clearrow{\global\let\rowmac\relax}
\newtheorem{theorem}{Theorem}
\newtheorem{lemma}{Lemma}
\newtheorem{corollary}{Corollary}
\newtheorem{example}{Example}
\newtheorem{definition}{Definition}
\begin{document}
		
		\title{A sequence-form characterization and differentiable path-following method for computing normal-form perfect equilibria in extensive-form games}
		
		\author[1,3]{\fnm{Yuqing} \sur{Hou}}\email{yuqinghou2-c@my.cityu.edu.hk}
		
		\author*[2]{\fnm{Yiyin} \sur{Cao}}\email{yiyincao2-c@my.cityu.edu.hk}
		
		\author[3]{\fnm{Chuangyin} \sur{Dang}}\email{mecdang@cityu.edu.hk}
		
		\author[1]{\fnm{Yong} \sur{Wang}}\email{yongwang@ustc.edu.cn}
		
		\affil[1]{\orgdiv{Department of Automation}, \orgname{University of Science and Technology of China}, \orgaddress{\city{Hefei}, \country{China}}}
		
		\affil[2]{\orgdiv{School of Management}, \orgname{Xi'an Jiaotong University}, \orgaddress{\city{Xi'an}, \country{China}}}
		
		\affil[3]{\orgdiv{Department of Systems Engineering}, \orgname{City University of Hong Kong}, \orgaddress{\city{Hong Kong}, \country{China}}}
		
		\abstract{
			The sequence form, owing to its compact and holistic strategy representation, has demonstrated significant efficiency in computing normal-form perfect equilibria for two-player extensive-form games with perfect recall. Nevertheless, the examination of $n$-player games remains underexplored. To tackle this challenge, we present a sequence-form characterization of normal-form perfect equilibria for $n$-player extensive-form games, achieved through a class of perturbed games formulated in sequence form. Based on this characterization, we develop a differentiable path-following method for computing normal-form perfect equilibria and prove its convergence. This method formulates an artificial logarithmic-barrier game in sequence form, introducing an additional variable to regulate the impact of logarithmic-barrier terms on the payoff functions, as well as the transition of the strategy space. We prove the existence of a smooth equilibrium path defined by the artificial game, starting from an arbitrary positive realization plan and converging to a normal-form perfect equilibrium of the original game as the additional variable approaches zero. Furthermore, we extend Harsanyi's linear and logarithmic tracing procedures to the sequence form and develop two alternative methods for computing normal-form perfect equilibria. Numerical experiments further substantiate the effectiveness and computational efficiency of our methods.
		}
		
		\keywords{Extensive-Form Game \sep Sequence Form \sep Normal-Form Perfect Equilibrium \sep Differentiable Path-Following Method}
			
		\pacs[JEL Classification]{C72}
		
	\maketitle
	
	
	\section{Introduction}
	Game theory provides a comprehensive mathematical framework for decision optimization in settings involving strategic interaction among rational agents. The primary concerns within this field revolve around the representation and resolution of games. The extensive-form game~\cite{KuhnExtensiveGames1950} provides a significant representation, especially applicable in scenarios involving sequential interactions. As a central solution concept for extensive-form games, Nash equilibrium signifies a state in which no player can improve their payoff by changing their strategy alone. Nevertheless, as explored by Selten~\cite{SeltenReexaminationperfectnessconcept1975}, Myerson~\cite{MyersonRefinementsNashequilibrium1978}, Kreps and Wilson~\cite{KrepsSequentialEquilibria1982}, and van Damme~\cite{vanDammeStabilityperfectionNash1987}, an extensive-form game can have multiple Nash equilibria, some of which may deviate from our intuitive expectations regarding the game's outcome and off-equilibrium strategies. As a refinement of Nash equilibrium, perfect equilibrium introduced by Selten~\cite{SeltenReexaminationperfectnessconcept1975} can eliminate a variety of counter-intuitive equilibria by introducing slight perturbations to the strategies. According to Selten, the concept of perfect equilibrium in extensive-form games can be classified into two types: extensive-form perfect equilibrium and normal-form perfect equilibrium, neither of which is contained within the other. As noted by van Damme~\cite{vandammeRelationPerfectEquilibria1984}, extensive-form perfect equilibria may involve dominated strategies. In addition, Kohlberg and Mertens~\cite{KohlbergStrategicStabilityEquilibria1986} contended that the reduced normal form contains all essential information required for decision-making. Stalnaker~\cite{stalnakerExtensiveStrategicForms1999} further emphasized the sufficiency of normal-form representations in the context of epistemic models. Therefore, it is crucial to study the equilibrium refinements of extensive-form games in normal form. This paper investigates the computation of normal-form perfect equilibria in finite $n$-player extensive-form games with perfect recall.
	
	The typical methods for computing normal-form perfect equilibria in extensive-form games rely on transforming these games into their normal-form representations, followed by the application of equilibrium computation methods specific to normal-form games. Notably, path-following methods have gained prominence as powerful and effective tools for computing equilibria in normal-form games. These methods are fundamentally grounded in the computation of Nash equilibria. Lemke and Howson~\cite{LemkeEquilibriumPointsBimatrix1964} proposed a complementarity pivoting algorithm to obtain Nash equilibria for bimatrix games, and later, various other algorithms in~\cite{vandenelzenProcedureFindingNash1991, aliExploringTwoNew2024, hussainNumericalExplorationTwo2025} for bimatrix games have been developed. Subsequent extensions by Rosenmüller~\cite{RosenmullerGeneralizationLemkeHowson1971} and Wilson~\cite{WilsonComputingEquilibriaNPerson1971} independently adapted the Lemke-Howson algorithm to $n$-player games. To render this extended method computationally feasible, Garcia et al.~\cite{GarciaSimplicialApproximationEquilibrium1973} introduced a simplicial path-following method and implemented it to approximate Nash equilibria. Over the subsequent decades, significant advancements were made in the development of simplicial path-following methods for computing Nash equilibria in normal-form games. Noteworthy contributions include the work of van der Laan and Talman~\cite{vanderLaanComputationFixedPoints1982}, Doup and Talman~\cite{Doupnewsimplicialvariable1987}, and Herings and van den Elzen~\cite{HeringsComputationNashEquilibrium2002}, who proposed increasingly flexible and efficient methods. Although these methods are capable of ultimately reaching Nash equilibria, their convergence rates are notably hindered by the failure to exploit the differentiability inherent in games~\cite{allgowerPiecewiseLinearMethods2000a}. In response, several differentiable path-following methods have been presented in the literature, with key contributions from Herings and Peeters~\cite{Heringsdifferentiablehomotopycompute2001}, Harsanyi and Selten ~\cite{HarsanyiGeneralTheoryEquilibrium1988}, Govindan and Wilson~\cite{GovindanglobalNewtonmethod2003}, and Chen and Dang~\cite{Chenreformulationbasedsmoothpathfollowing2016}. These approaches have markedly enhanced the convergence rates when computing Nash equilibria in normal-form games.
	
	Research into the computation of Nash equilibria in normal-form games has led to the development of various methods for determining perfect equilibria. van den Elzen and Talman~\cite{vandenelzenProcedureFindingNash1991} developed the first method to compute a perfect equilibrium, employing a complementary pivoting algorithm that operates exclusively in bimatrix games. Chen and Dang \cite{chenReformulationBasedSimplicialHomotopy2019} generalized Kohlberg and Mertens's Nash equilibrium reformulation~\cite{KohlbergStrategicStabilityEquilibria1986, Chenreformulationbasedsmoothpathfollowing2016} to perturbed games and proposed a simplicial path-following method for identifying a perfect equilibrium in $n$-person games. Subsequently, Chen and Dang \cite{Chenextensionquantalresponse2020} extended the logistic quantal response equilibrium, demonstrating the existence of a smooth path to a perfect equilibrium under a specific assumption regarding payoff functions. Chen and Dang~\cite{Chendifferentiablehomotopymethod2021} introduced an exterior-point differentiable homotopy method capable of selecting an approximate perfect equilibrium. Motivated by the limitations of prior methods and guided by the principles of Harsanyi and Selten's equilibrium selection philosophy~\cite{HarsanyiGeneralTheoryEquilibrium1988}, more alternative schemes for computing perfect equilibria have been developed. By exploiting the selection properties inherent in the Nash's mappings, Harsanyi's tracing procedures, and logistic quantal response equilibrium, Cao and Dang~\cite{caoComplementarityEnhancedNashs2022,CaovariantHarsanyitracing2022,caoVariantLogisticQuantal2024} developed their respective variants and distinct differentiable path-following methods to select an exact perfect equilibrium in normal-form games. Differentiable path-following methods have also shown remarkable performance in computing other refinements of Nash equilibria~\cite{Zhansmoothpathfollowingalgorithm2018, caoDifferentiablePathFollowingMethod2023}.
	
	While several differentiable path-following methods exist for computing perfect equilibria in normal-form games, the exponential growth inherent in normal-form representations significantly undermines computational tractability, rendering these methods impractical even for games of moderate scale. Although reduced normal-form representations achieve notable size reductions, the fundamental problem of exponential expansion persists~\cite{dalkey1953equivalence}. To mitigate the complexity arising from this transformation, Wilson~\cite{wilsonComputingEquilibriaTwoPerson1972} and Koller and Megiddo~\cite{KollerFindingmixedstrategies1996} suggested employing mixed strategies with small supports in two-player extensive-form games. Furthermore, Koller and Megiddo~\cite{Kollercomplexitytwopersonzerosum1992} developed a polynomial-time algorithm that uses realization weights at nodes to solve two-person zero-sum extensive-form games with perfect recall, despite the exponential increase in constraints. To overcome these scalability issues, von Stengel~\cite{vonStengelEfficientComputationBehavior1996} introduced the sequence form, a more compact representation that reformulates Nash equilibrium computation with improved efficiency. In this representation, pure strategies are replaced by sequences and random strategies comply with a recursively defined linear system. Building upon this, substantial progress has been made for two-player extensive-form games with perfect recall. Koller et al.~\cite{KollerEfficientComputationEquilibria1996} developed an algorithm that applies Lemke's method~\cite{LemkeBimatrixEquilibriumPoints1965} to the linear complementarity problem derived from the sequence form. Its practical efficiency was demonstrated through the Gala system by Koller and Pfeffer~\cite{KollerRepresentationssolutionsgametheoretic1997}. Subsequently, von Stengel et al.~\cite{VonStengelComputingNormalForm2002} extended van den Elzen and Talman's~\cite{vandenelzenProcedureFindingNash1991} method for computing perfect equilibria in normal-form games to the sequence form, enabling the computation of normal-form perfect equilibria in extensive-form games. Miltersen and Sørensen~\cite{MiltersenComputingquasiperfectequilibrium2010} further adapted Koller et al.'s~\cite{KollerEfficientComputationEquilibria1996} approach to handle perturbed games, facilitating the computation of quasi-perfect equilibria. Research on games involving more than two players remains limited. Govindan and Wilson~\cite{govindanStructureTheoremsGame2002} extended structure theorems for perturbed extensive-form games by introducing strategies in sequence form, leading to a piecewise differentiable path-following method for computing Nash equilibria in $n$-player perturbed extensive-form games. Hou et al.~\cite{houSequenceformDifferentiablePathfollowing2025} proposed a globally differentiable path-following method within the sequence form framework to compute Nash equilibria. However, no sequence-form globally differentiable path-following methods currently exist to compute normal-form equilibrium refinements, including normal-form perfect equilibria. A major barrier to further progress is the lack of a theoretical foundation for characterizing normal-form perfect equilibria in sequence form. Although Gatti et al.~\cite{gattiCharacterizationQuasiperfectEquilibria2020} presented a sequence-form characterization for quasi-perfect equilibria, their reliance on sequential structures conflicts with the simultaneity principle inherent in the normal form, making it insufficient for fully capturing normal-form perfect equilibria.
	
	The objective of this study is to develop a sequence-form characterization of normal-form perfect equilibrium in $n$-player extensive-form games with perfect recall and, based on this, to propose an effective and efficient globally differentiable path-following method for its computation. To achieve this, we begin by establishing the equivalence relationship between strategies in normal-form and sequence-form representations, as well as their connection to best-response strategies. Following this, we introduce a class of perturbed games in sequence form and apply an optimization-based approach to derive the corresponding equilibrium systems for these perturbed games. We then demonstrate the necessity and sufficiency of the solution limits of these equilibrium systems for characterizing normal-form perfect equilibria. To develop a differentiable path-following method based on this characterization, we construct an artificial game in sequence form by introducing an additional variable and embedding logarithmic barrier terms within the payoff functions. As the additional variable decreases from two to zero, the artificial game experiences two distinct transformation phases. The initial phase aims to identify a unique starting point by shifting strategies from the constructed form to realization plans. In the subsequent phase, the logarithmic barrier terms confine the strategies within the interior of the realization plan space, thus guaranteeing a well-defined mapping from realization plans to mixed strategies. We establish the existence of a smooth equilibrium path governed by the artificial game, which converges to a normal-form perfect equilibrium of the original game as the additional variable tends towards zero. Lastly, numerical experiments are performed to demonstrate the effectiveness and computational efficiency of our proposed methods.
	
	The rest of this paper is structured as follows. Section~\ref{nfpe-sec-prm1} reviews the concept of normal-form perfect equilibrium within extensive-form games and introduces the sequence form. In Section~\ref{nfpe-sec-prm2}, we provide a sequence-form characterization of normal-form perfect equilibria. Section~\ref{nfpe-sec-prm3} proposes a globally differentiable logarithmic path-following method based on the sequence form for computing normal-form perfect equilibria. In Section~\ref{nfpe-sec-prm5}, we extend Harsanyi's linear and logarithmic tracing procedures to the sequence form, resulting in two alternative computational approaches. Numerical results and comparative evaluations of these methods are presented in Section~\ref{nfpe-sec-prm6}, with concluding remarks given in Section~\ref{nfpe-sec-prm7}.
	\section{Preliminaries}\label{nfpe-sec-prm1}
	\begin{table}[tb!]
		\centering
		\caption{Notation for Extensive-Form Games}
		\begin{tabular}{ll}
			\toprule
			Symbol & Terminology\\
			\midrule
			$N=\{1,2,\ldots,n\}$ & Set of players\\
			$N_c=N\cup\{c\}$ & Set of players and chance player $c$\\
            $-i$ & All non-chance players excluding player $i\in N$\\
			$a$ & Action taken by a player\\
			$H$ & Set of histories,  \\
			&$\emptyset\in H$ and $\langle a_1,\ldots,a_L\rangle\in H$ if $\langle a_1,\ldots,a_K\rangle\in H$ and $L<K$\\
			$Z$ & Set of terminal histories\\
			$A(h)=\{a:(h,a)\in H\}$ & Set of actions after a nonterminal history $h$\\
            $R_{i}(h)$ & Record of player $i\in N_c$'s experience along $h$\\
			$P(h)$ & Player who takes an action after $h$\\
			$f_{c}(a|h)$ & Probability that chance player $c$ takes action $a$ after $h$\\
			$\mathcal{I}_{i}$ & Collection of information partitions of $\{h\in H|P(h)=i\}$\\
			$M_{i}=\{1,\ldots,m_{i}\}$ & Set of information partition indices for player $i\in N_c$\\
			$I^{j}_{i}\in\mathcal{I}_{i},j\in M_{i}$ & $j$th information set of player $i\in N_c$, \\&$A(I^j_i)\triangleq A(h)= A(h')$ whenever $h,h'\in I^j_i$\\
			$\succsim_i$ & Preference relation of player $i\in N$ \\
			$u_z^{i}:Z\to\mathbb{R}$ & Payoff function of player $i\in N$\\
			$|C|$ & Cardinality of a finite set $C$\\
			$m_0=\sum_{i\in N}m_i$ & Number of information sets\\
			$n_0=\sum_{i\in N}\sum_{j\in M_i}|A(I^j_i)|$ & Number of actions for non-chance players\\
			$\text{int}(C)$ & Interior of the set $C$\\
			\bottomrule
		\end{tabular}
		\label{nfpe-tab-pre1}
	\end{table}
	We adopt the notation for extensive-form games from Osborne and Rubinstein~\cite{OsborneCourseGameTheory1994}, summarized in Table~\ref{nfpe-tab-pre1}. An extensive-form game can be represented by
	\[\Gamma=\langle N, H, P, f_c, \{{\cal I}_i\}_{i\in N}, \{\succsim_i\}_{i\in N}\rangle.\] This paper addresses finite extensive-form games with perfect recall. ``finite" means that $H$ is a finite set. Perfect recall holds if, for each player $i$, any histories $h$ and $h'$ in the same information set satisfy $R_i(h)=R_i(h')$, ensuring consistent memory of past actions and knowledge.
	
    The equilibrium concept we aim to investigate is the normal-form perfect equilibrium. With this in mind, we need to introduce the normal-form representation of extensive-form games. Given an extensive-form game $\Gamma$, a pure strategy $s^i$ of player $i\in N_c$ is defined as a function that maps each information set $I^j_i,j\in M_i$ to an action $a\in A(I^j_i)$. To facilitate computations, we define 
    	\begin{equation*}\label{nfpe-equ-pre0}
    	s^i(a) = \left\{\begin{array}{ll}
    		1  & \text{if $s^i(I^j_i)=a$},\\
            
    		0 & \text{otherwise.}
    	\end{array}\right.\end{equation*}
    The payoff function for player $i\in N$ under any pure strategy combination $s=\{s^i:i\in N_c\}$ is defined as 
    \begin{equation}\label{nfpe-equ-pre1}\begin{array}{l}u^i(s)=\sum\limits_{h=\langle a_1,\ldots,a_L\rangle\in Z}u^i_z(h)\prod\limits_{q=0}^{L-1}s^{P(\langle a_1,\ldots,a_q\rangle)}(a_{q+1}),\end{array}\end{equation}
    The chance player's mixed strategy $\sigma^c=(\sigma^c(s^c):s^c\in S^c)$ is fixed and determined by $\sigma^c(s^c)=\prod_{h\in H,P(h)=c}\sum_{a\in A(h)} s^c(a)f_c(a|h)$. Additional notations and their descriptions are provided in Table~\ref{nfpe-tab-pre2}. Then the normal-form representation of $\Gamma$ is expressed as $\Gamma_n=\langle N, S, \Xi, \{u^i\}_{i\in N}\rangle$.
    
    In the reduced normal-form representation, pure strategies are defined in a more compact manner while preserving all valid strategic information. Specifically, for a pure strategy $s^i$ of player $i\in N_c$, $s^i(I^j_i)=a$, $j\in M_i,a\in A(I^j_i)$ means that, for $\langle a_1,\ldots,a_L\rangle\in I^j_i$, $s^i(I^{j_q}_i)=a_q$ holds for all $0\leq q\leq L$ with $j_q\in M_i,a_q\in A(I^{j_q}_i)$. All other definitions remain unchanged and are still applicable. To highlight the superiority of our methods, all derivations in this paper are based on the reduced normal form. For simplicity, we shall refer to it as the normal form throughout, omitting the qualifier ``reduced".
    
    Given a mixed strategy profile $\sigma=(\sigma^i:i\in N)\in \Xi$, the expected payoff of player $i\in N$ is given by $u^i(\sigma)=\sum_{s^i\in S^i}\sigma^i(s^i)u^i(s^i,\sigma^{-i})$ with
    \begin{equation}\label{nfpe-equ-pre2}\begin{array}{l}u^i(s^i,\sigma^{-i})=\sum\limits_{s^{-i}\in S^{-i}}u^i(s^i,s^{-i})\prod\limits_{i_q\in N_c\backslash \{i\}}\sigma^{i_q}(s^{i_q}).\end{array}\end{equation}
    A mixed strategy profile $\sigma^*$ is referred to a Nash equilibrium if, for every player $i\in N$, the inequality $u^i(\sigma^*)\geq u^i(\sigma^i,\sigma^{*-i})$ is satisfied for all $\sigma^i\in \Xi^i$. This condition ensures that no player can improve their payoff by unilaterally deviating from their strategy in the equilibrium profile. However, the weakness of this condition can lead to a large equilibrium set, leading to the emergence of numerous counterintuitive equilibria and great uncertainty in determining which equilibrium to choose. In response to this limitation, Selten~\cite{SeltenReexaminationperfectnessconcept1975} introduced the concept of perfect equilibrium, eliminating a large number of unreasonable equilibria. The definition of normal-form perfect equilibrium in an extensive-form game is as follows.
\begin{definition}\label{nfpe-def-pre1}
    {\em Let $\Gamma$ be an extensive form game. For any sufficiently small $\varepsilon > 0$, a totally mixed strategy profile $\sigma(\varepsilon)\in \Xi$ is an $\varepsilon$-normal-form perfect equilibrium of $\Gamma$ if $\sigma^i(\varepsilon;s^i)\leq\varepsilon$ whenever $u^i(s^i, \sigma^{-i}(\varepsilon)) < u^i(\tilde s^{i},\sigma^{-i}(\varepsilon))$ for all $i\in N$ and $s^i,\tilde s^i\in S^i$. A mixed strategy profile $\sigma^*\in \Xi$ is defined as a normal-form perfect equilibrium of game $\Gamma$ if $\sigma^*$ is a limit point of some sequence $\{\sigma(\varepsilon^k)\}_{k=1}^\infty$, where $\lim_{k\to\infty}\varepsilon^k=0$ and each $\sigma(\varepsilon^k)$ is an $\varepsilon^k$-normal form perfect equilibrium of $\Gamma$.}
\end{definition}
\begin{table}[tb!]
	\centering
	\caption{Notation for Games in Normal Form or Sequence Form}
	\begin{tabular}{ll}
		\toprule
		Symbol & Terminology\\
		\midrule
		$s^i$ & Pure strategy of player $i$\\
		$S=\underset{i\in N_c}{\times}S^i$ & Set of pure strategy profiles\\
		$\sigma^i$ & Mixed strategy of player $i\in N_c$, probability measure over $S^i$\\
		$\Xi=\underset{i\in N}{\times}\Xi^i$ & Set of mixed strategy profiles,  $\Xi^i=\{\sigma^i\in\mathbb{R}_+^{|S^i|}|\sum\limits_{s^i\in S^i}\sigma^{i}(s^i)=1\}$\\
		$\text{int}(\Xi)=\underset{i\in N}{\times} \text{int}(\Xi^i)$ & Set of totally mixed strategy profiles\\
		$u^i(s)$ & Expected payoff of player $i$ on the pure strategy profile $s$\\
		$\varpi^i$ & Sequence of actions taken by player $i$\\
		$\varpi^i_{I^j_i}$ & Sequence of player $i$ leading to $I^j_i$, $\varpi^i_h=\varpi^i_{I^j_i}$ for any $h\in I^j_i$ \\
		$\varpi^i_{I^j_i}a$ & Extended sequence $\varpi^i_{I^j_i}\cup\{a\}$\\
		${W}=\underset{i\in N_c}\times{W}^i$ & Collection of sequence profiles, $\emptyset\in{W}^i$\\
		$g^i(\varpi)$ & Expected payoff of player $i$ on the sequence profile $\varpi$\\
		$\gamma^i$ & Realization plan of player $i\in N_c$\\
		$\Lambda=\underset{i\in N}\times{ \Lambda^i}$ & Set of realization plan profiles\\
		$M_i(\varpi^i)$ & Set of $j\in M_i$ satisfying $\varpi^i_{I^j_i}=\varpi^i$\\
		$D_i$ & Set of $(j,a)$ for player $i$ with $M_i(\varpi^i_{I^j_i}a)=\emptyset$\\
		\bottomrule
	\end{tabular}
	\label{nfpe-tab-pre2}
\end{table}
The computation of a normal-form perfect equilibrium of an extensive-form game typically requires a transformation into its normal form. As Wilson~\cite{wilsonComputingEquilibriaTwoPerson1972} points out, even simple extensive-form games often produce exceedingly large normal forms due to the exponential increase in the number of pure strategies relative to the number of information sets. To circumvent this exponential growth, the sequence form, formally developed by von Stengel~\cite{vonStengelEfficientComputationBehavior1996}, has emerged as a particularly efficient alternative.

The sequence form replaces pure strategies with sequences, providing a compact representation. For $i\in N_c$, a sequence $\varpi^i$ is defined as the action set of player $i$ for some history. Specifically, for $h=\langle a_1,\ldots,a_L\rangle\in H$, the corresponding sequence is given by\[\varpi^i_h=\{a_q:\text{$a_q\in A(I^j_i)$ for some $j\in M_i$ and $1\leq q\leq L$}\},\]which is either an empty sequence $\emptyset$ or an extension $\varpi^i_{h'}a$ of a preceding sequence $\varpi^i_{h'}$ with $i \in N$ and $h'\in H$. The function $g^i$ specifies the payoff of player player $i$ for any sequence profile $\varpi\in{W}$, defined by
	\begin{equation*}\label{rplanpayoff}
		g^i(\varpi) = \left\{\begin{array}{ll}
			u^i_z(h)  & \text{if $\varpi$ is defined by $h\in Z$,}\\
			
			0 & \text{otherwise.}
		\end{array}\right.\end{equation*}
We say that $\varpi=(\varpi^i:i\in N_c) \in {W}$ is defined by $h=\langle a_1,\ldots,a_L\rangle$ if $\mathop{\cup}_{i\in N_c} \varpi^i=\{a_1,\ldots,a_L\}$.
	
	The inherent challenge in developing algorithms for the sequence form lies in the fact that randomization over sequences is no longer represented by a probability distribution over the strategies at an information set. Instead, it requires the formulation of an recursive system of linear equations. For player $i\in N_c$, a random strategy in the sequence form is a function $\gamma^i$ defined on ${W}^i$, with the convention that $\gamma^i(\emptyset) = 1$. We call $\gamma^i$ a realization plan for player $i$ if it satisfies the linear system,
	\begin{equation}\label{nfpe-equ-pre3}\begin{array}{l}
			\sum\limits_{a\in A(I^j_i)}\gamma^i(\varpi^i_{I^j_i}a)-\gamma^i(\varpi^i_{I^j_i})=0,\;j\in M_i,\\
			0\le \gamma^i(\varpi^i_{I^j_i}a),\;j\in M_i,a\in A(I^j_i).
		\end{array}
	\end{equation}	
	This recursive system~(\ref{nfpe-equ-pre3}) suggests that the realization plan $\gamma^i$ is uniquely determined by the values of $\gamma^i(\varpi^i_{I^j_i}a),(j,a)\in D_i$, where no information set of player $i$ succeeds $\varpi^i_{I^j_i}a$. It reflects the holistic property of the sequence form. The chance player's realization plan $\gamma^c=(\gamma^c(\varpi^c):\varpi^c\in W^c)$ is determined by $\gamma^c(\varpi^c)=\prod_{a\in \varpi^c\cap A(h)}f_c(a|h)$, which satisfies the system~(\ref{nfpe-equ-pre3}). More notations and their descriptions can be found in Table~\ref{nfpe-tab-pre2}. The sequence form of an extensive-form game is represented as \[\Gamma_s=\langle N,\{{W}^i\}_{i\in N_c},\gamma^c,\{g^i\}_{i\in N}\rangle.\] Given a realization plan profile $\gamma=(\gamma^i:i\in N_c)$, the expected payoff for player $i\in N$ at sequence $\varpi^i\in W^i$ is defined as 
	\[\begin{array}{l}g^i(\varpi^i,\gamma^{-i})=\sum\limits_{\varpi^{-i}\in {W}^{-i}}g^i(\varpi^i,\varpi^{-i})\prod\limits_{i_q\ne i}\gamma^{i_q}(\varpi^{i_q}).\end{array}\]
	Thus, the overall expected payoff for player $i\in N$ can be written as
	\[\begin{array}{l}g^i(\gamma)=\sum\limits_{\varpi^i\in W^i}\gamma^i(\varpi^i)g^i(\varpi^i,\gamma^{-i}).\end{array}\]
	The number of sequences available to player $i$ is given by $\sum_{j \in M_i} |A(I_i^j)|+1$, exhibiting a linear growth relative to the number of information sets. This compactness, in conjunction with holism, makes the sequence form a crucial framework for developing efficient methods to compute normal-form perfect equilibria in extensive-form games.
	
	\section{A Sequence-Form Characterization of Normal-Form Perfect Equilibria}\label{nfpe-sec-prm2}
	This section begins by examining the relationship between mixed strategies and realization plans, laying the groundwork for the characterization of normal-form perfect equilibria. Following this, a sequence-form characterization of normal-form perfect equilibria are established as a limit point of a sequence of Nash equilibria within a class of perturbed games.
	\subsection{Relationship between Mixed Strategies and Realization Plans}
	von Stengel et al.~\cite{VonStengelComputingNormalForm2002} have conducted an preliminary exploration into the payoff equivalence between mixed strategies and realization plans. In this study, we offer a more detailed description and a rigorous proof of this relationship. Furthermore, we expand the analysis by incorporating an examination of best response strategies.
	
	Consider an extensive-form game, $\Gamma$, with $\Gamma_n$ as its normal form and $\Gamma_s$ as its sequence form. Given any pure strategy $s^i\in S^i$ of player $i\in N_c$, we define $s^i(\varpi^i)=\prod_{a\in \varpi^i}s^i(a)$ for $\varpi^i\in W^i$. For any $\sigma\in \Xi$, let $\gamma(\sigma)=(\gamma^{i}(\sigma^i;\varpi^i):i\in N_c,\varpi^i\in W^i)$, where $\gamma^{i}(\sigma^i;\varpi^i)=\sum_{s^i\in S^i}s^i(\varpi^i)\sigma^i(s^i)$. This entails that $\gamma^{i}(s^i;\varpi^i)=s^i(\varpi^i)$ and $\gamma(\sigma)\in\Lambda$. This relation is captured by the set $T=\left\{(\sigma,\gamma)\left|\sigma\in  \Xi,\gamma=\gamma(\sigma)\right.\right\}$, which leads to the following lemma. 
	\begin{lemma}\label{nfpe-lem-sfc1}{\em For any $\gamma\in\Lambda$, there exists a mixed strategy profile $\sigma$ such that $(\sigma,\gamma)\in T$.}\end{lemma}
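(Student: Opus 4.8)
The plan is to recover a mixed strategy from $\gamma$ by first building, for each player $i\in N$, a behavior strategy (a collection of probability distributions $\beta^i(\cdot\mid I^j_i)$ on $A(I^j_i)$, one per information set) and then converting it back to a mixed strategy. Since $\Lambda$ and $\Xi$ are Cartesian products of the single-player sets $\Lambda^i$ and $\Xi^i$ over $i\in N$, it suffices to produce, for each $i\in N$, a $\sigma^i\in\Xi^i$ with $\gamma^i(\sigma^i;w^i)=\gamma^i(w^i)$ for every $w^i\in W^i$.

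First I would fix $i\in N$ and define $\beta^i$ from $\gamma^i$ by $\beta^i(a\mid I^j_i)=\gamma^i(w^i_{I^j_i}a)/\gamma^i(w^i_{I^j_i})$ when $\gamma^i(w^i_{I^j_i})>0$ and $\beta^i(a\mid I^j_i)=1/|A(I^j_i)|$ otherwise, for $j\in M_i$ and $a\in A(I^j_i)$; the two lines of the recursive system~(\ref{nfpe-equ-pre3}) make these numbers sum to one over $a$ and remain nonnegative, so $\beta^i$ is well defined. I would then verify, by induction on $|w^i|$, that the realization plan $\bar\gamma^i$ induced by $\beta^i$ — the product of the factors $\beta^i(a\mid I^j_i)$ over the action choices comprising $w^i$, with $\bar\gamma^i(\emptyset)=1$ — coincides with $\gamma^i$. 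The empty sequence is immediate; for $w^i=w^i_{I^j_i}a$ one has $\bar\gamma^i(w^i)=\bar\gamma^i(w^i_{I^j_i})\,\beta^i(a\mid I^j_i)$ and $\bar\gamma^i(w^i_{I^j_i})=\gamma^i(w^i_{I^j_i})$ by the inductive hypothesis. If this value is positive, the definition of $\beta^i$ collapses the product to $\gamma^i(w^i_{I^j_i}a)=\gamma^i(w^i)$; if it is zero, then $\bar\gamma^i(w^i)=0$, while~(\ref{nfpe-equ-pre3}) gives $0\le\gamma^i(w^i_{I^j_i}a)\le\sum_{a'\in A(I^j_i)}\gamma^i(w^i_{I^j_i}a')=\gamma^i(w^i_{I^j_i})=0$, so $\gamma^i(w^i)=0$ as well.

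Next I would set $\sigma^i(s^i)$ equal to the product of the factors $\beta^i(s^i(I^j_i)\mid I^j_i)$ taken over the information sets of player $i$ that $s^i$ reaches — exactly those on which $s^i$ is specified in the reduced normal form — and check that this defines an element of $\Xi^i$ satisfying $\gamma^i(\sigma^i;w^i)=\sum_{s^i\in S^i}s^i(w^i)\sigma^i(s^i)=\bar\gamma^i(w^i)$ for all $w^i\in W^i$, again by induction on $|w^i|$ and using perfect recall. Combining the two steps gives $\gamma(\sigma)=\gamma$, hence $(\sigma,\gamma)\in T$. I expect the last verification to be the main obstacle: because a reduced pure strategy is specified only on the information sets it reaches, the sum-of-products identity is not a one-line independence argument but must be unwound by peeling off information sets in the order of the game tree and repeatedly collapsing a factor $\sum_{a\in A(I^j_i)}\beta^i(a\mid I^j_i)=1$; perfect recall is precisely what guarantees that the information sets whose choices make up a given sequence form the chains along which this recursion terminates.
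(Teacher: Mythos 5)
Your proposal is correct, but it takes a different route from the paper at the point where the two approaches diverge: the treatment of boundary realization plans. The paper only writes down the product-of-ratios formula $\sigma^i(\gamma^i;s^i)=\prod\gamma^i(w^i_{I^j_i}a)/\gamma^i(w^i_{I^j_i})$ for $\gamma\in\text{int}(\Lambda)$, where every denominator is positive and the identity $(\sigma(\gamma),\gamma)\in T$ is immediate; a general $\gamma^*\in\Lambda$ is then handled by approximating it with interior plans $\gamma^k\to\gamma^*$ and extracting a convergent subsequence of $(\sigma(\gamma^k),\gamma^k)$ inside the compact set $T$, so the boundary case is never constructed explicitly. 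You instead construct the preimage directly for every $\gamma\in\Lambda$: you define behavior probabilities by the same ratios where the parent sequence has positive probability and arbitrarily (uniformly) at unreached information sets, check via the recursion~(\ref{nfpe-equ-pre3}) that the induced realization plan still equals $\gamma^i$ (the zero-parent case forcing all children to zero), and then convert the behavior strategy to a mixed strategy by the Kuhn-type product over reached information sets, with perfect recall guaranteeing the telescoping $\sum_{a}\beta^i(a\mid I^j_i)=1$ collapses correctly in the reduced normal form. What your route buys is an explicit, fully constructive $\sigma$ with no limit argument and no appeal to compactness of $T$; what it costs is exactly the bookkeeping you flag at the end, namely verifying $\sum_{s^i}s^i(w^i)\sigma^i(s^i)=\bar\gamma^i(w^i)$ over reduced pure strategies, which the paper's interior-plus-limit argument keeps to the clean totally mixed case. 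Both arguments are sound; yours would need that final telescoping verification written out carefully (including that $\sigma^i$ sums to one), but the plan for it is the standard Kuhn's theorem computation and contains no gap in principle.
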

	\begin{proof}
		For $\gamma\in\text{int}(\Lambda)$, define $\sigma(\gamma)=(\sigma^i(\gamma^i;s^i):i\in N_c,s^i\in S^i),
		\text{where }\sigma^i(\gamma^i;s^i)=\prod_{j\in M_i,a\in A(I^j_i),s^i(a)=1}\gamma^i(\varpi^i_{I^i_j}a)/\gamma^i(\varpi^i_{I^j_i})$. Consider any $\gamma^*\in \Lambda$, and let $\{\gamma^k\in\text{int}(\Lambda)\}_{k=1}^{\infty}$ be a sequence converging to $\gamma^*$, i.e., $\lim_{k\rightarrow\infty}\gamma^k=\gamma^*$. It follows that $\sigma(\gamma^k)\in\text{int}(\Xi)$ and $(\sigma(\gamma^k),\gamma^k)\in T$ for each $k$. Because the sequence $\{(\sigma(\gamma^k),\gamma^k)\}_{k=1}^{\infty}$ lies within the compact set $ T$, there exists a convergent subsequence. Denote the limit mixed strategy of this subsequence as $\sigma^*\in  \Xi$, we have $(\sigma^*,\gamma^*)\in T$. This completes the proof.
	\end{proof}
	\begin{lemma}\label{nfpe-lem-sfc2}{\em If $(\sigma,\gamma)\in T$, $u^i(\sigma)=g^i(\gamma)$ holds for every player $i\in N$.}\end{lemma}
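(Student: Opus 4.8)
The plan is to unfold both sides of the claimed identity in terms of pure objects and match them term by term over terminal histories. Since $(\sigma,\gamma)\in T$ means $\gamma=\gamma(\sigma)$, i.e. $\gamma^{i_q}(w^{i_q})=\sum_{s^{i_q}\in S^{i_q}}s^{i_q}(w^{i_q})\sigma^{i_q}(s^{i_q})$ for every $i_q\in N_c$ and $w^{i_q}\in W^{i_q}$, and since $\gamma(\sigma)\in\Lambda$ has already been established, both $u^i(\sigma)=\sum_{s\in S}u^i(s)\prod_{i_q\in N_c}\sigma^{i_q}(s^{i_q})$ and $g^i(\gamma)=\sum_{w\in W}g^i(w)\prod_{i_q\in N_c}\gamma^{i_q}(w^{i_q})$ (obtained by substituting the conditional-payoff definitions into the overall ones) are well defined; the goal is to show they coincide.

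First I would rewrite the pure-profile payoff $u^i(s)$ from~(\ref{nfpe-equ-pre1}) in a ``sequence-grouped'' form. For a terminal history $h=\langle a_1,\ldots,a_L\rangle\in Z$ and each position $0\le q\le L-1$, the action $a_{q+1}$ is taken at $\langle a_1,\ldots,a_q\rangle$, which lies in a unique information set of player $P(\langle a_1,\ldots,a_q\rangle)$; hence $a_{q+1}$ is one of the actions constituting the sequence $w^{P(\langle a_1,\ldots,a_q\rangle)}_h$. Read the other way, the actions $a_1,\ldots,a_L$ are partitioned by their acting players, and for each $i_q\in N_c$ the block assigned to $i_q$ is exactly $w^{i_q}_h$. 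Consequently $\prod_{q=0}^{L-1}s^{P(\langle a_1,\ldots,a_q\rangle)}(a_{q+1})=\prod_{i_q\in N_c}\prod_{a\in w^{i_q}_h}s^{i_q}(a)=\prod_{i_q\in N_c}s^{i_q}(w^{i_q}_h)$, so that $u^i(s)=\sum_{h\in Z}u^i_z(h)\prod_{i_q\in N_c}s^{i_q}(w^{i_q}_h)$.

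Next I would substitute this into $u^i(\sigma)$, interchange the finite sums to bring $\sum_{h\in Z}$ outside, and use $S=\times_{i_q\in N_c}S^{i_q}$ with each factor depending only on its own coordinate to factor the inner sum:
\[
u^i(\sigma)=\sum_{h\in Z}u^i_z(h)\sum_{s\in S}\prod_{i_q\in N_c}s^{i_q}(w^{i_q}_h)\sigma^{i_q}(s^{i_q})=\sum_{h\in Z}u^i_z(h)\prod_{i_q\in N_c}\Big(\sum_{s^{i_q}\in S^{i_q}}s^{i_q}(w^{i_q}_h)\sigma^{i_q}(s^{i_q})\Big).
\]
By $\gamma=\gamma(\sigma)$ each inner sum equals $\gamma^{i_q}(w^{i_q}_h)$ (for $i_q=c$ this is consistent with the fixed chance realization plan), whence $u^i(\sigma)=\sum_{h\in Z}u^i_z(h)\prod_{i_q\in N_c}\gamma^{i_q}(w^{i_q}_h)$. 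On the sequence-form side, $g^i(w)=0$ unless $w$ is defined by some terminal history, every $h\in Z$ defines the unique sequence profile $w(h)=(w^{i_q}_h:i_q\in N_c)$ with $g^i(w(h))=u^i_z(h)$, and by the tree structure of $\Gamma$ together with perfect recall the map $h\mapsto w(h)$ is a bijection from $Z$ onto the set of terminal sequence profiles. Therefore $g^i(\gamma)=\sum_{w\in W}g^i(w)\prod_{i_q\in N_c}\gamma^{i_q}(w^{i_q})=\sum_{h\in Z}u^i_z(h)\prod_{i_q\in N_c}\gamma^{i_q}(w^{i_q}_h)=u^i(\sigma)$, which is the assertion.

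I expect the main obstacle to be the bookkeeping in the second paragraph: making rigorous that the position-indexed product over a path regroups exactly into the player-indexed product of the sub-products $s^{i_q}(w^{i_q}_h)$, i.e. that $\{w^{i_q}_h\}_{i_q\in N_c}$ partitions the actions along $h$ with the correct owner for each action. This is where the structure of the extensive form enters, and implicitly perfect recall, which guarantees each $w^{i_q}_h$ is a genuine element of $W^{i_q}$; the bijection $Z\leftrightarrow$ terminal sequence profiles needed at the end is a standard property of the sequence form and would be dispatched briefly or cited.
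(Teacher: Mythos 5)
Your proposal is correct and follows essentially the same route as the paper's proof: both compare the probability of reaching each terminal history $h\in Z$, which is $\prod_{i\in N_c}\sum_{s^i\in S^i}s^i(w^i_h)\sigma^i(s^i)$ under $\sigma$ and $\prod_{i\in N_c}\gamma^i(w^i_h)$ under $\gamma$, and use the defining relation of $T$ to equate them. You simply spell out the regrouping of the position-indexed product into player-indexed sequence products and the factorization of the sum over $S$, steps the paper states without detail.
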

	\begin{proof}
		For a mixed strategy profile $\sigma\in \Xi$, the probability of reaching each terminal history $h\in Z$ is $\prod_{i\in N_c}\sum_{s^i\in S^i}s^i(\varpi^i_h)\sigma^i(s^i)$. For a realization plan profile $\gamma\in \Lambda$, it is $\prod_{i\in N_c}\gamma^{i}(\varpi^i_h)$. When $(\sigma,\gamma)\in T$, we have $\gamma^{i}(\varpi^i_h)=\sum_{s^i\in S^i}s^i(\varpi^i_h)\sigma^i(s^i)$ for every player $i\in N_c$. As a result, the probabilities that reaching each terminal history $h\in Z$ under the strategies $x$ and $\gamma$ coincide, which implies that $u^i(\sigma)=u^i(\gamma)$. This finalizes the proof.
	\end{proof}
	After establishing the payoff equivalence between the two type of strategies, we proceed to analyze the relationship between their best responses, which underpin the proof presented in the subsequent subsection. Given $\gamma\in\Lambda$, we define the expected payoff, leading by the sequence $\varpi^i\in W^i$, for player $i\in N$ as 
	\[\begin{array}{l}
		g^i(\gamma;\varpi^i)=\sum\limits_{\tilde{\varpi}^i\in W^i,\varpi^i\subseteq\tilde{\varpi}^i}\gamma^i(\tilde{\varpi}^i)g^i(\tilde{\varpi}^i,\gamma^{-i}).
	\end{array}\]
	\begin{definition}{\em
			Consider a realization plan profile $\gamma\in \Lambda$. For any $i\in N,j\in M_i,a\in A(I^j_i)$, we refer to $\varpi^i_{I^j_i}a$ as an $I^j_i$-best-response sequence to $\gamma$ if the following equality holds for any $a'\in A(I^j_i)$,
			\[\begin{array}{l}
				\max\limits_{\tilde\gamma^i\in \Lambda^i} g^i(\tilde\gamma^i,\gamma^{-i};\varpi^i_{I^j_i}a)\geq\max\limits_{\tilde\gamma^i\in \Lambda^i} g^i(\tilde\gamma^i,\gamma^{-i};\varpi^i_{I^{j}_i}a').
			\end{array}\]
			We define $\varpi^i_{I^j_i}a$ as a best-response sequence to $\gamma$ if, for any $j_q\in M_i,a_q\in A(I^{j_q}_i)$ with $a_q\in \varpi^i_{I^j_i}a$, $\varpi^i_{I^{j_q}_i}a_q$ qualifies as an $I^{j_q}_i$-best-response sequence to $\gamma$.}
	\end{definition}
	Next, we demonstrate the connection between optimal pure strategies and best-response sequences, formalized in the following lemma.
	\begin{lemma}\label{nfpe-lem-sfc3}{\em
			For $(\sigma,\gamma)\in T$ and player $i$, the following statements are equivalent:
			\begin{enumerate}[label=(\arabic*)]
				\item $u^i(s^i,\sigma^{-i})\geq u^i(\tilde s^i,\sigma^{-i})$ holds for any $\tilde s^i\in S^i$.
				\item For any $j\in M_i,a\in A(I^j_i)$ with $s^i(\varpi^i_{I^j_i}a)=1$, $\varpi^i_{I^j_i}a$ is a best-response sequence to $\gamma$.
			\end{enumerate} }
	\end{lemma}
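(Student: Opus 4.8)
The plan is to translate statement~(1) into the sequence form and then characterize the maximizers of a linear function over the realization-plan polytope by a backward-induction argument on player $i$'s information-set tree. For the reduction, write $\gamma(s^i)=(\gamma^i(s^i;w^i))_{w^i\in W^i}$ for the realization plan of $s^i$, so that $\gamma^i(s^i;w^i)=s^i(w^i)$. Since $(\sigma,\gamma)\in T$, the pair $\big((s^i,\sigma^{-i}),(\gamma(s^i),\gamma^{-i})\big)$ also lies in $T$, so Lemma~\ref{nfpe-lem-sfc2} gives $u^i(s^i,\sigma^{-i})=g^i(\gamma(s^i),\gamma^{-i})$; likewise, combining Lemma~\ref{nfpe-lem-sfc1}, Lemma~\ref{nfpe-lem-sfc2}, and the linearity of $u^i(\cdot,\sigma^{-i})$ in its first argument yields $\max_{\tilde s^i\in S^i}u^i(\tilde s^i,\sigma^{-i})=\max_{\tilde\gamma^i\in\Lambda^i}g^i(\tilde\gamma^i,\gamma^{-i})$. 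Hence~(1) is equivalent to $g^i(\gamma(s^i),\gamma^{-i})=\max_{\tilde\gamma^i\in\Lambda^i}g^i(\tilde\gamma^i,\gamma^{-i})$, i.e.\ to $\gamma(s^i)$ maximizing the linear functional $g^i(\cdot,\gamma^{-i})$ over the realization-plan polytope $\Lambda^i$.

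Next I would exploit perfect recall: for player $i$ the information sets, ordered by inclusion of their leading sequences, form a forest, so the sequences in $W^i$ are the nodes of a tree rooted at $\emptyset$, the successors of a node $w^i$ being the extended sequences $w^i_{I^j_i}a$ with $j\in M_i(w^i)$ and $a\in A(I^j_i)$, and $\Lambda^i$ is exactly the set of nonnegative node-weights with $\gamma^i(\emptyset)=1$ satisfying the flow constraints $\sum_{a\in A(I^j_i)}\gamma^i(w^i_{I^j_i}a)=\gamma^i(w^i_{I^j_i})$. Introducing the value function $V^i(w^i)=\max_{\tilde\gamma^i\in\Lambda^i}g^i(\tilde\gamma^i,\gamma^{-i};w^i)$ — which is exactly the quantity entering the definition of an $I^j_i$-best-response sequence, so that $w^i_{I^j_i}a$ is such a sequence iff $V^i(w^i_{I^j_i}a)=\max_{a'\in A(I^j_i)}V^i(w^i_{I^j_i}a')$ — I would establish a Bellman recursion for $V^i$ by induction from the leaves of the tree upward, relating $V^i(w^i)$ to the immediate contribution $g^i(w^i,\gamma^{-i})$ and to the values $V^i(w^i_{I^j_i}a)$ at the successors. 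The recursion then yields the key structural fact: $\tilde\gamma^i\in\Lambda^i$ maximizes $g^i(\cdot,\gamma^{-i})$ if and only if, at every information set $I^j_i$ with $\tilde\gamma^i(w^i_{I^j_i})>0$, all of the realization mass there lies on actions $a$ for which $w^i_{I^j_i}a$ is an $I^j_i$-best-response sequence.

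Applying this to $\tilde\gamma^i=\gamma(s^i)$ completes the proof. An information set $I^j_i$ carries positive $\gamma(s^i)$-weight exactly when $s^i$ reaches it, in which case all of that weight sits on the unique action $a$ with $s^i(w^i_{I^j_i}a)=1$; hence $\gamma(s^i)$ maximizes $g^i(\cdot,\gamma^{-i})$ iff $w^i_{I^j_i}a$ is an $I^j_i$-best-response sequence for every $(j,a)$ with $s^i(w^i_{I^j_i}a)=1$. Finally, $s^i(w^i_{I^j_i}a)=1$ forces $s^i$ to select every action on the path to $w^i_{I^j_i}a$ at the corresponding (reached) information sets, i.e.\ $s^i(w^i_{I^{j_q}_i}a_q)=1$ for every prefix pair $(j_q,a_q)$; consequently ``$w^i_{I^j_i}a$ is an $I^j_i$-best-response sequence for all such $(j,a)$'' coincides with ``$w^i_{I^j_i}a$ is a best-response sequence for all such $(j,a)$'', which is statement~(2).

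The main obstacle is the backward-induction step: proving the Bellman recursion for $V^i$ and extracting from it the description of the argmax set over $\Lambda^i$. This is where perfect recall is genuinely needed — both to obtain the tree structure and to ensure the polytope constraints decouple across the subtrees hanging below each information set — and where one must accommodate several information sets sharing a single leading sequence. Some care is also required with the sign of the continuation values: when $\max_{a'}V^i(w^i_{I^j_i}a')=0$ because no successor has positive continuation value, assigning zero weight to $w^i_{I^j_i}$ competes with committing to a branch, and one must check this does not spoil the equivalence at a reached information set; arguing the optimality of $\gamma(s^i)$ through a one-shot deviation at a single information set, rather than purely through the global value function, is a clean way around this. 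Everything else is bookkeeping with Lemmas~\ref{nfpe-lem-sfc1} and~\ref{nfpe-lem-sfc2}.
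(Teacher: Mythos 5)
Your proposal is correct and follows essentially the same route as the paper: both reduce statement (1), via Lemmas~\ref{nfpe-lem-sfc1} and~\ref{nfpe-lem-sfc2}, to optimality of $\gamma(s^i)$ for the linear best-response problem over $\Lambda^i$, and then exploit the recursive decomposition of $\max_{\tilde\gamma^i\in\Lambda^i} g^i(\tilde\gamma^i,\gamma^{-i};\cdot)$ across player $i$'s information-set tree, which is exactly the content of the paper's deviation argument for $(1)\Rightarrow(2)$ and its forward induction for $(2)\Rightarrow(1)$. Packaging the two directions as a single Bellman-type characterization of the argmax set, and your explicit handling of the zero-truncation issue in the definition of $\max_{\tilde\gamma^i}g^i(\tilde\gamma^i,\gamma^{-i};w^i_{I^j_i}a)$ (which the paper treats implicitly as a conditional continuation value), is only a minor repackaging, not a different method.
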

	\begin{proof}$\bm{(1)\Rightarrow(2)}$: Assume (1) holds and (2) does not hold, there exists $j\in M_i,a\in A(I^j_i)$ with $s^i(\varpi^i_{I^j_i}a)=1$ satisfying that $\varpi^i_{I^j_i}a$ is not a best-response sequence to $\gamma$. This means that, for some $j_q\in M_i,a_q\in A(I^{j_q}_i)$  with $a_q\in \varpi^i_{I^j_i}a$, $\varpi^i_{I^{j_q}_i}a_q$ is not a $I^{j_q}_i$-best-response sequence to $\gamma$. There exists $a'_q\in A(I^{j_q}_i)$ such that
		\[\max\limits_{\tilde\gamma^i\in \Lambda^i} g^i(\tilde\gamma^i,\gamma^{-i};\varpi^i_{I^{j_q}_i}a_q)<\max\limits_{\tilde\gamma^i\in \Lambda^i} g^i(\tilde\gamma^i,\gamma^{-i};\varpi^i_{I^{j_q}_i}a'_q),\]
	which brings a pure strategy $\tilde s^i$ such that $\gamma^i(\tilde s^i)\in\arg\max_{\tilde\gamma^i\in \Lambda^i} g^i(\tilde\gamma^i,\gamma^{-i};\varpi^i_{I^{j_q}_i}a'_q)$ and $\gamma^i(\tilde s^i;\varpi^i)=\gamma^i(s^i;\varpi^i)$ for any $\varpi^i\in W^i$ with $a_q,a'_q\notin \varpi^i$. As a result, $g^i(\gamma^i(s^i),\gamma^{-i})<g^i(\gamma^i(\tilde s^i),\gamma^{-i})$, which, according to Lemma~\ref{nfpe-lem-sfc2}, implies that $u^i(s^i,\sigma^{-i})< u^i(\tilde s^i,\sigma^{-i})$, thereby resulting in a contradiction.
	
	$\bm{(2)\Rightarrow(1)}$:
		Assume $(2)$ holds, for any $j\in M_i,a\in A(I^j_i)$ with $s^i(\varpi^i_{I^{j}_i}a)=1$, we have
		\[\begin{array}{l}
			\max\limits_{\tilde\gamma^i\in \Lambda^i} g^i(\tilde\gamma^i,\gamma^{-i};\varpi^i_{I^{j}_i}a)\\
			\hspace{1.5cm}=\sum\limits_{j_q\in M_i(\varpi^i_{I^{j}_i}a)}\max\limits_{\tilde\gamma^i\in \Lambda^i}\sum\limits_{a_q\in A(I^{j_q}_i)} \tilde\gamma^i(\varpi^i_{I^{j_q}_i}a_q)g^i(\tilde\gamma^i,\gamma^{-i};\varpi^i_{I^{j_q}_i}a_q)+g^i(\varpi^i_{I^{j}_i}a,\gamma^{-i})\\
			\hspace{1.5cm}=\sum\limits_{j_q\in M_i(\varpi^i_{I^{j}_i}a)}\sum\limits_{a_q\in A(I^{j_q}_i)}s^i(\varpi^i_{I^{j_q}_i}a_q)\max\limits_{\tilde\gamma^i\in \Lambda^i} g^i(\tilde\gamma^i,\gamma^{-i};\varpi^i_{I^{j_q}_i}a_q)+g^i(\varpi^i_{I^{j}_i}a,\gamma^{-i})\\
		\end{array}
		\]
		The second equation follows directly from the condition (2). As a result of the forward induction, we can derive that
		\[\begin{array}{ll}
			\max\limits_{\tilde\gamma^i\in \Lambda^i} g^i(\tilde\gamma^i,\gamma^{-i})&=\sum\limits_{j\in M_i(\emptyset)}\max\limits_{\tilde\gamma^i\in \Lambda^i}\sum\limits_{a\in A(I^j_i)} \tilde\gamma^i(\varpi^i_{I^j_i}a)g^i(\tilde\gamma^i,\gamma^{-i};\varpi^i_{I^j_i}a)+g^i(\emptyset,\gamma^{-i})\\
			&=\sum\limits_{j\in M_i(\emptyset)}\sum\limits_{a\in A(I^j_i)}s^i(\varpi^i_{I^j_i}a)\max\limits_{\tilde\gamma^i\in \Lambda^i} g^i(\tilde\gamma^i,\gamma^{-i};\varpi^i_{I^j_i}a)+g^i(\emptyset,\gamma^{-i})\\
			&=\sum\limits_{\varpi^i\in W^i}s^i(\varpi^i) g^i(\varpi^i,\gamma^{-i})\\
			&=g^i(\gamma^i(s^i),\gamma^{-i}).
		\end{array}
		\]
		It follows that $u^i(s^i,\sigma^{-i})=\max_{\tilde \sigma^i\in \Xi^i} u^i(\tilde \sigma^i,\sigma^{-i})$. This completes the proof.
	\end{proof}
	\subsection{A Sequence-Form Characterization of Normal-Form Perfect Equilibria}
	This subsection provides a sequence-form characterization of normal-form perfect equilibria through the introduction of perturbed games in sequence form.
	
	We begin by formulating a perturbation vector. Let $\varepsilon > 0$ be a sufficiently small parameter and define a vector $\eta(\varepsilon) = (\eta^i(\varepsilon; \varpi^i) : i \in N, \varpi^i \in W^i)$, subject to the following constraints,
	\begin{equation}\label{nfpe-equ-cha1}\begin{array}{l}
			\sum\limits_{a\in A(I^j_i)}\eta^i(\varepsilon;\varpi^i_{I^j_i}a)-\eta^i(\varepsilon;\varpi^i_{I^j_i})=0,\;i\in N,j\in M_i,\\
			0< \eta^i(\varepsilon;\varpi^i_{I^j_i}a) \leq\varepsilon,\;i\in N,j\in M_i,a\in A(I^j_i).
		\end{array}
	\end{equation}
	The existence of such an $\eta(\varepsilon)$ is guaranteed by the recursiveness. Specifically, the assignment $\eta^i(\varepsilon;\varpi^i_{I^j_i}a)=\varepsilon^{|\varpi^i_{I^j_i}a|}$ for $i\in N,(j,a)\in D_i$ provides a viable solution that adheres to the conditions. Given a perturbation vector $\eta(\varepsilon)$ satisfying the system~(\ref{nfpe-equ-cha1}), let  $\Lambda(\varepsilon)=\times_{i\in N}{ \Lambda^i(\varepsilon)}$ represent the set of perturbed realization plan profiles defined by $\Lambda^i(\varepsilon)=\{\gamma^i(\varepsilon)|\gamma^i(\varepsilon)\in \Lambda^i,\gamma^i(\varepsilon;\varpi^i)\geq\eta^i(\varepsilon;\varpi^i),\varpi^i\in W^i\}$. We then construct a perturbed game in sequence form, denoted by $\Gamma_s(\varepsilon)$, where the optimal strategy for player $i$ with the strategies of other players fixed at $\hat\gamma^{-i}(\varepsilon)\in \Lambda^{-i}(\varepsilon)$ is determined by solving the linear optimization problem,
	\begin{equation}
		\label{nfpe-opt-sfc1}
		\begin{array}{rl}
			\max\limits_{\gamma^i(\varepsilon)} &\sum\limits_{j\in M_i}\sum\limits_{a\in A(I^j_i)}\gamma^i(\varepsilon;\varpi^i_{I^j_i}a)g^i(\varpi^i_{I^j_i}a,\hat\gamma^{-i}(\varepsilon))\\
			\text{s.t.}&\sum\limits_{a\in A(I^j_i)}\gamma^i(\varepsilon;\varpi^i_{I^j_i}a)-\gamma^i(\varepsilon;\varpi^i_{I^j_i})=0,\;j\in M_i,\\
			&\eta^i(\varepsilon;\varpi^i_{I^j_i}a)\le \gamma^i(\varepsilon;\varpi^i_{I^j_i}a),\;(j,a)\in D_i.
		\end{array}
	\end{equation}
	Two clarifications regarding the optimization problem~(\ref{nfpe-opt-sfc1}) are necessary. Firstly, we omit the payoff associated with the empty sequence in the objective function, as it does not depend on $\gamma^i(\varepsilon)$. Secondly, we exclude redundant inequalities in the constraints that arise from the same recursive equation for $\eta^i(\varepsilon)$ and $\gamma^i(\varepsilon)$, but a smaller value of $\eta^i(\varepsilon)$ is used at the empty sequence.
	
	In accordance with the Nash equilibrium principle, we define $\gamma^*(\varepsilon)$ as a Nash equilibrium of $\Gamma_s(\varepsilon)$ precisely when $\gamma^{*i}(\varepsilon)$ individually solves the optimization problem~(\ref{nfpe-opt-sfc1}) against $\gamma^{*-i}(\varepsilon)$ for every player $i\in N$. By applying the optimality conditions to the problem~(\ref{nfpe-opt-sfc1}) for all players and setting $\hat\gamma(\varepsilon)=\gamma(\varepsilon)$, we derive the polynomial equilibrium system,
	\begin{equation}\label{nfpe-eqt-sfc1}\begin{array}{l}
			g^i(\varpi^i_{I^j_i}a,\gamma^{-i}(\varepsilon))+\lambda^i(\varpi^i_{I^j_i}a)-\nu^i_{I^j_i}=0,\;i\in N,(j,a)\in D_i,\\
			
			g^i(\varpi^i_{I^j_i}a,\gamma^{-i}(\varepsilon))-\nu^i_{I^j_i}+\zeta^i_{I^j_i}(a)=0,\;i\in N,(j,a)\notin D_i,\\
			
			\sum\limits_{a\in A(I^j_i)}\gamma^i(\varepsilon;\varpi^i_{I^j_i}a)-\gamma^i(\varepsilon;\varpi^i_{I^j_i})=0,\;i\in N,j\in M_i,\\
			
			(\gamma^i(\varepsilon;\varpi^i_{I^j_i}a)-\eta^i(\varepsilon;\varpi^i_{I^j_i}a))\lambda^i(\varpi^i_{I^j_i}a)=0,\\
			
			\eta^i(\varepsilon;\varpi^i_{I^j_i}a)\le\gamma^i(\varepsilon;\varpi^i_{I^j_i}a),\;0\le\lambda^i(\varpi^i_{I^j_i}a),\;i\in N,(j,a)\in D_i,
		\end{array}
	\end{equation}
	where $\zeta^i_{I^j_i}(a)=\sum_{j_q\in M_i(\varpi^i_{I^j_i}a)}\nu^i_{I^{j_q}_i}$. Hence, $\gamma^*(\varepsilon)$ is a Nash equilibrium of $\Gamma_s(\varepsilon)$ if and only if there exists a corresponding pair $(\lambda^*,\nu^*)$ such that together with $\gamma^*(\varepsilon)$ they fulfill the system~(\ref{nfpe-eqt-sfc1}). Following Lemma~\ref{nfpe-lem-sfc3}, we derive a specific condition that $\gamma^*(\varepsilon)$ must fulfill, as stated in Lemma~\ref{nfpe-lem-sfc4}.
	\begin{lemma}\label{nfpe-lem-sfc4}{\em
			The profile $\gamma^*(\varepsilon)\in\Lambda(\varepsilon)$ is a Nash equilibrium of $\Gamma_s(\varepsilon)$ if and only if, for each player $i\in N$ and $j\in M_i,a\in A(I^j_i)$, it holds that $\gamma^{*i}(\varepsilon;\varpi^i_{I^j_i}a)=\eta^i(\varepsilon;\varpi^i_{I^j_i}a)$ whenever $\varpi^i_{I^j_i}a$ fails to be a best-response sequence to $\gamma^*(\varepsilon)$.}
	\end{lemma}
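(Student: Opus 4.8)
The plan is to translate the characterization from the KKT-type system~(\ref{nfpe-eqt-sfc1}) into the language of best-response sequences, using Lemma~\ref{nfpe-lem-sfc3} as the bridge between pure-strategy optimality and best-response sequences. First I would fix a Nash equilibrium $\gamma^*(\varepsilon)\in\Lambda(\varepsilon)$ of $\Gamma_s(\varepsilon)$ together with multipliers $(\lambda^*,\nu^*)$ satisfying~(\ref{nfpe-eqt-sfc1}). The substance of the first equation of~(\ref{nfpe-eqt-sfc1}), combined with $\lambda^i(w^i_{I^j_i}a)\ge 0$ and the complementary-slackness condition $(\gamma^{*i}(\varepsilon;w^i_{I^j_i}a)-\eta^i(\varepsilon;w^i_{I^j_i}a))\lambda^i(w^i_{I^j_i}a)=0$, is that $\nu^i_{I^j_i}$ equals the maximum over $a\in A(I^j_i)$ of the reduced payoff leading through $w^i_{I^j_i}a$, and that any action whose reduced payoff is strictly below this maximum is played at exactly its lower bound $\eta^i(\varepsilon;w^i_{I^j_i}a)$. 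I would make precise the sense in which $\nu^i_{I^j_i}$ is that maximum by reading the second group of equations as a recursion: $\zeta^i_{I^j_i}(a)=\sum_{j_q\in M_i(w^i_{I^j_i}a)}\nu^i_{I^{j_q}_i}$ aggregates the continuation values of the information sets immediately following the sequence $w^i_{I^j_i}a$, so that $\nu^i_{I^j_i}$ is precisely $\max_{\tilde\gamma^i\in\Lambda^i}g^i(\tilde\gamma^i,\gamma^{*-i}(\varepsilon);w^i_{I^j_i}a)$ for any optimal $a$, and the "$I^j_i$-best-response sequence'' condition in the definition is exactly the statement that $a$ attains this maximum.

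Having identified the multipliers, I would argue the two directions. For necessity, suppose $w^i_{I^j_i}a$ is not a best-response sequence to $\gamma^*(\varepsilon)$; then by definition there is some $j_q\in M_i$ with $a_q\in w^i_{I^j_i}a$ such that $w^i_{I^{j_q}_i}a_q$ is not an $I^{j_q}_i$-best-response sequence, i.e.\ its reduced payoff is strictly below $\nu^i_{I^{j_q}_i}$. The corresponding multiplier $\lambda^i(w^i_{I^{j_q}_i}a_q)$ is then strictly positive by the first equation of~(\ref{nfpe-eqt-sfc1}), so complementary slackness forces $\gamma^{*i}(\varepsilon;w^i_{I^{j_q}_i}a_q)=\eta^i(\varepsilon;w^i_{I^{j_q}_i}a_q)$. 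Since $\gamma^{*i}$ is a realization plan and $a_q$ lies on the path of $w^i_{I^j_i}a$, the recursive system~(\ref{nfpe-equ-pre3}) together with the upper bound $\eta^i(\varepsilon;w^i_{I^{j_q}_i}a_q)\le\eta^i(\varepsilon;w^i_{I^{j_q}_i})$ in~(\ref{nfpe-equ-cha1}) propagates this equality down to $\gamma^{*i}(\varepsilon;w^i_{I^j_i}a)=\eta^i(\varepsilon;w^i_{I^j_i}a)$. (For the non-terminal case $(j,a)\notin D_i$ I would note that $w^i_{I^j_i}a$ is never itself at a binding lower bound, so the constraint is inherited purely from a strictly-positive multiplier at some successor.) For sufficiency, suppose $\gamma^{*i}$ satisfies the stated equalities; I would exhibit a pure strategy $s^i$ with $s^i(w^i_{I^j_i}a)=1$ only when $w^i_{I^j_i}a$ is a best-response sequence — concretely, pick at each information set an action attaining the maximum in its continuation value — and verify directly from the recursion in the proof of Lemma~\ref{nfpe-lem-sfc3} (the forward-induction identity culminating in $g^i(\gamma^i(s^i),\gamma^{*-i})=\max_{\tilde\gamma^i\in\Lambda^i}g^i(\tilde\gamma^i,\gamma^{*-i})$) that such an $s^i$ is a best response, hence reconstruct multipliers $(\lambda^*,\nu^*)$ solving~(\ref{nfpe-eqt-sfc1}); invoking Lemma~\ref{nfpe-lem-sfc3} and Lemma~\ref{nfpe-lem-sfc2} then yields that $\gamma^*(\varepsilon)$ is a Nash equilibrium.

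The main obstacle I anticipate is the bookkeeping around the distinction between $D_i$ and its complement and the propagation of the lower-bound equalities through the perfect-recall tree: an action $w^i_{I^j_i}a$ with $(j,a)\notin D_i$ has further information sets beyond it, so its realization weight is constrained only indirectly, through the equality constraints of~(\ref{nfpe-equ-pre3}) and through a binding lower bound at some \emph{descendant} sequence in $D_i$. Making the "propagates down'' step airtight — i.e.\ showing that a binding lower bound anywhere along the continuation of $w^i_{I^j_i}a$ forces $\gamma^{*i}(\varepsilon;w^i_{I^j_i}a)$ itself to its lower bound, and conversely — requires a careful induction on the depth of the sequence and a clean statement of how $\eta^i(\varepsilon;\cdot)$ respects the same recursion as $\gamma^i(\varepsilon;\cdot)$. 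Everything else is a matter of unwinding definitions and reusing Lemma~\ref{nfpe-lem-sfc3}.
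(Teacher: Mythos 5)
Your overall route---arguing directly from the multiplier system~(\ref{nfpe-eqt-sfc1})---is genuinely different from the paper's, which never analyzes the multipliers at all: the paper uses the affine bijection $y(\gamma^i,\varepsilon;w^i)=\eta^i(\varepsilon;w^i)+(1-\eta^i(\varepsilon;\emptyset))\gamma^i(w^i)$ from $\Lambda^i$ onto $\Lambda^i(\varepsilon)$ to rewrite the perturbed problem~(\ref{nfpe-opt-sfc1}) as the unperturbed problem~(\ref{nfpe-opt-sfc2}), then invokes Lemma~\ref{nfpe-lem-sfc3} to conclude that an optimal unperturbed plan assigns weight $0$ to every non-best-response sequence, so that $\gamma^{*i}(\varepsilon;w^i_{I^j_i}a)=y(\gamma^{*i},\varepsilon;w^i_{I^j_i}a)=\eta^i(\varepsilon;w^i_{I^j_i}a)$. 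A KKT argument can be completed, but your sketch has a genuine gap exactly where you anticipate trouble: the case of a suboptimal action $a_q$ at $I^{j_q}_i$ with $(j_q,a_q)\notin D_i$.

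Concretely, the system~(\ref{nfpe-eqt-sfc1}) carries no multiplier $\lambda^i(w^i_{I^{j_q}_i}a_q)$ for such a sequence, and your parenthetical fix does not repair this: first, the claim that a non-terminal sequence ``is never itself at a binding lower bound'' contradicts the very conclusion you are proving (what is true is only that it has no explicit constraint, hence no multiplier, in~(\ref{nfpe-opt-sfc1})); second, a strictly positive multiplier at \emph{some} terminal successor only pins that one descendant to its bound, while other terminal descendants of $a_q$ could in principle carry slack and keep $\gamma^{*i}(\varepsilon;w^i_{I^{j_q}_i}a_q)>\eta^i(\varepsilon;w^i_{I^{j_q}_i}a_q)$, so the conclusion does not follow from the mechanism you describe. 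What is actually needed is the backward-induction claim: if $\nu^i_{I^j_i}$ strictly exceeds the optimal continuation value of $w^i_{I^j_i}a$, then $\gamma^{*i}(\varepsilon;w^i_{I^j_i}a)=\eta^i(\varepsilon;w^i_{I^j_i}a)$. Its inductive step uses the exact equations for $(j,a)\notin D_i$ together with $\nu^i_{I^{j'}_i}\ge\max_{a'}(\cdot)$ to locate at least one successor information set $I^{j'}_i$ whose dual value is strictly elevated, whence \emph{all} actions at $I^{j'}_i$ are at their bounds and the flow identities give $\gamma^{*i}(\varepsilon;w^i_{I^j_i}a)=\sum_{a'}\gamma^{*i}(\varepsilon;w^i_{I^{j'}_i}a')=\sum_{a'}\eta^i(\varepsilon;w^i_{I^{j'}_i}a')=\eta^i(\varepsilon;w^i_{I^j_i}a)$; only then does your ``propagate down'' step take over. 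Relatedly, your identification of $\nu^i_{I^j_i}$ with the maximal continuation value holds only as an inequality (it is strict, e.g., when the whole information set is pinned at its $\eta$-level), which is all the argument needs but should be stated as such. Finally, the sufficiency direction is more than ``exhibit a pure best response and reconstruct multipliers'': since the second group of~(\ref{nfpe-eqt-sfc1}) imposes exact equalities at every non-terminal action, including suboptimal ones, the $\nu$'s cannot simply be the optimal continuation values; one must push each deficit down onto successor information sets (which, under the hypothesis, carry only $\eta$-flow, so complementary slackness survives), or avoid multipliers entirely and verify optimality of $\gamma^{*i}(\varepsilon)$ directly---which is precisely what the paper's change of variables combined with Lemma~\ref{nfpe-lem-sfc3} accomplishes with far less bookkeeping.
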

	\begin{proof}
		For a given $\gamma^i\in \Lambda^i$ of player $i\in N$, let $y(\gamma^i,\varepsilon)=(y(\gamma^i,\varepsilon;\varpi^i):\varpi^i\in W^i)$, where $y(\gamma^i,\varepsilon;\varpi^i) = \eta^i(\varepsilon;\varpi^i)+(1-\eta^i(\emptyset))\gamma^i(\varpi^i)$. It follows that $y(\cdot,\varepsilon)$ is a bijection from $\Lambda^i$ to $\Lambda^i(\varepsilon)$. We establish in the following an optimization problem~(\ref{nfpe-opt-sfc2}),
		\begin{equation}
			\label{nfpe-opt-sfc2}
			\begin{array}{rl}
				\max\limits_{\gamma^i} &\sum\limits_{j\in M_i}\sum\limits_{a\in A(I^j_i)}\gamma^i(\varpi^i_{I^j_i}a)g^i(\varpi^i_{I^j_i}a,\hat\gamma^{-i}(\varepsilon))\\
				\text{s.t.}&\sum\limits_{a\in A(I^j_i)}\gamma^i(\varpi^i_{I^j_i}a)-\gamma^i(\varpi^i_{I^j_i})=0,\;j\in M_i,\\
				&0\le \gamma^i(\varpi^i_{I^j_i}a),\;(j,a)\in D_i.
			\end{array}
		\end{equation}
		A perturbed realization plan $\gamma^{i}(\varepsilon)$ is a optimal solution to the problem~(\ref{nfpe-opt-sfc1}) if and only if there exists a realization plan $\gamma^{i}$ that optimally solves the problem~(\ref{nfpe-opt-sfc2}) and satisfies $y(\gamma^{i},\varepsilon)=\gamma^{i}(\varepsilon)$. 
		
		For player $i\in N$, there exists some $\gamma^{*i}\in\Lambda^i$ such that $y(\gamma^{*i},\varepsilon)=\gamma^{*i}(\varepsilon)$ and solves against $\gamma^{*-i}(\varepsilon)$ the optimization problem~(\ref{nfpe-opt-sfc2}).  By combining the above discussion with Lemma~\ref{nfpe-lem-sfc3}, we conclude that for any $j\in M_i,a\in A(I^j_i)$ where $\varpi^i_{I^j_i}a$ is not a best-response sequence to $\gamma^*(\varepsilon)$, it holds that $\gamma^{*i}(\varpi^i_{I^j_i}a)=0$. Therefore, $\gamma^{*i}(\varepsilon;\varpi^i_{I^j_i}a)=y(\gamma^{*i},\varepsilon;\varpi^i_{I^j_i}a)=\eta^i(\varepsilon;\varpi^i_{I^j_i}a)$. This completes the proof.
	\end{proof}
	\begin{theorem}\label{nfpe-the-sfc1} {\em A mixed strategy $\sigma^*$ in the pair $(\sigma^*,\gamma^*)\in T$ is a normal-form perfect equilibrium of $\Gamma$ if and only if there exists a sequence of perturbed games in sequence form, $\{\Gamma_s(\varepsilon^k)\}_{k=1}^{\infty}$, with $\lim_{k\to\infty}\varepsilon^k=0$, and a sequence of realization plans $\{\gamma^*(\varepsilon^k)\}_{k=1}^{\infty}$ with each $\gamma^*(\varepsilon^k)$ representing a Nash equilibrium of $\Gamma_s(\varepsilon^k)$, such that $\lim_{k\to\infty}\gamma^*(\varepsilon^k)=\gamma^*$.}
	\end{theorem}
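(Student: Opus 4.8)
The plan is to prove both directions by connecting the $\varepsilon$-normal-form perfect equilibria of $\Gamma$ (Definition~\ref{nfpe-def-pre1}) with the Nash equilibria of the perturbed games $\Gamma_s(\varepsilon)$ via the correspondence $T$ and the characterization of best-response sequences in Lemma~\ref{nfpe-lem-sfc4}. The key technical device is to exhibit, for each small $\varepsilon>0$, a perturbation $\eta(\varepsilon)$ satisfying~(\ref{nfpe-equ-cha1}) whose induced bounds on realization-plan entries translate, under the map $\sigma\mapsto\gamma(\sigma)$, into the bounds ``$\sigma^i(\varepsilon;s^i)\le\varepsilon$'' appearing in the definition of an $\varepsilon$-normal-form perfect equilibrium — and conversely. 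Lemmas~\ref{nfpe-lem-sfc2} and~\ref{nfpe-lem-sfc3} guarantee that payoffs and best responses are preserved under this map, so the combinatorial content of ``which pure strategies are suboptimal'' is the same on both sides.

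For the forward direction, suppose $\sigma^*$ is a normal-form perfect equilibrium, so $\sigma^*=\lim_k\sigma(\varepsilon^k)$ for $\varepsilon^k$-normal-form perfect equilibria $\sigma(\varepsilon^k)$ with $\varepsilon^k\to0$. First I would set $\gamma^k:=\gamma(\sigma(\varepsilon^k))\in\mathrm{int}(\Lambda)$ and, using that $\sigma(\varepsilon^k)$ is totally mixed, show that for a suitable scalar $\delta^k\to0$ the vector $\eta(\varepsilon^k)$ defined by $\eta^i(\varepsilon^k;w^i_{I^j_i}a)=(\delta^k)^{|w^i_{I^j_i}a|}$ (the explicit recursive solution noted after~(\ref{nfpe-equ-cha1})) satisfies $\gamma^k\in\Lambda(\varepsilon^k)$, i.e. $\gamma^k(\sigma(\varepsilon^k);w^i)\ge\eta^i(\varepsilon^k;w^i)$ for all $w^i$ — this uses that each $\sigma(\varepsilon^k)$ assigns strictly positive probability to every pure strategy, so every realization-plan entry is bounded below by a positive quantity depending only on the game and on $\min_{i,s^i}\sigma^i(\varepsilon^k;s^i)$. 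Then I would verify that $\gamma^k$ is a Nash equilibrium of $\Gamma_s(\varepsilon^k)$ by checking the condition in Lemma~\ref{nfpe-lem-sfc4}: if $w^i_{I^j_i}a$ is not a best-response sequence to $\gamma^k$, then by Lemma~\ref{nfpe-lem-sfc3} any pure $s^i$ with $s^i(w^i_{I^j_i}a)=1$ fails to be optimal against $\sigma^{-i}(\varepsilon^k)$, hence $\sigma^i(\varepsilon^k;s^i)\le\varepsilon^k$; summing $s^i(w^i_{I^j_i}a)\sigma^i(\varepsilon^k;s^i)$ over such $s^i$ and comparing with $\eta^i(\varepsilon^k;w^i_{I^j_i}a)$ (chosen small enough relative to $\varepsilon^k$) forces the realization-plan entry down to exactly its lower bound $\eta^i(\varepsilon^k;w^i_{I^j_i}a)$. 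Passing to a subsequence, $\gamma^k\to\gamma(\sigma^*)=\gamma^*$, as desired.

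For the converse, given $\Gamma_s(\varepsilon^k)$, its perturbation $\eta(\varepsilon^k)$, and a Nash equilibrium $\gamma^*(\varepsilon^k)$ with $\gamma^*(\varepsilon^k)\to\gamma^*$, I would use Lemma~\ref{nfpe-lem-sfc1} to pick $\sigma^k$ with $(\sigma^k,\gamma^*(\varepsilon^k))\in T$; one can take $\sigma^k=\sigma(\gamma^*(\varepsilon^k))$ via the explicit product formula from the proof of Lemma~\ref{nfpe-lem-sfc1} when $\gamma^*(\varepsilon^k)\in\mathrm{int}(\Lambda)$, which it is since all entries are at least the positive $\eta^i(\varepsilon^k;\cdot)$. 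Then $\sigma^k$ is totally mixed, and I must produce a tolerance $\varepsilon'_k\to0$ witnessing that $\sigma^k$ is an $\varepsilon'_k$-normal-form perfect equilibrium: by Lemma~\ref{nfpe-lem-sfc4}, whenever a pure $s^i$ is suboptimal against $\sigma^{-i,k}$ (equivalently, some sequence along $s^i$ fails to be a best-response sequence, by Lemma~\ref{nfpe-lem-sfc3}), the product formula for $\sigma^i(\gamma^*(\varepsilon^k);s^i)$ contains a factor $\eta^i(\varepsilon^k;w^i_{I^j_i}a)/\gamma^{*i}(\varepsilon^k;w^i_{I^j_i})\le\eta^i(\varepsilon^k;w^i_{I^j_i}a)$, which is at most $\varepsilon^k$ and tends to zero; setting $\varepsilon'_k:=\max\varepsilon^k$ over the relevant entries (or simply $\varepsilon^k$) gives the bound $\sigma^i(\gamma^*(\varepsilon^k);s^i)\le\varepsilon'_k$. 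Hence $\sigma^k$ is an $\varepsilon'_k$-normal-form perfect equilibrium, and since $\sigma^k\to\sigma^*$ (again along a convergent subsequence, by compactness of $T$ and continuity), $\sigma^*$ is a normal-form perfect equilibrium of $\Gamma$.

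The main obstacle I anticipate is the bookkeeping that ties the two ``smallness'' scales together: the definition of $\varepsilon$-normal-form perfect equilibrium bounds \emph{mixed-strategy} probabilities of suboptimal pure strategies, whereas $\Gamma_s(\varepsilon)$ bounds \emph{realization-plan} entries of sequences. Because a realization-plan entry is a sum of mixed-strategy probabilities (forward direction) and a mixed-strategy probability is a product of ratios of realization-plan entries (backward direction), one must choose the perturbation $\eta(\varepsilon)$ with entries small enough — e.g. the geometric choice $\varepsilon^{|w|}$ — that these sums and products stay controlled by a single vanishing parameter, and one must check that the best-response/suboptimality dictionary of Lemma~\ref{nfpe-lem-sfc3} lines up sequence-by-sequence with pure-strategy-by-pure-strategy. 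Handling a pure strategy that is suboptimal only because of an ``off-path'' information set (so that the offending factor sits deep in the product) is the delicate case, but the best-response-sequence definition is precisely engineered to cover it, so this should go through cleanly once the scales are fixed.
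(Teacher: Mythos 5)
Your plan follows the paper's general strategy (translate via $T$, use Lemmas~\ref{nfpe-lem-sfc3} and~\ref{nfpe-lem-sfc4}), but the direction ``$\sigma^*$ is a normal-form perfect equilibrium $\Rightarrow$ the perturbed-game sequence exists'' has a genuine gap in the choice of perturbation. Lemma~\ref{nfpe-lem-sfc4} is an \emph{exact-equality} criterion: $\gamma^k$ is a Nash equilibrium of $\Gamma_s(\varepsilon^k)$ only if $\gamma^{ki}(w^i_{I^j_i}a)=\eta^i(\varepsilon^k;w^i_{I^j_i}a)$ at every non-best-response sequence. With your geometric choice $\eta^i(\varepsilon^k;w)=(\delta^k)^{|w|}$, taken ``small enough'' that $\gamma^k\in\Lambda(\varepsilon^k)$, the entries of $\gamma^k$ at non-best-response sequences will in general sit \emph{strictly above} the lower bound (they are fixed numbers determined by $\sigma(\varepsilon^k)$, and nothing ``forces them down'' to a bound you chose afterwards), so $\gamma^k$ is simply not an equilibrium of that perturbed game. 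The paper resolves this by reversing the logic: it \emph{defines} the perturbation at $(j,a)\in D_i$ to equal $\gamma^{ki}(w^i_{I^j_i}a)$ whenever $w^i_{I^j_i}a$ is not a best-response sequence to $\gamma^k$ (and $\varepsilon^k$ otherwise), enlarges the tolerance to $\tilde\varepsilon^k=\max_{i\in N}|S^i|\varepsilon^k$ so that the bound $\eta\le\tilde\varepsilon^k$ in~(\ref{nfpe-equ-cha1}) holds (a non-best-response entry is a sum of at most $|S^i|$ probabilities each $\le\varepsilon^k$), and uses the recursive structure of~(\ref{nfpe-equ-cha1}) to propagate the equality from $D_i$ to all non-best-response sequences. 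Without some such tailored construction your forward direction does not go through.

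Two smaller problems appear in your converse direction. First, the inequality $\eta^i(\varepsilon^k;w^i_{I^j_i}a)/\gamma^{*i}(\varepsilon^k;w^i_{I^j_i})\le\eta^i(\varepsilon^k;w^i_{I^j_i}a)$ is backwards: the denominator is at most $1$, so dividing \emph{increases} the factor, and if the prefix is itself pinned at its perturbation level the single factor need not be small at all. The bound you want follows instead either from telescoping the product along the chain up to $w^i_{I^j_i}a$, giving $\sigma^i(s^i)\le\gamma^{*i}(\varepsilon^k;w^i_{I^j_i}a)=\eta^i(\varepsilon^k;w^i_{I^j_i}a)\le\varepsilon^k$, or, as in the paper, directly from $\sigma^{ki}(s^i)\le\sum_{\tilde s^i}\tilde s^i(w^i_{I^j_i}a)\sigma^{ki}(\tilde s^i)=\gamma^{*i}(\varepsilon^k;w^i_{I^j_i}a)$. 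Second, taking $\sigma^k=\sigma(\gamma^*(\varepsilon^k))$ only shows that \emph{some} limit point in the fiber $\{\sigma:(\sigma,\gamma^*)\in T\}$ is a normal-form perfect equilibrium, whereas the theorem asserts it for the \emph{given} $\sigma^*$; since distinct mixed strategies can share the realization plan $\gamma^*$, you must, as the paper does, choose totally mixed $\sigma^k$ with $(\sigma^k,\gamma^*(\varepsilon^k))\in T$ that converge to $\sigma^*$ itself and then apply Definition~\ref{nfpe-def-pre1} to that sequence.
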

	\begin{proof}
		Assume there exists a sequence of perturbed games  $\{\Gamma_s(\varepsilon^k)\}_{k=1}^{\infty}$ with $\gamma^*(\varepsilon^k)$ being a Nash equilibrium for each $\Gamma_s(\varepsilon^k)$ and $\lim_{k\to\infty}\gamma^*(\varepsilon^k)=\gamma^*$. Let $\{\sigma^k\}_{k=1}^{\infty}$ be a sequence of totally mixed strategies with $(\sigma^k,\gamma^*(\varepsilon^k))\in T$ and $\lim_{k\to\infty}\sigma^k=\sigma^*$. Note that $\sigma^k$ does not necessarily equal $\sigma(\gamma^*(\varepsilon^k))$. For any pure strategy $s^i$ of player $i\in N$, if $u^i(s^i, \sigma^{k-i}) < u^i(\tilde s^{i},\sigma^{k-i})$ holds for some $\tilde s^i\in S^i$, then by Lemma~\ref{nfpe-lem-sfc3}, there exists a sequence $\varpi^i_{I^j_i}a$ for $j\in M_i,a\in A(I^j_i)$ such that $s^i(\varpi^i_{I^j_i}a)=1$ and $\varpi^i_{I^j_i}a$ is not a best-response sequence to $\gamma^*(\varepsilon^k)$. Accordingly, Lemma~\ref{nfpe-lem-sfc4} implies that $\gamma^{*i}(\varepsilon^k;\varpi^i_{I^j_i}a)=\sum_{s^i\in S^i}s^i(\varpi^i_{I^j_i}a)\sigma^{ki}(s^i)=\eta^i(\varepsilon^k;\varpi^i_{I^j_i}a)$, which leads to $\sigma^{ki}(s^i)\leq\eta^i(\varepsilon^k;\varpi^i_{I^j_i}a)\leq\varepsilon^k$. Therefore the sufficiency follows immediately from Definition~\ref{nfpe-def-pre1}.
		
		Conversely, assume that $\sigma^*$ is a normal-form perfect equilibrium of $\Gamma$ and $\gamma^*=\gamma(\sigma^*)$. Then, there exists a sequence of totally mixed strategies $\{\sigma(\varepsilon^k)\}_{k=1}^{\infty}$ such that $\lim_{k\to\infty}\varepsilon^k=0$ and $\lim_{k\to\infty}\sigma(\varepsilon^k)=\sigma^*$, where each $\sigma(\varepsilon^k)$ is an $\varepsilon^k$-normal form perfect equilibrium. Consider a specific $\sigma(\varepsilon^k)$ with sufficiently large $k$, if $u^i(s^i, \sigma^{-i}(\varepsilon^k)) < u^i(\tilde s^{i},\sigma^{-i}(\varepsilon^k))$ holds for some $\tilde s^i\in S^i$, then $\sigma^i(\varepsilon^k;s^i)\leq\varepsilon^k$. Let $\gamma^k=\gamma(\sigma(\varepsilon^k))$ and $\tilde\varepsilon^k=\max_{i\in N}|S^i|\varepsilon^k$, we construct a perturbed game $\Gamma_s(\tilde\varepsilon^k)$ with $\gamma^k$ being a Nash equilibrium. The perturbation $\eta(\tilde\varepsilon^k)=(\eta^i(\tilde\varepsilon^k;\varpi^i):i\in N,\varpi^i\in W^i)$ adheres to the system~(\ref{nfpe-equ-cha1}), defined for $(j,a)\in D_i$ as follows,
		\[\eta^i(\tilde\varepsilon^k;\varpi^i_{I^j_i}a)=\left\{\begin{array}{ll}
			\gamma^{ki}(\varpi^i_{I^j_i}a)& \text{if $\varpi^i_{I^j_i}a$ is not a best-response sequence to $\gamma^k$},\\
			\tilde\varepsilon^k & \text{otherwise.}
		\end{array}\right.\]
		It can be observed that $0\leq\eta^i(\tilde\varepsilon^k;\varpi^i_{I^j_i}a)\leq \tilde\varepsilon^k$. Furthermore, the recursive expressions of $\eta(\tilde\varepsilon^k)$ and $\gamma^k$ ensure that $\eta^i(\tilde\varepsilon^k;\varpi^i_{I^j_i}a)=\gamma^{ki}(\varpi^i_{I^j_i}a)$ holds for all $i\in N,j\in M_i$, and $a\in A(I^j_i)$, provided that $\varpi^i_{I^j_i}a$ is not a best-response sequence to $\gamma^k$. Applying Lemma~\ref{nfpe-lem-sfc4}, we conclude that $\gamma^k$ is a Nash equilibrium for $\Gamma_s(\tilde\varepsilon^k)$. As $\lim_{k\to\infty}\gamma^k=\gamma^*$, the proof is complete.
	\end{proof}

	\section{A logarithmic-barrier smooth path}\label{nfpe-sec-prm3}
	Drawing on the established characterization of normal-form perfect equilibrium in sequence form, this section introduces a differentiable path-following method for computing normal-form perfect equilibria, accompanied by a rigorous theoretical analysis.
	\subsection{An Artificial Game and Equilibrium Convergence Analysis}\label{nfpe-sct-log1}
	Let $\varepsilon=1/\max_{i\in N,j\in M_i}|A(I^j_i)|$ and $\eta^0=(\eta^{0i}(\varpi^i):i\in N,\varpi^i\in W^i)$ be a given vector that satisfies the system~(\ref{nfpe-equ-cha1}). Furthermore, let $\gamma^0=(\gamma^{0i}(\varpi^i):i\in N,\varpi^i\in W^i)$ denote a given realization plan profile with $\gamma^{0i}(\varpi^i)\geq\eta^{0i}(\varpi^i)$, which serves as a starting point for the smooth path discussed later. The proposed method also requires the following functions, defined over the interval $[0,2]$,	\[\rho(t)=\left\{\begin{array}{ll}
		\frac{4}{3}t& t\leq \frac{1}{2},\\
		-\frac{4}{3}(1-t)^2+1& t\leq 1,\\
		1&\text{Otherwise,}
	\end{array}\right.\]
	and 
	\[\theta(t)=\left\{\begin{array}{ll}
		0& t\leq 1,\\
		\frac{4}{3}\left(t-1\right)^2& t\leq \frac{3}{2},\\
		\frac{4}{3}t-\frac{5}{3}&\text{Otherwise,}
	\end{array}\right.\]
 along with $c(t) =\exp(1-1/\rho(t))$.
 
    For $t\in(0,2]$, we constitute a logarithmic-barrier artificial game $\Gamma_{s}^l(t)$ in sequence form where the strategy $\gamma^i(t)$ for each player $i$ is defined by
	\begin{equation}
		\label{logexsfnecst}
		\begin{array}{l}
			\sum\limits_{a\in A(I^j_i)}\gamma^i(t;\varpi^i_{I^j_i}a)-(1- \theta(t))\gamma^i(t;\varpi^i_{I^j_i})-\theta(t)\gamma^{0i}(\varpi^i_{I^j_i})=0,\;j\in M_i,\\
			0\leq\gamma^i(t;\varpi^i_{I^j_i}a),\;j\in M_i,a\in A(I^j_i),
		\end{array}
	\end{equation}
	and $\gamma^i(t;\emptyset)=1$. Let $\Omega(t)=\{(\gamma^i(t):i\in N)|\gamma^i(t)\text{ satisfies the system (\ref{logexsfnecst})}\}$ and $\Omega=\{(\gamma(t),t)|\gamma(t)\in\Omega(t),t\in(0,2]\}$. When $t=2$, $\gamma^i(t;\varpi^i_{I^j_i}a)$ no longer depends on $\gamma^i(t;\varpi^i_{I^j_i})$ for any $i\in N, j\in M_i,a\in A(I^j_i)$. The strategy space $\Omega(t)$ corresponds to the realization plan space for $t\in (0,1]$. In the artificial game $\Gamma_{s}^l(t)$, each player $i$ determines an optimal response to a prescribed strategy $\hat\gamma(t)\in\Omega(t)$ by solving the strictly convex optimization problem,
	\begin{equation}
		\label{nfpe-log-opt-1}
		\begin{array}{rl}
			\max\limits_{\gamma^i(t)} & (1-c(t))\sum\limits_{j\in M_i}\sum\limits_{a\in A(I^j_i)}\gamma^i(t;\varpi^i_{I^j_i}a)g^i(\varpi^i_{I^j_i}a,\hat\gamma^{-i}(t))\\
			&+c(t)\sum\limits_{(j,a)\in D_i}\gamma^{0i}(\varpi^i_{I^j_i}a)\ln(\gamma^i(t;\varpi^i_{I^j_i}a)-\rho(t)(1-\theta(t))\eta^{0i}(\varpi^i_{I^j_i}a))\\
			&+\theta(t)\sum\limits_{(j,a)\notin D_i}\gamma^{0i}(\varpi^i_{I^j_i}a)\ln(\gamma^i(t;\varpi^i_{I^j_i}a))\\
			\text{s.t.} & \sum\limits_{a\in A(I^j_i)}\gamma^i(t;\varpi^i_{I^j_i}a)-(1 - \theta(t))\gamma^i(t;\varpi^i_{I^j_i})-\theta(t)\gamma^{0i}(\varpi^i_{I^j_i})=0,\;j\in M_i.
		\end{array}
	\end{equation}
	Through the application of the optimality conditions to the problem~(\ref{nfpe-log-opt-1}) and the assumption $\hat{\gamma}(t) = \gamma(t)$, we derive the polynomial equilibrium system of $\Gamma_{s}^l(t)$,
	\begin{equation}\label{nfpe-eqt-log1}\begin{array}{l}
			(1-c(t))g^i(\varpi^i_{I^j_i}a,\gamma^{-i}(t))+ \lambda^i(\varpi^i_{I^j_i}a)-\nu^i_{I^j_i} \\
			\hspace{3.0cm}+ (1-\theta(t))\zeta^i_{I^j_i}(a) = 0,\;i\in N,j\in M_i,a\in A(I^j_i),\\
			\sum\limits_{a\in A(I^j_i)}\gamma^i(t;\varpi^i_{I^j_i}a)-(1-\theta(t))\gamma^i(t;\varpi^i_{I^j_i})-\theta(t)\gamma^{0i}(\varpi^i_{I^j_i})=0,\;i\in N,j\in M_i,\\
			(\gamma^i(t;\varpi^i_{I^j_i}a)-\rho(t)(1-\theta(t))\eta^{0i}(\varpi^i_{I^j_i}a))\lambda^i(\varpi^i_{I^j_i}a)=c(t)\gamma^{0i}(\varpi^i_{I^j_i}a),\\
			\hspace{1.0cm}\rho(t)(1-\theta(t))\eta^{0i}(\varpi^i_{I^j_i}a)<\gamma^i(t;\varpi^i_{I^j_i}a),\;0<\lambda^i(\varpi^i_{I^j_i}a),\;i\in N,(j,a)\in D_i,\\
			\gamma^i(t;\varpi^i_{I^j_i}a) \lambda^i(\varpi^i_{I^j_i}a)=\theta(t)\gamma^{0i}(\varpi^i_{I^j_i}a),\\
			\hspace{3.2cm}0<\gamma^i(t;\varpi^i_{I^j_i}a),\;0\le\lambda^i(\varpi^i_{I^j_i}a),\;i\in N,(j,a)\notin D_i,
		\end{array}
	\end{equation}
	where $\zeta^i_{I^j_i}(a)=\sum_{{j_q}\in M_i(\varpi^i_{I^j_i}a)}\nu^i_{I^{j_q}_i}$. One can note that $\gamma^*(t)$ constitutes a solution to the optimization problem~(\ref{nfpe-log-opt-1}) against itself precisely when there exists a unique pair $(\lambda^*, \nu^*)$, along with $\gamma^*(t)$, satisfies the system~(\ref{nfpe-eqt-log1}). For values of $t\in(0,1]$, the condition $\gamma^*(t) \in \text{int}(\Lambda)$ ensures that $\sigma(\gamma^*(t))$ is well-defined and $(\sigma(\gamma^*(t)),\gamma^*(t))\in T$.
	
	Let $\widetilde{\mathscr{C}}_L=\{(\gamma(t),t,\lambda,\nu)|(\gamma(t),t,\lambda,\nu) \text{ satisfies the system~(\ref{nfpe-eqt-log1}) with } 0<t\leq 2\}$ and $\mathscr{C}_L$ be the closure of $\widetilde{\mathscr{C}}_L$. To analyze the equilibrium convergence, it is adequate to consider the phase $t\in(0,1]$ where $\theta(t)=0$ and the strategies are realization plans. We denote the subset of $\mathscr{C}_L$ corresponding to $t\in[0,1]$ as $\mathscr{C}^{R}_L$. Subsequently, we introduce a theorem from Luo and Luo~\cite{luoExtensionHoffmansError1994} that is essential for our analysis.
	
	\begin{theorem}\label{nfpe-thm-log1}
		{\em Let $V$ denote the set of $v\in\mathbb{R}^{n_0}$ satisfying $f_1(v) \leq 0,\cdots,f_l(v)\leq 0, p_1(v) = 0,\cdots,p_q(v)=0$, where each $f_i$ and $p_j$ is a polynomial with real coefficients. Suppose that $V$ is nonempty. Then there exist constants $\tau > 0,\kappa > 0$ and $\kappa' > 0$ such that $dist(v,V) \leq \tau(1+\|v\|)^{\kappa'}(\|[f (v)]_+\|+\|p(v)\|)^\kappa$ for any $v\in\mathbb{R}^{n_0}$. Here $dist(\cdot,\cdot)$ denotes the Euclidean distance function between two sets, $f(v) = (f_1(v),\cdots, f_l(v))^\top$, $p(v) = (p_1(v),\cdots, p_q(v))^\top$, and $[\cdot]_+$ denotes the positive part of a vector.}
	\end{theorem}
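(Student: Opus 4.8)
The plan is to read Theorem~\ref{nfpe-thm-log1} as a global Hoffman-type error bound for a polynomial (hence semialgebraic) feasibility system and to prove it by combining the \L{}ojasiewicz inequality on compact sets with a homogenization argument that recovers the polynomial growth factor $(1+\|v\|)^{\kappa'}$ governing behaviour at infinity; the classical Hoffman bound for affine systems is the special case in which all degrees equal $1$ and $\kappa'$ may be taken to be $0$.

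First I would compress the data into a single residual. Put $r(v)=\|[f(v)]_+\|+\|p(v)\|$ and $d(v)=\operatorname{dist}(v,V)$. Both are nonnegative, continuous and semialgebraic (the latter because $V$ is semialgebraic and the distance to a semialgebraic set is semialgebraic by Tarski--Seidenberg), and $r^{-1}(0)=V=d^{-1}(0)$ since $V\neq\emptyset$. The claim then becomes the single semialgebraic inequality $d(v)\le\tau(1+\|v\|)^{\kappa'}r(v)^{\kappa}$ for all $v\in\mathbb{R}^{n_0}$, an inequality between two semialgebraic functions with common zero set $V$.

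Next, on each closed ball $B_R=\{v:\|v\|\le R\}$ the \L{}ojasiewicz inequality applied to the pair $(r,d)$, using $\{r=0\}\cap B_R=\{d=0\}\cap B_R$, yields $d(v)^{N_R}\le c_R\,r(v)$ on $B_R$, hence $d(v)\le c_R^{1/N_R}r(v)^{1/N_R}$ there. Two features of this local bound must be made uniform: the exponent $N_R$ and the constant $c_R$. Both obstructions are controlled by the same degree data of the $f_i,p_j$ together with $n_0,l,q$ (a consequence of tameness), so in particular the exponent can be taken to be a single integer $N$, hence a single $\kappa=1/N$, valid on every ball. To see how the constant behaves as $R\to\infty$ I would homogenize: adjoin a variable $t$, set $\tilde f_i(v,t)=t^{\deg f_i}f_i(v/t)$ and $\tilde p_j(v,t)=t^{\deg p_j}p_j(v/t)$, translate the needed lower bound on $r$ away from $V$ for large $\|v\|$ into a \emph{local} \L{}ojasiewicz inequality for the homogenized system along the slice $t=0$ of the unit sphere, and pull it back through $v\mapsto(v/\|v\|,1/\|v\|)$; the substitution $t=1/\|v\|$ converts a local power $t^{-\kappa'}$ into $\|v\|^{\kappa'}$, producing $d(v)\le\tau_\infty(1+\|v\|)^{\kappa'}r(v)^{\kappa}$ for all $v$ with $\|v\|$ large.

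Finally I would patch the two regimes: on a fixed ball $B_{R_0}$ the first estimate already gives $d(v)\le c_{R_0}^{1/N}r(v)^{\kappa}\le c_{R_0}^{1/N}(1+\|v\|)^{\kappa'}r(v)^{\kappa}$, outside $B_{R_0}$ the homogenization estimate applies, and $\tau=\max\{c_{R_0}^{1/N},\tau_\infty\}$ completes the bound. I expect the main obstacle to be precisely the control of the constant at infinity; the most robust way to make it rigorous is by contradiction via the curve selection lemma. Were the inequality to fail for every admissible triple $(\tau,\kappa',\kappa)$, one extracts from the semialgebraic failure set a semialgebraic curve $v(s)$ with $\|v(s)\|\to\infty$ along which $d(v(s))\big/\bigl[(1+\|v(s)\|)^{\kappa'}r(v(s))^{\kappa}\bigr]\to\infty$, and comparing the Puiseux orders of $d$, $r$ and $\|v\|$ along $v(s)$ yields a contradiction once $\kappa,\kappa'$ are chosen large in terms of the uniform \L{}ojasiewicz data --- the real work being to show that those exponents can be chosen uniformly over all such escaping curves, which is the substance of the Luo--Luo argument.
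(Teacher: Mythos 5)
The paper does not prove this statement at all: it is quoted verbatim as an external result of Luo and Luo~\cite{luoExtensionHoffmansError2006}, so there is no in-paper argument to compare yours against. Judged on its own terms, your sketch does follow the standard route for such global H\"older-type error bounds (semialgebraicity of the residual $r$ and of $\operatorname{dist}(\cdot,V)$ via Tarski--Seidenberg, a \L{}ojasiewicz inequality on compact sets, and a separate treatment of the behaviour at infinity that produces the $(1+\|v\|)^{\kappa'}$ factor), which is indeed the spirit of the cited proof.

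However, as a standalone proof it has a genuine gap, and you essentially concede it in your last sentence: everything hinges on (i) a \emph{uniform} \L{}ojasiewicz exponent valid on all balls $B_R$ and (ii) uniform control of the constants $c_R$ as $R\to\infty$, and neither is established. Asserting that the exponent ``can be taken to be a single integer $N$'' as ``a consequence of tameness'' is not an argument --- local \L{}ojasiewicz exponents need not stabilize without an effective bound in terms of the degrees, and that effectivity is nontrivial. The homogenization step is also delicate: $\tilde f_i(v,t)=t^{\deg f_i}f_i(v/t)$ relates residuals of the two systems, but the distance to the feasible set does not transform in the simple way your pullback $v\mapsto(v/\|v\|,1/\|v\|)$ suggests, since $V$ need not be cone-like at infinity and the slice $t=0$ of the homogenized variety can be much larger than the ``directions at infinity'' of $V$. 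The curve-selection fallback by contradiction fails for the same reason: for fixed $(\tau,\kappa,\kappa')$ one can extract a bad curve, but to contradict \emph{all} admissible triples you need exactly the uniform-exponent statement you were trying to prove, so the argument is circular as written. In short, your outline correctly identifies the architecture of the known proof, but the quantitative heart of it --- the global, degree-controlled \L{}ojasiewicz/H\"ormander-type inequality that Luo and Luo supply --- is assumed rather than proved; for the purposes of this paper the honest course is to do what the authors do and cite the result.
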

	Consider the equilibrium system~(\ref{nfpe-eqt-sfc1}) of $\Gamma_s(\varepsilon)$. For $t\in [0,1]$, by setting $\varepsilon=t$ and $\eta(t)=\rho(t)\eta^0$, we obtain a particular perturbed game $\Gamma_s(t)$ and its corresponding equilibrium system. Let $\mathscr{C}_E$ represent the set of tuples $(\gamma(t),t,\lambda,\nu)$ that solve the equilibrium system of $\Gamma_s(t)$. As a direct application of Theorem~\ref{nfpe-thm-log1}, we can infer the following conclusions.
	\begin{corollary}\label{nfpe-cor-log1}
		{\em For $(\gamma,t,\lambda,\mu)\in\mathbb{R}^{n_0}\times[0,1]\times\mathbb{R}^{n_0}\times\mathbb{R}^{m_0}$, let $f_1(\gamma,t,\lambda,\nu)=(f_1(\gamma,t,\lambda,\nu;\varpi^i_{I^j_i}a):i\in N,(j,a)\in D_i)$ with $f_1(\gamma,t, \lambda,\nu;\varpi^i_{I^j_i}a)=\rho(t)\eta^{0i}(\varpi^i_{I^j_i}a)-\gamma^i(\varpi^i_{I^j_i}a)$, $f_2(\gamma,t,\lambda,\nu)=(f_2(\gamma,t,\lambda,\nu;\varpi^i_{I^j_i}a):i\in N,(j,a)\in D_i)$ with $f_2(\gamma,t,\lambda,\nu;\varpi^i_{I^j_i}a)=- \lambda^i(\varpi^i_{I^j_i}a)$. Furthermore, define $p_1(\gamma,t,\lambda,\nu)=(p_1(\gamma,t,\lambda,\nu;\varpi^i_{I^j_i}a):i\in N,(j,a)\in D_i)$ with $p_1(\gamma,t,\lambda,\nu;\varpi^i_{I^j_i}a)=g^i(\varpi^i_{I^j_i}a,\gamma^{-i})+ \lambda^i(\varpi^i_{I^j_i}a)
		-\nu^i_{I^j_i}$, $p_2(\gamma,t,\lambda,\nu)=(p_2(\gamma,t,\lambda,\nu;\varpi^i_{I^j_i}a):i\in N,(j,a)\notin D_i)$ with $p_2(\gamma,t,\lambda,\nu;\varpi^i_{I^j_i}a)=g^i(\varpi^i_{I^j_i}a,\gamma^{-i})-\nu^i_{I^j_i}+\zeta^i_{I^j_i}(a)$, $p_3(\gamma,t,\lambda,\nu)=(p_3(\gamma,t,\lambda,\nu;\varpi^i_{I^j_i}a):i\in N,(j,a)\in D_i)$ with $p_3(\gamma,t,\lambda,\nu;\varpi^i_{I^j_i}a)=(\gamma^i(\varpi^i_{I^j_i}a)-\rho(t)\eta^{0i}(\varpi^i_{I^j_i}a))\lambda^i(\varpi^i_{I^j_i}a)$, $p_4(\gamma,t,\lambda,\nu)=(p_4(\gamma,t,\lambda,\nu;\varpi^i_{I^j_i}):i\in N,j\in M_i)$ with $p_4(\gamma,t,\lambda,\nu;\varpi^i_{I^j_i})=\sum_{a\in A(I^j_i)}\gamma^i(\varpi^i_{I^j_i}a)-\gamma^i(\varpi^i_{I^j_i})$. Then there exist constants $\tau_1 > 0,\kappa_1 > 0$ and $\kappa'_1 > 0$ such that 
		\[\begin{array}{l}dist((\gamma,t, \lambda,\nu),\mathscr{C}_E)=\operatorname*{min}_{(\tilde{\gamma}(t),\tilde{t},\tilde{\lambda},\tilde{\nu})\in\mathscr{C}_E}\|(\gamma,t, \lambda,\nu)-(\tilde{\gamma}(t),\tilde{t},\tilde{\lambda},\tilde{\nu})\|\\
		\hspace{1.5cm}\leq \tau_{1}(1+\|(\gamma,t,\lambda,\nu)\|)^{\kappa_{1}^{\prime}}(\|[f_1(\gamma,t, \lambda,\nu)]_{+}\|+\|[f_2(\gamma,t,\lambda,\nu)]_{+}\|\\
		\hspace{2cm}+\|p_1(\gamma,t,\lambda,\nu)\|+\|p_2(\gamma,t,\lambda,\nu)\|+\|p_3(\gamma,t,\lambda,\nu)\|+\|p_4(\gamma,t,\lambda,\nu)\|)^{\kappa_{1}}\end{array}\]
		for any $(\gamma,t,\lambda,\mu)\in\mathbb{R}^{n_0}\times[0,1]\times\mathbb{R}^{n_0}\times\mathbb{R}^{m_0}$.}
	\end{corollary}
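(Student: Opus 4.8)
The plan is to deduce Corollary~\ref{nfpe-cor-log1} directly from Theorem~\ref{nfpe-thm-log1}, applied to the equilibrium system of the perturbed game $\Gamma_s(t)$, that is, to system~(\ref{nfpe-eqt-sfc1}) with $\varepsilon=t$ and $\eta(t)=\rho(t)\eta^0$. The first step is to confirm that this solution set $\mathscr{C}_E$ is nonempty, which Theorem~\ref{nfpe-thm-log1} requires: at $t=1$ we have $\rho(1)=1$, so $\eta(1)=\eta^0$, and since $\eta^0$ satisfies~(\ref{nfpe-equ-cha1}) (hence $0<\eta^{0i}(w^i_{I^j_i}a)\le 1$), $\Gamma_s(1)$ is a game with nonempty, compact, convex strategy sets and multilinear payoffs and therefore has a Nash equilibrium $\gamma^*(1)$; by the equivalence recorded just before Lemma~\ref{nfpe-lem-sfc4} there is a pair $(\lambda^*,\nu^*)$ with $(\gamma^*(1),1,\lambda^*,\nu^*)\in\mathscr{C}_E$. (Alternatively, the $t=0$ slice consists of the KKT points of the unperturbed sequence-form game, which are nonempty because $\Gamma_s$ has a Nash equilibrium.)

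The second step is to observe that the equilibrium system of $\Gamma_s(t)$ is a system of polynomial equalities and inequalities in $(\gamma,t,\lambda,\nu)$: the payoff $g^i(w^i_{I^j_i}a,\gamma^{-i})=\sum_{w^{-i}\in W^{-i}}g^i(w^i_{I^j_i}a,w^{-i})\prod_{i_q\neq i}\gamma^{i_q}(w^{i_q})$ is multilinear in $\gamma$ and $\zeta^i_{I^j_i}(a)=\sum_{j_q\in M_i(w^i_{I^j_i}a)}\nu^i_{I^{j_q}_i}$ is linear in $\nu$, so $p_1,p_2$ are polynomial, $p_4$ and $f_2$ are linear, and $p_3=(\gamma-\rho(t)\eta^0)\lambda$, $f_1=\rho(t)\eta^0-\gamma$ are polynomial provided $\rho$ is a polynomial in $t$. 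Adjoining the two polynomial inequalities $-t\le 0$ and $t-1\le 0$ that encode $t\in[0,1]$ — these have zero residual throughout the stated input domain, which is why they are absent from the right-hand side — the solution set becomes exactly a set of the form $V$ in Theorem~\ref{nfpe-thm-log1}, and the bound it yields has precisely the displayed shape once the optimality conditions are written in the appropriate mixture of equalities (entering through $\|\cdot\|$) and one-sided inequalities (entering through $\|[\cdot]_+\|$), using $\|[p]_+\|\le\|p\|$ to collapse pairs of opposite inequalities where convenient.

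The main obstacle is that $\rho$ is not a single polynomial on $[0,1]$: it equals $\frac{4}{3}t$ on $[0,\frac{1}{2}]$ and $1-\frac{4}{3}(1-t)^2$ on $[\frac{1}{2},1]$, so Theorem~\ref{nfpe-thm-log1} cannot be invoked verbatim on the whole strip. I would resolve this by partitioning $[0,1]=[0,\frac{1}{2}]\cup[\frac{1}{2},1]$; on each subinterval $\rho$ coincides with a single polynomial, so the slice of $\mathscr{C}_E$ over that subinterval — cut out by the polynomial system above together with the two polynomial inequalities restricting $t$ to the subinterval — falls under Theorem~\ref{nfpe-thm-log1} and is nonempty (the $t=1$ point lies in the second slice, the $t=0$ slice of KKT points in the first). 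Theorem~\ref{nfpe-thm-log1} then gives constants $\tau^{(\ell)}>0$ and exponents $\kappa^{(\ell)},\kappa^{\prime(\ell)}>0$ for $\ell=1,2$, and since each slice is contained in $\mathscr{C}_E$, for $t$ in the $\ell$-th subinterval the $\ell$-th slice bound controls $dist((\gamma,t,\lambda,\nu),\mathscr{C}_E)$. The remaining step — which I expect to be the only mildly delicate bookkeeping — is to merge the two estimates into one valid for all $t\in[0,1]$: set $\kappa_1=\min_\ell\kappa^{(\ell)}$; when the total residual $R$ satisfies $R\le1$ the lowered exponent only weakens the estimate, and when $R>1$ one writes $R^{\kappa^{(\ell)}}=R^{\kappa_1}R^{\kappa^{(\ell)}-\kappa_1}$ and bounds $R$ by a polynomial in $(\gamma,t,\lambda,\nu)$, say $R\le C(1+\|(\gamma,t,\lambda,\nu)\|)^{d}$, so that the extra factor is absorbed by taking $\kappa'_1$ larger than each $\kappa^{\prime(\ell)}+d(\kappa^{(\ell)}-\kappa_1)$ and $\tau_1$ correspondingly large. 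This yields the stated inequality uniformly over the strip.
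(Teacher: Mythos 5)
Your route is the same one the paper intends: the paper offers no argument beyond invoking Theorem~\ref{nfpe-thm-log1} as a ``direct application,'' so the two points you supply are precisely the details such an application needs, and you handle both correctly --- the nonemptiness of $\mathscr{C}_E$ (each $\Gamma_s(t)$ has an equilibrium, with multipliers from LP duality), and the fact that $\rho$ is only piecewise polynomial on $[0,1]$, which the paper silently ignores. Your repair --- splitting into $[0,\frac{1}{2}]$ and $[\frac{1}{2},1]$, encoding each $t$-range by polynomial inequalities of zero residual, bounding $dist(\cdot,\mathscr{C}_E)$ by the distance to the corresponding slice, and merging the two sets of constants by taking $\kappa_1=\min_\ell\kappa^{(\ell)}$ while absorbing the exponent surplus into the $(1+\|\cdot\|)^{\kappa_1'}$ factor via a polynomial growth bound on the residual --- is sound and is what a complete proof requires.

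One caveat: you cannot actually land on ``precisely the displayed shape.'' In the system~(\ref{nfpe-eqt-sfc1}) the stationarity conditions for $(j,a)\in D_i$ are equalities, so any polynomial description of $\mathscr{C}_E$ must carry $p_1$ as an equality (or as two opposite inequalities), and Theorem~\ref{nfpe-thm-log1} then delivers the bound with $\|p_1\|$, not with $\|[p_1]_+\|$ alone; your remark about collapsing opposite inequalities via $\|[p]_+\|\le\|p\|$ goes in that (weaker) direction. Relaxing to $p_1\le 0$ strictly enlarges the set: in a game where some player has a root information set all of whose actions lie in $D_i$, take an equilibrium tuple, raise $\nu^i_{I^j_i}$ above $\max_{a}g^i(w^i_{I^j_i}a,\gamma^{-i})$ and set the corresponding $\lambda^i$'s to zero; then every residual appearing in the corollary vanishes while the point lies at positive distance from $\mathscr{C}_E$, so the inequality as printed with only $\|[p_1]_+\|$ cannot follow from Theorem~\ref{nfpe-thm-log1}. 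This is evidently a slip in the statement rather than a defect of your argument: the $\|p_1\|$ version your proof produces is exactly what the proof of Lemma~\ref{nfpe-lem-log3} uses, where the first-group residual of~(\ref{nfpe-eqt-log2}) is bounded by $c(t)\|g^i(w^i_{I^j_i}a,\gamma^{-i}(t))\|$ in absolute value. So your proposal proves the corollary in the form in which it is needed; just drop the claim that the printed form with $[p_1]_+$ drops out verbatim.
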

	\begin{lemma}\label{nfpe-lem-log3}
		{\em There exist constants $\tau_L>0$ and $\kappa_L> 0$ such that
		\[\min_{(\tilde{\gamma}(t),\tilde{t},\tilde{\lambda},\tilde{\nu})\in\mathscr{C}_E}\|(\gamma(t),t, \lambda,\nu)-(\tilde{\gamma}(t),\tilde{t},\tilde{\lambda},\tilde{\nu})\|\leq\tau_Lc(t)^{\kappa_L}\]
		for any $(\gamma(t),t,\lambda,\nu)\in\mathscr{C}^R_L$.
		}
	\end{lemma}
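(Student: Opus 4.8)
The plan is to read off the estimate from the polynomial error bound of Corollary~\ref{nfpe-cor-log1}, which bounds $dist((\gamma,t,\lambda,\nu),\mathscr{C}_E)$ by $\tau_1(1+\|(\gamma,t,\lambda,\nu)\|)^{\kappa_1'}$ times a $\kappa_1$-th power of the residual sum $\|[f_1]_+\|+\|[f_2]_+\|+\|[p_1]_+\|+\|p_2\|+\|p_3\|+\|p_4\|$, where $f_1,f_2,p_1,\dots,p_4$ are the functions defining $\mathscr{C}_E$, that is, the equilibrium system of $\Gamma_s(t)$ with $\varepsilon=t$ and $\eta(t)=\rho(t)\eta^0$. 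A tuple in $\mathscr{C}^R_L$ has $t\in[0,1]$, hence $\theta(t)=0$, so it satisfies~(\ref{nfpe-eqt-log1}) with $\theta(t)$ set to $0$, and $\gamma(t)$ is then a realization plan profile. It therefore suffices to show \emph{(a)} that $\mathscr{C}^R_L$ is bounded, so the prefactor $(1+\|(\gamma,t,\lambda,\nu)\|)^{\kappa_1'}$ is at most a constant on it, and \emph{(b)} that, evaluated at such a tuple, the residual sum is $O(c(t))$. Given \emph{(a)} and \emph{(b)}, Corollary~\ref{nfpe-cor-log1} yields the lemma with $\kappa_L=\kappa_1$ and $\tau_L$ absorbing the constants.

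Fact \emph{(b)} is essentially bookkeeping: comparing~(\ref{nfpe-eqt-log1}) at $\theta(t)=0$ with the equilibrium system of $\Gamma_s(t)$ term by term, the simplex equalities coincide so $p_4=0$, the lower-bound inequalities coincide, the complementarity equations of~(\ref{nfpe-eqt-log1}) force $\lambda^i\ge 0$ (with $\lambda^i(w^i_{I^j_i}a)=0$ for $(j,a)\notin D_i$) so $[f_1]_+=[f_2]_+=0$, substituting the stationarity equation of~(\ref{nfpe-eqt-log1}) gives $p_1=p_2=c(t)\,g^i(w^i_{I^j_i}a,\gamma^{-i}(t))$, which is at most $c(t)G$ in absolute value where $G$ bounds the expected payoffs over the compact set $\Lambda$, and substituting its complementarity equation gives $p_3=c(t)\,\gamma^{0i}(w^i_{I^j_i}a)\le c(t)$. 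Summing the finitely many coordinates, the residual sum is at most $C_0\,c(t)$ for a constant $C_0$ depending only on $\Gamma$.

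Fact \emph{(a)} is the main obstacle and the step where I expect the real work to lie; since $\gamma(t)\in\Lambda$ and $t\in[0,1]$, it amounts to bounding the multipliers $\lambda$ and $\nu$ on $\mathscr{C}^R_L$. I would argue by a recursion over each player's tree of information sets. A bottom-up induction first gives $\gamma^i(t;w^i_{I^j_i})\ge\rho(t)\eta^{0i}(w^i_{I^j_i})\ge t\,\eta^{0i}(w^i_{I^j_i})$ for every information set, using the strict lower bounds at coordinates in $D_i$, the constraint $\sum_{a\in A(I^j_i)}\gamma^i(t;w^i_{I^j_i}a)=\gamma^i(t;w^i_{I^j_i})$, and the fact that $\eta^{0i}$ is a realization plan. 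Next, setting $s^i_{j,a}=\gamma^i(t;w^i_{I^j_i}a)-\rho(t)\eta^{0i}(w^i_{I^j_i}a)$, weighting the stationarity equations of~(\ref{nfpe-eqt-log1}) by $s^i_{j,a}$, and using the complementarity equations to replace $s^i_{j,a}\lambda^i(w^i_{I^j_i}a)$ by $c(t)\gamma^{0i}(w^i_{I^j_i}a)$, one gets a recursion for $V^i_j=\nu^i_{I^j_i}\sum_{a\in A(I^j_i)}s^i_{j,a}$ in terms of bounded payoff aggregates, $c(t)$, and the values at child information sets; as the tree is finite and $\sum_j\sum_a s^i_{j,a}\le\sum_j\gamma^i(t;w^i_{I^j_i})\le m_i$, every $V^i_j$ is bounded. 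The delicate remaining point — where the explicit forms of $\rho$ and $c$ are essential — is that passing from bounded $V^i_j$ to bounded $\nu^i_{I^j_i}$ requires a lower bound of order $c(t)$ on the aggregate slack $\sum_a s^i_{j,a}=\gamma^i(t;w^i_{I^j_i})-\rho(t)\eta^{0i}(w^i_{I^j_i})$; this should follow by combining the complementarity relation $\sum_a s^i_{j,a}=c(t)\sum_a\gamma^{0i}(w^i_{I^j_i}a)/\lambda^i(w^i_{I^j_i}a)$ with the stationarity equations and the fact that $c(t)\to 0$ faster than $\rho(t)$ and than any power of $t$, so that the slacks cannot decay faster than $c(t)$. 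Once $\nu$ is bounded, the stationarity equations bound $\lambda$; with \emph{(a)} and \emph{(b)} in hand, feeding $C_0\,c(t)$ and the prefactor bound into Corollary~\ref{nfpe-cor-log1} completes the proof.
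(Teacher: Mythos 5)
Your overall reduction is exactly the paper's: rewrite the system~(\ref{nfpe-eqt-log1}) on $t\in(0,1]$ (where $\theta(t)=0$) as the equilibrium system of $\Gamma_s(t)$ with residuals of size $O(c(t))$ — this is precisely the paper's system~(\ref{nfpe-eqt-log2}) — then apply Corollary~\ref{nfpe-cor-log1} with $\kappa_L=\kappa_1$ and absorb the prefactor $(1+\|\cdot\|)^{\kappa_1'}$ and the payoff/$\gamma^0$ factors into $\tau_L$ using compactness of $\mathscr{C}^R_L$. Your residual bookkeeping in \emph{(b)} is correct and matches the paper. The problem is step \emph{(a)}: the paper does not treat the boundedness of the multipliers as a side remark but proves it separately (Appendix~\ref{nfpe-appen-2}), by a backward induction that expresses $\nu^i_{I^j_i}$ through the equations~(\ref{nfpe-eqt-app1})--(\ref{nfpe-eqt-app2}) as an aggregate with \emph{normalized} weights $\beta^i_{I^{j_q}_i}(a_q)$ summing to one over each information set, so that the $\lambda$-terms are tamed by complementarity and $\nu$, and then $\lambda$, are bounded uniformly in $t$.

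Your sketch of \emph{(a)} does not deliver this. Weighting the stationarity equations by the slacks $s^i_{j,a}$ and using complementarity does give a clean recursion for $V^i_j=\nu^i_{I^j_i}\sum_a s^i_{j,a}$ (indeed, since $\eta^0$ and $\gamma(t)$ satisfy the same flow constraints, the aggregate child slack equals the parent action's slack, so the recursion closes), and the $V^i_j$ are bounded. But bounded $V^i_j$ does not bound $\nu^i_{I^j_i}$: the aggregate slack $\sum_a s^i_{j,a}=c(t)\sum_a\gamma^{0i}(w^i_{I^j_i}a)/\lambda^i(w^i_{I^j_i}a)$ can legitimately be of order $c(t)$ (this is exactly what happens at information sets reached only through non-best-response sequences), so even granting your ``delicate point'' you would only conclude $\nu^i_{I^j_i}=O(1/c(t))$, which diverges as $t\to 0$ and leaves $\tau_L$ infinite. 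Moreover, your justification of that delicate point is circular: a lower bound of order $c(t)$ on the slacks is, via the same complementarity identity, equivalent to an upper bound on $\lambda$, which is the quantity you are trying to bound; the asymptotic comparison of $c(t)$ with $\rho(t)$ or powers of $t$ plays no role. So the compactness of $\mathscr{C}^R_L$ — the one genuinely nontrivial ingredient of this lemma — is not established by your argument; you need an argument of the appendix's type, in which $\nu^i_{I^j_i}$ is written as a convex-combination-like aggregate (weights summing to one) of payoff terms and complementarity-controlled $\lambda$-terms, rather than a product $\nu\cdot(\text{slack})$ that must later be divided by a possibly vanishing slack.
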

	\begin{proof}
		When $t\in (0,1]$, the system (\ref{nfpe-eqt-log1}) can be rewritten as
		\begin{equation}\label{nfpe-eqt-log2}\begin{array}{l}
				g^i(\varpi^i_{I^j_i}a,\gamma^{-i}(t))+ \lambda^i(\varpi^i_{I^j_i}a)-\nu^i_{I^j_i} = c(t)g^i(\varpi^i_{I^j_i}a,\gamma^{-i}(t)),\;i\in N,(j,a)\in D_i,\\
				g^i(\varpi^i_{I^j_i}a,\gamma^{-i}(t))-\nu^i_{I^j_i} + \zeta^i_{I^j_i}(a) = c(t)g^i(\varpi^i_{I^j_i}a,\gamma^{-i}(t)),\;i\in N,(j,a)\notin D_i,\\
				\sum\limits_{a\in A(I^j_i)}\gamma^i(t;\varpi^i_{I^j_i}a)-\gamma^i(t;\varpi^i_{I^j_i})=0,\;i\in N,j\in M_i,\\
				(\gamma^i(t;\varpi^i_{I^j_i}a)-\rho(t)\eta^{0i}(\varpi^i_{I^j_i}a)) \lambda^i(\varpi^i_{I^j_i}a)=c(t)\gamma^{0i}(\varpi^i_{I^j_i}a),\\
				\hspace{1.7cm}\rho(t)\eta^{0i}(\varpi^i_{I^j_i}a)<\gamma^i(t;\varpi^i_{I^j_i}a),\;0<\lambda^i(\varpi^i_{I^j_i}a),\;i\in N,(j,a)\in D_i.
			\end{array}
		\end{equation}
		Applying Corollary~\ref{nfpe-cor-log1} to the system~(\ref{nfpe-eqt-log2}) reveals that there exist constants $\tau_1> 0, \kappa_1 > 0$, and $\kappa^{\prime}_1 > 0$ such that, for any $(\gamma(t),t,\lambda,\nu) \in\mathscr{C}^R_L$, 
		\[\begin{array}{l}	\operatorname*{min}\limits_{(\tilde{\gamma}(t),\tilde{t},\tilde{\lambda},\tilde{\nu})\in\mathscr{C}_E}\|(\gamma(t),t,\lambda,\nu)-(\tilde{\gamma}(t),\tilde{t},\tilde{\lambda},\tilde{\nu})\|\leq \tau_1(1+\|(\gamma(t),t,\lambda,\nu)\|)^{\kappa_{1}^{\prime}}c(t)^{\kappa_{1}}\\
		\hspace{2.5cm}(\sum\limits_{i\in N}\sum\limits_{j\in M_{i},a\in A(I^j_i)}\|g^i(\varpi^i_{I^j_i}a,\gamma^{-i}(t))\|+\sum\limits_{i\in N}\sum\limits_{(j,a)\in D_i}\gamma^{0i}(\varpi^i_{I^j_i}a))^{\kappa_1}.
		\end{array}\]
		Let $\kappa_L = \kappa_1$ and
		\[\begin{array}{l}\tau_{L}=\tau_{1}\max\limits_{(\gamma(t),t,\lambda,\nu)\in\mathscr{C}^R_{L}}(1+\|(\gamma(t),t,\lambda,\nu)\|)^{\kappa_{1}^{\prime}}\\
			\hspace{2.5cm}(\sum\limits_{i\in N}\sum\limits_{j\in M_{i},a\in A(I^j_i)}\|g^i(\varpi^i_{I^j_i}a,\gamma^{-i}(t))\|+\sum\limits_{i\in N}\sum\limits_{(j,a)\in D_i}\gamma^{0i}(\varpi^i_{I^j_i}a)))^{\kappa_1}.\end{array}\]
		The compactness of $\mathscr{C}^R_L$ demonstrated in Appendix~\ref{nfpe-appen-2} confirms that $\tau_{L}$ is finite. Thus \[\min_{(\tilde{\gamma}(t),\tilde{t},\tilde{\lambda},\tilde{\nu})\in\mathscr{C}_E}\|(\gamma(t),t,\lambda,\nu)-(\tilde{\gamma}(t),\tilde{t},\tilde{\lambda},\tilde{\nu})\|\leq\tau_Lc(t)^{\kappa_L}\]for any $(\gamma(t),t,\lambda,\nu)\in\mathscr{C}^R_L$. The proof is completed.
	\end{proof}
	\begin{theorem}\label{nfpe-thm-log5}{\em
			Let $\{\gamma^*(t_k)\}_{k=1}^{\infty}$ be a sequence of Nash equilibria of $\Gamma_{s}^l(t)$ defined by the system~(\ref{nfpe-eqt-log1}) with $t=t_k\in(0,1]$ and $\lim_{k\to \infty}t_k=0$. Then every limit point of the totally mixed strategy sequence $\{\sigma(\gamma^*(t_k))\}_{k=1}^{\infty}$ yields a normal-form perfect equilibrium.
		}
	\end{theorem}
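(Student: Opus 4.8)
The plan is to deduce the statement from the sequence-form characterization in Theorem~\ref{nfpe-the-sfc1}. The strategy is to show that, as $t_k\to0$, the artificial-game equilibria $\gamma^*(t_k)$ lie arbitrarily close to genuine Nash equilibria of the perturbed games $\Gamma_s(\varepsilon)$ of Section~\ref{nfpe-sec-prm2}, with a perturbation parameter that can be kept strictly positive; Theorem~\ref{nfpe-the-sfc1} then identifies the limit as a normal-form perfect equilibrium.

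First I would fix a limit point $\sigma^*$ of $\{\sigma(\gamma^*(t_k))\}_{k=1}^{\infty}$ and pass to a subsequence (not relabelled) along which $\sigma(\gamma^*(t_k))\to\sigma^*$. For $t_k\in(0,1]$ one has $\gamma^*(t_k)\in\text{int}(\Lambda)$, so $(\sigma(\gamma^*(t_k)),\gamma^*(t_k))\in T$ and $\gamma^*(t_k)=\gamma(\sigma(\gamma^*(t_k)))$; since the map $\sigma\mapsto\gamma(\sigma)$ is continuous (indeed linear in $\sigma$) and $T$ is closed, $\gamma^*(t_k)\to\gamma^*:=\gamma(\sigma^*)$ with $(\sigma^*,\gamma^*)\in T$. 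It thus suffices to exhibit perturbed games $\Gamma_s(\varepsilon^k)$ with $\varepsilon^k$ decreasing to $0$ that admit Nash equilibria converging to $\gamma^*$.

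Next I would invoke the error bound. For each $k$ let $(\lambda^k,\nu^k)$ be the unique multiplier pair such that $(\gamma^*(t_k),t_k,\lambda^k,\nu^k)$ satisfies the system~(\ref{nfpe-eqt-log1}) at $t=t_k$; this tuple lies in $\mathscr{C}^R_L$, so Lemma~\ref{nfpe-lem-log3} furnishes $(\tilde\gamma^k,\tilde t_k,\tilde\lambda^k,\tilde\nu^k)\in\mathscr{C}_E$ with
\[\|(\gamma^*(t_k),t_k,\lambda^k,\nu^k)-(\tilde\gamma^k,\tilde t_k,\tilde\lambda^k,\tilde\nu^k)\|\le\tau_L c(t_k)^{\kappa_L}.\]
Since $\rho(t)=\frac{4}{3}t$ near $0$, $c(t_k)=\exp(1-\frac{3}{4t_k})\to0$, so the right-hand side vanishes and hence $\tilde t_k\to0$ and $\tilde\gamma^k\to\gamma^*$. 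The delicate point is that $\tilde t_k>0$ for all large $k$: from $t_k-\tilde t_k\le\tau_L c(t_k)^{\kappa_L}=\tau_L e^{\kappa_L}\exp(-\frac{3\kappa_L}{4t_k})$ and the fact that $\exp(-a/x)/x\to0$ as $x\to0^+$, one gets $\tilde t_k\ge t_k-\tau_L e^{\kappa_L}\exp(-\frac{3\kappa_L}{4t_k})>0$ once $t_k$ is small. This is exactly where the exponential smallness of $c$ at the origin — built into the choice of $\rho$ — is essential, and I expect it to be the main obstacle: one must rule out that the nearest point of $\mathscr{C}_E$ collapses onto $t=0$, i.e. onto the unperturbed game, where a Nash but non-perfect equilibrium could sit.

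Finally, for such $k$ the parameter $\tilde t_k\in(0,1]$ and the perturbation $\eta(\tilde t_k)=\rho(\tilde t_k)\eta^0$ is admissible, i.e.\ satisfies~(\ref{nfpe-equ-cha1}): it is strictly positive, and $\rho(\tilde t_k)\eta^{0i}(w^i)\le\tilde t_k$ follows from $\eta^{0i}(w^i)\le1/\max_{i,j}|A(I^j_i)|$ together with $\rho(\tilde t_k)\le\frac{4}{3}\tilde t_k$ when $\tilde t_k\le\frac12$ and $\rho(\tilde t_k)\le1$ when $\tilde t_k>\frac12$ (using $\max_{i,j}|A(I^j_i)|\ge2$ in any nondegenerate game). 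Hence $(\tilde\gamma^k,\tilde t_k,\tilde\lambda^k,\tilde\nu^k)\in\mathscr{C}_E$ means precisely that $\tilde\gamma^k$ is a Nash equilibrium of $\Gamma_s(\tilde t_k)$. Taking $\varepsilon^k=\tilde t_k$ decreasing to $0$ with associated Nash equilibria $\tilde\gamma^k\to\gamma^*$, Theorem~\ref{nfpe-the-sfc1} concludes that $\sigma^*$ is a normal-form perfect equilibrium of $\Gamma$, which is the assertion since $\sigma^*$ was an arbitrary limit point.
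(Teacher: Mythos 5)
Your proposal is correct and follows essentially the same route as the paper's proof: project the points $(\gamma^*(t_k),t_k,\lambda^k,\nu^k)\in\mathscr{C}^R_L$ onto $\mathscr{C}_E$ via the error bound of Lemma~\ref{nfpe-lem-log3}, use the exponential decay of $c(t)$ relative to $t$ to guarantee $\tilde t_k>0$ for large $k$, and conclude through the characterization in Theorem~\ref{nfpe-the-sfc1}. You merely make explicit some steps the paper leaves implicit (the admissibility of the perturbation $\rho(\tilde t_k)\eta^0$ and the final appeal to Theorem~\ref{nfpe-the-sfc1}), which is fine.
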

	\begin{proof}
		For each $\gamma^*(t_k)$, let $(\lambda^*(t_k),\nu^*(t_k))$ be the associated pair satisfying the system~(\ref{nfpe-eqt-log1}). Due to the compactness of $\mathscr{C}^R_L$, the sequence $\{(\gamma^*(t_k),t_k,\lambda^*(t_k),\nu^*(t_k))\}_{k=1}^{\infty}$ has a convergent subsequence, which we denote in the same notation. As to each $t_k$, assume that \[(\tilde{\gamma}(\tilde{t}_k),\tilde{t}_k,\tilde{ \lambda}(\tilde{t}_k),\tilde{\nu}(\tilde{t}_k))\in\underset{(\tilde{\gamma}(\tilde t),\tilde{t},\tilde{\lambda},\tilde{\nu})\in\mathscr{C}_E}{\arg\,\min}\;\|(\gamma^*(t_k),t_k,\lambda^*(t_k),\nu^*(t_k))-(\tilde{\gamma}(\tilde{t}),\tilde{t},\tilde{\lambda},\tilde{\nu})\|.\]Lemma~\ref{nfpe-lem-log3} implies that there exist constants $\tau_L>0$ and $\kappa_L> 0$ such that $|\tilde{t}_k-t_k|\leq\tau_Lc(t_k)^{\kappa_L}$. Given that $c(t_k)=\exp(1-1/\rho(t_k))$, a sufficiently large constant $K_0$ ensures  $\tau_Lc(t_k)^{\kappa_L}<t_k$ when $k>K_0$. It follows that $\tilde{t}_k>0$ for all $k>K_0$. Consequently, we deduce that $\lim_{k\to\infty} \sigma(\gamma^*(t_k))=\lim_{k\to\infty}\sigma(\tilde\gamma(\tilde t_k))$, indicating a normal-form perfect equilibrium. This completes the proof.
	\end{proof}
	\subsection{A Smooth Path to a Normal-Form Perfect Equilibrium}\label{SPne}
	In this subsection, we establish the existence of a smooth path consisting of solutions to the system~(\ref{nfpe-eqt-log1}). This path starts from an arbitrary initial realization plan and ultimately converge into a normal-form perfect equilibrium.
	
	\begin{lemma}\label{nfpe-lem-log4} {\em At $t=2$, the system~(\ref{nfpe-eqt-log1}) has a unique solution, given by $(\gamma^*(2),\lambda^*(2),\nu^*(2))$, with the components satisfying $\gamma^{*i}(2;\varpi^i_{I^j_i}a)=\gamma^{0i}(\varpi^i_{I^j_i}a)$, $\lambda^i(2;\varpi^{*i}_{I^j_i}a)=1$ and $\nu^{*i}_{I^j_i}(2)=1$.}
	\end{lemma}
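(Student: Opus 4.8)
The plan is to substitute the values of the auxiliary functions at $t=2$ into the equilibrium system~(\ref{nfpe-eqt-log1}), show that the resulting relations force exactly the asserted values (hence uniqueness), and then check that this candidate triple does solve the system.

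First I would evaluate the piecewise data at $t=2$. Since $t=2$ lies in the ``Otherwise'' branches, $\rho(2)=1$ and $\theta(2)=\tfrac{4}{3}\cdot 2-\tfrac{5}{3}=1$, whence $c(2)=\exp(1-1/\rho(2))=1$. Plugging these into~(\ref{nfpe-eqt-log1}): the coefficients $1-c(t)$ and $1-\theta(t)$ both vanish, so the first line collapses to $\lambda^i(w^i_{I^j_i}a)=\nu^i_{I^j_i}$ for every $i\in N,\ j\in M_i,\ a\in A(I^j_i)$; the second line loses its $(1-\theta(t))\gamma^i(t;w^i_{I^j_i})$ term and becomes $\sum_{a\in A(I^j_i)}\gamma^i(2;w^i_{I^j_i}a)=\gamma^{0i}(w^i_{I^j_i})$; and since $\rho(2)(1-\theta(2))=0$ while $c(2)=\theta(2)=1$, the third and fourth lines both reduce to the single relation $\gamma^i(2;w^i_{I^j_i}a)\,\lambda^i(w^i_{I^j_i}a)=\gamma^{0i}(w^i_{I^j_i}a)$ together with $\gamma^i(2;w^i_{I^j_i}a)>0$, now valid for all pairs $(j,a)$, whether or not $(j,a)\in D_i$.

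The key step is then a summation over the actions at a fixed information set. Fix $i\in N$ and $j\in M_i$ and write $\nu=\nu^i_{I^j_i}$, so $\lambda^i(w^i_{I^j_i}a)=\nu$ for all $a\in A(I^j_i)$. Substituting this into the complementarity relation above and summing over $a$ gives $\nu\sum_{a}\gamma^i(2;w^i_{I^j_i}a)=\sum_{a}\gamma^{0i}(w^i_{I^j_i}a)$. The left-hand sum equals $\gamma^{0i}(w^i_{I^j_i})$ by the second line, and the right-hand sum equals $\gamma^{0i}(w^i_{I^j_i})$ because $\gamma^0$ is a realization plan and hence satisfies~(\ref{nfpe-equ-pre3}); therefore $\nu\,\gamma^{0i}(w^i_{I^j_i})=\gamma^{0i}(w^i_{I^j_i})$. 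Since $\gamma^{0i}(w^i)\geq\eta^{0i}(w^i)>0$ for every nonempty sequence (by the strict positivity in~(\ref{nfpe-equ-cha1})) and $\gamma^{0i}(\emptyset)=1$, we have $\gamma^{0i}(w^i_{I^j_i})>0$, so $\nu=1$. Thus $\nu^i_{I^j_i}=1$ and $\lambda^i(w^i_{I^j_i}a)=1$, and the complementarity relation then yields $\gamma^i(2;w^i_{I^j_i}a)=\gamma^{0i}(w^i_{I^j_i}a)$, for all $i,j,a$; this argument simultaneously shows that any solution must coincide with the displayed triple.

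Finally I would verify that this triple indeed solves~(\ref{nfpe-eqt-log1}) at $t=2$: all equalities hold by construction, and the two strict inequalities reduce to $\gamma^{0i}(w^i_{I^j_i}a)>0$, which again follows from $\gamma^{0i}\geq\eta^{0i}>0$ on nonempty sequences. I do not anticipate a genuine obstacle in this lemma; the only point demanding care is the bookkeeping of which terms of the piecewise data vanish at $t=2$, together with the observation that the strict positivity of the starting realization plan $\gamma^0$ (inherited from $\eta^0$) is exactly what forces the multiplier $\nu$ to equal $1$.
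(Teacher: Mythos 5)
Your proposal is correct and follows essentially the same route as the paper's proof: reduce the system at $t=2$ (using $\rho(2)=\theta(2)=c(2)=1$) to $\lambda^i=\nu^i_{I^j_i}$, $\sum_a\gamma^i(2;w^i_{I^j_i}a)=\gamma^{0i}(w^i_{I^j_i})$ and $\gamma^i(2;w^i_{I^j_i}a)\lambda^i(w^i_{I^j_i}a)=\gamma^{0i}(w^i_{I^j_i}a)$, then substitute and sum over $a\in A(I^j_i)$ to force $\nu^{*i}_{I^j_i}(2)=1$, hence $\lambda^{*i}=1$ and $\gamma^{*i}(2)=\gamma^{0i}$. Your added details (explicit evaluation of the piecewise functions, the positivity $\gamma^{0i}\geq\eta^{0i}>0$ justifying the division, and the final verification) are fine refinements of the same argument.
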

	\begin{proof}
		At $t=2$, the system~(\ref{nfpe-eqt-log1}) can be expressed as follows,
		\begin{equation}\label{nfpe-log-equ-2}\begin{array}{l}
				 \lambda^i(\varpi^i_{I^j_i}a)-\nu^i_{I^j_i}=0,\; i\in N,j\in M_i,a\in A(I^j_i),\\
				\sum\limits_{a\in A(I^j_i)}\gamma^i(2;\varpi^i_{I^j_i}a)-\gamma^{0i}(\varpi^i_{I^j_i})=0,\;i\in N,j\in M_i,\\
				\gamma^i(2;\varpi^i_{I^j_i}a) \lambda^i(\varpi^i_{I^j_i}a)=\gamma^{0i}(\varpi^i_{I^j_i}a),\;i\in N,j\in M_i,a\in A(I^j_i),\\
				0<\gamma^i(2;\varpi^i_{I^j_i}a),\;0<\lambda^i(\varpi^i_{I^j_i}a),\;i\in N,j\in M_i,a\in A(I^j_i).
			\end{array}
		\end{equation}
		Suppose that $(\gamma^*(2),\lambda^*(2),\nu^*(2))$ is a solution to the system~(\ref{nfpe-log-equ-2}). We proceed by substituting the first set of equations into the third set, and summing over $a\in A(I^j_i)$ for each $i\in N,j\in M_i$, obtaining $\nu^{*i}_{I^j_i}(2) = 1$. Consequently, the first set of equations yields $ \lambda^{*i}(2;\varpi^i_{I^j_i}a) = \nu^{*i}_{I^j_i}(2) = 1$ for $i\in N,j\in M_i,a\in A(I^j_i)$. Substituting these results back into the third set, it follows that $\gamma^{*i}(2;\varpi^i_{I^j_i}a) = \gamma^{0i}(\varpi^i_{I^j_i}a)$ for $i\in N,j\in M_i,a\in A(I^j_i)$. The proof is complete.
	\end{proof}
	
	As shown in Lemma~\ref{nfpe-lem-log4}, the system~(\ref{nfpe-eqt-log1}) admits a unique solution at $t=2$. We then identify a connected component containing this solution that extends to intersect the $t=0$ level. To facilitate this, it is crucial to introduce Browder's fixed-point theorem~\cite{Browdercontinuityfixedpoints1960}.
	\begin{theorem}\label{nfpe-thm-log2}{\em Let $V$ be a nonempty, compact and convex subset of $\mathbb{R}^m$ and $f:V\times[0,1]\to V$ be a continuous function. Then the set $F =\{(v,t)\in V\times[0,1]|v=f(v,t)\}$ contains a connected set $F^c$ such that $V\times\{1\}\cap F^c\neq\emptyset$ and $V\times\{0\}\cap F^c\neq\emptyset$.}
	\end{theorem}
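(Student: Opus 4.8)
The plan is to derive Browder's theorem from Brouwer's fixed-point theorem, applied twice, via a contradiction argument. First I would observe that $F$ is a closed subset of the compact cylinder $V\times[0,1]$, hence compact, and that the end slices $F_0:=F\cap(V\times\{0\})$ and $F_1:=F\cap(V\times\{1\})$ are nonempty: $f(\cdot,0)$ and $f(\cdot,1)$ are continuous self-maps of the nonempty compact convex set $V$, so Brouwer supplies fixed points. The target is a connected subset of $F$ meeting both $F_0$ and $F_1$, equivalently a connected component of $F$ with that property, so I would assume no such component exists and work toward a contradiction.

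Under that assumption, since $F$ is compact metric its connected components coincide with its quasi-components, and a routine separation argument yields a clopen partition $F=K_0\sqcup K_1$ with $F_0\subseteq K_0$ and $F_1\subseteq K_1$: each component meeting $F_0$ misses the compact set $F_1$, hence so does a finite intersection of the clopen sets cutting out that component, and a finite subcover over the compact set $F_0$ assembles a clopen neighborhood $K_0$ of $F_0$ disjoint from $F_1$. Then $K_0,K_1$ are disjoint compact sets, with $K_0$ disjoint from $V\times\{1\}$ (as $K_0\cap K_1=\emptyset$ and $F_1\subseteq K_1$) and $K_1$ disjoint from $V\times\{0\}$; choosing a small $\delta>0$, the open $\delta$-neighborhoods $O_0\supseteq K_0$ and $O_1\supseteq K_1$ have disjoint closures, $\overline{O_0}$ avoids $V\times\{1\}$, $\overline{O_1}$ avoids $V\times\{0\}$, and $F\subseteq K_0\cup K_1\subseteq O_0\cup O_1$.

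The step I expect to be the main obstacle is turning this disconnection into a genuine contradiction, because naively projecting $F$ onto $[0,1]$ only shows that some slice carries fixed points in both $K_0$ and $K_1$, which is not in itself forbidden. The resolution I would use is to fold the deformation parameter into the map: pick (by Urysohn's lemma, or an explicit distance quotient) a continuous $\psi:V\times[0,1]\to[0,1]$ with $\psi\equiv 1$ on $\overline{O_0}$ and $\psi\equiv 0$ on $\overline{O_1}$, and set $h(v,t)=(f(v,t),\psi(v,t))$, a continuous self-map of the nonempty compact convex set $V\times[0,1]$. By Brouwer it has a fixed point $(v^*,t^*)$; then $f(v^*,t^*)=v^*$ forces $(v^*,t^*)\in F\subseteq O_0\cup O_1$, yet $\psi(v^*,t^*)=t^*$ is impossible, since on $O_0$ one has $\psi=1$ while $t<1$, and on $O_1$ one has $\psi=0$ while $t>0$. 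This contradiction establishes a connected component $F^c$ of $F$ meeting both $F_0$ and $F_1$, which, as $F_0\subseteq V\times\{0\}$ and $F_1\subseteq V\times\{1\}$, is the required connected set. Beyond this, the only care needed is in the compact-metric separation lemma and in verifying that the disjointness of $\overline{O_0},\overline{O_1}$ together with their avoidance of the two end slices is exactly what makes $\psi(v,t)\neq t$ hold on all of $F$.
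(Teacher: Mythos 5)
Your proof is correct, but there is nothing in the paper to compare it against: the paper does not prove this statement at all, it simply imports it as Browder's fixed-point theorem from the cited 1960 paper of Browder and uses it as a black box in the proof of Theorem~\ref{nfpe-thm-log3}. Your argument is the standard self-contained elementary derivation from Brouwer's theorem, and all the delicate points are handled properly: compactness of $F$ and nonemptiness of the end slices via Brouwer applied to $f(\cdot,0)$ and $f(\cdot,1)$; the separation step, which correctly invokes the coincidence of components and quasi-components in a compact metric space together with two compactness (finite intersection / finite subcover) arguments to produce the clopen partition $F=K_0\sqcup K_1$ with $F_0\subseteq K_0$, $F_1\subseteq K_1$; and the key trick of folding the deformation parameter into the map via a Urysohn-type function $\psi$ and applying Brouwer to $(v,t)\mapsto(f(v,t),\psi(v,t))$ on the compact convex set $V\times[0,1]$. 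Note that the orientation of $\psi$ is essential ($\psi\equiv 1$ on the neighborhood of the piece containing the $t=0$ fixed points and $\psi\equiv 0$ on the neighborhood of the piece containing the $t=1$ fixed points, so that a fixed point in $O_0$ is forced into $V\times\{1\}$, which $\overline{O_0}$ avoids); you have chosen it correctly, and with the reversed choice the contradiction would evaporate. Compared with Browder's original treatment, which rests on fixed-point index/degree arguments, your route buys elementarity (only Brouwer plus point-set topology) at the cost of a slightly fiddly separation lemma; either way the statement the paper needs is fully justified.
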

	Leveraging the results of Theorem~\ref{nfpe-thm-log2}, we arrive at Theorem~\ref{nfpe-thm-log3}.
	\begin{theorem}\label{nfpe-thm-log3}{\em There is a connected component in $\mathscr{C}_L$ intersecting both $\mathbb{R}^{n_0}\times\{2\}\times\mathbb{R}^{n_0}\times\mathbb{R}^{m_0}$ and $\mathbb{R}^{n_0}\times\{0\}\times\mathbb{R}^{n_0}\times\mathbb{R}^{m_0}$.}
	\end{theorem}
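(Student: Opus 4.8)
The plan is to recast the problem of tracking solutions of the polynomial system~(\ref{nfpe-eqt-log1}) as a fixed-point statement on a compact convex set and then invoke Browder's theorem (Theorem~\ref{nfpe-thm-log2}). First I reparametrize the homotopy variable by $\tau = t/2$, so that $\tau$ runs over $[0,1]$ with $\tau = 1$ at $t = 2$ and $\tau = 0$ at $t = 0$; the connected component claimed in the theorem is then obtained by pushing the connected set produced by Browder forward through $t = 2\tau$ and, where necessary, passing to its closure.

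For the core step I would realize the equilibria of $\Gamma_{s}^{l}(t)$ as fixed points of a continuous self-map on the strategy space. Fix a compact convex cube $K$ in $\mathbb{R}^{n_0}$ that contains $\Omega(t)$ for every $t\in[0,2]$---a short recursion on the linear system~(\ref{logexsfnecst}) shows that each coordinate of a member of $\Omega(t)$ lies in $[0,1]$, and $\Omega(t)$ is always nonempty because $\gamma^0\in\Omega(t)$. On $K\times[0,1]$ define $f$ by: project the argument onto the polytope $\Omega(2\tau)$---this is continuous jointly in the point and in $\tau$ since $\Omega(2\tau)$ varies Hausdorff-continuously and is always nonempty---and then return the profile whose $i$-th component is the unique optimal solution of player $i$'s strictly concave program~(\ref{nfpe-log-opt-1}) against the projected profile (strict concavity holds for $t\in(0,2]$ because the barrier coefficients $c(t)\gamma^{0i}(w^i_{I^j_i}a)$ and $\theta(t)\gamma^{0i}(w^i_{I^j_i}a)$ are strictly positive on the relevant ranges). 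Since the payoffs $g^i(\cdot,\hat\gamma^{-i})$ are polynomials and $\rho,\theta,c$ are continuous, $f$ is continuous on $K\times(0,1]$, and a fixed point of $f(\cdot,\tau)$ with $\tau\in(0,1]$ is precisely a strategy profile that is its own best response in $\Gamma^l_s(2\tau)$; the KKT conditions of~(\ref{nfpe-log-opt-1}) then attach to it a unique multiplier pair $(\lambda,\nu)$, so that the resulting tuple solves~(\ref{nfpe-eqt-log1}) and lies in $\widetilde{\mathscr{C}}_L\subseteq\mathscr{C}_L$. Browder's theorem yields a connected set $F^c\subseteq K\times[0,1]$ meeting $K\times\{1\}$ and $K\times\{0\}$; attaching the unique multipliers along $\{\tau>0\}$ gives a connected subset of $\mathscr{C}_L$ whose closure is connected, still lies in $\mathscr{C}_L$ (by compactness of $\mathscr{C}^R_L$, established in Appendix~\ref{nfpe-appen-2}, together with the analogous bound for $t\in[1,2]$), meets the slice $t=2$---there, by Lemma~\ref{nfpe-lem-log4}, only at the point with strategy $\gamma^0$ and all multipliers equal to $1$---and meets the slice $t=0$. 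The connected component of $\mathscr{C}_L$ containing it therefore intersects both $\mathbb{R}^{n_0}\times\{2\}\times\mathbb{R}^{n_0}\times\mathbb{R}^{m_0}$ and $\mathbb{R}^{n_0}\times\{0\}\times\mathbb{R}^{n_0}\times\mathbb{R}^{m_0}$, which is the claim.

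The main obstacle is continuity of $f$ at the endpoint $\tau = 0$, i.e.\ $t = 0$: there $c(t)\to 0$ and $\rho(t)\to 0$, the logarithmic-barrier weight vanishes, the program~(\ref{nfpe-log-opt-1}) degenerates into the linear program $\max_{\gamma^i}\sum_{j,a}\gamma^i(w^i_{I^j_i}a)g^i(w^i_{I^j_i}a,\hat\gamma^{-i})$ over realization plans, whose optimal face need not be a singleton, so the bare best-response map is neither single-valued nor continuous at $\tau = 0$. I would handle this either by replacing the best-response map with a uniformly continuous ``damped'' surrogate whose fixed points still coincide with the solutions of~(\ref{nfpe-eqt-log1}) for $t\in(0,2]$ and which extends continuously to the complementarity solutions at $t = 0$, or by applying Browder on $[\delta,1]$ for $\delta\downarrow 0$, extracting a Hausdorff limit of the resulting connected sets, and using the compactness of $\mathscr{C}^R_L$ to guarantee the limit still reaches $t = 0$. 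A secondary technicality is the $t$-dependence of the feasible region $\Omega(t)$---handled above by projection onto a continuously varying polytope, or alternatively by a recursively constructed $t$-continuous homeomorphism $\Omega(t)\cong\Lambda$ in analogy with the bijection $y(\cdot,\varepsilon)$ of Lemma~\ref{nfpe-lem-sfc4}; this, together with the a priori multiplier bounds, is where the two-stage design of $\rho$ and $\theta$ pays off, since it keeps $\Omega(t)$ nonempty throughout and prevents the barrier coefficients from collapsing before $t$ itself tends to $0$.
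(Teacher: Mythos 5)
Your overall strategy is the same as the paper's: realize solutions of the system~(\ref{nfpe-eqt-log1}) as fixed points of a continuous map parameterized by the homotopy variable, invoke Browder's theorem (Theorem~\ref{nfpe-thm-log2}) to obtain a connected set of fixed points joining the two end levels, then attach the unique KKT multipliers for $t>0$ and use the boundedness of $(\lambda,\nu)$ (Appendix~\ref{nfpe-appen-2}) to pass to a closure in $\mathscr{C}_L$ that still meets $t=0$. However, the step where the paper does its real work is exactly the step you flag but do not carry out: the best-response map of the program~(\ref{nfpe-log-opt-1}) is not single-valued, let alone continuous, at $t=0$, because the barrier weights $c(t)$ and $\theta(t)$ vanish and the problem degenerates to a linear program. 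Browder's theorem needs a map that is continuous on the \emph{entire} interval including the endpoint, so as written your construction of $f$ does not satisfy the hypotheses, and the theorem cannot be applied directly. Your ``damped surrogate'' is precisely the missing idea, but naming it is not constructing it: the paper's resolution is to add the proximal term $-\tfrac{1}{2}\sum_{j,a}(\gamma^i(t;w^i_{I^j_i}a)-\hat\gamma^i(w^i_{I^j_i}a))^2$ to the objective (problems~(\ref{nfpe-log-opt-3}) and~(\ref{logbsfgop3})), which keeps the problem strictly concave even at $t=0$, makes the resulting map $\varphi$ single-valued and continuous on all of $\Omega_L\times[0,2]$ (via the sensitivity theorem of Fiacco), and---crucially---does not change the fixed-point set, since at a fixed point $\hat\gamma=\gamma(t)$ the gradient of the proximal term vanishes and the optimality conditions collapse back to~(\ref{nfpe-eqt-log1}). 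Without this (or an equivalent device) your argument has a genuine gap at the decisive point.

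Your fallback (b)---apply Browder on $[\delta,1]$ and extract a Hausdorff limit of the connected sets as $\delta\downarrow 0$---is a plausible alternative route, but it too is only sketched: you would need to verify that the connected compacta $F^c_\delta$ admit a Hausdorff-convergent subsequence whose limit is connected, that the limit is contained in the fixed-point set for $\tau>0$ and in $\mathscr{C}_L$ (after lifting the multipliers, whose convergence is only subsequential) at $\tau=0$, and that it actually contains points at both levels $t=2$ and $t=0$. None of this is automatic from what you wrote, and the paper avoids it entirely by making the map continuous down to $t=0$ in one stroke. Secondary points (the reparametrization $\tau=t/2$, replacing $\Omega_L$ by a cube with a projection, the identification of fixed points of the composite map with best responses) are fine and essentially cosmetic variations on the paper's setup.
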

	\begin{proof}
		Let $\Omega_L=\cup_{t\in(0,2]}\Omega(t)$. For $(\hat{\gamma},t)\in \Omega_L\times(0,2]$, let $\varphi(\hat{\gamma},t)=(\gamma^i(t):i\in N)$, where $\gamma^i(t)$ represents the unique solution of the strictly convex optimization problem,
		\begin{equation}
			\label{nfpe-log-opt-3}
			\begin{array}{rl}
				\max\limits_{\gamma^i(t)} & (1-c(t))\sum\limits_{j\in M_i}\sum\limits_{a\in A(I^j_i)}\gamma^i(t;\varpi^i_{I^j_i}a)g^i(\varpi^i_{I^j_i}a,\hat\gamma^{-i}))\\
				&+c(t)\sum\limits_{(j,a)\in D_i}\gamma^{0i}(\varpi^i_{I^j_i}a)\ln(\gamma^i(t;\varpi^i_{I^j_i}a)-\rho(t)(1-\theta(t))\eta^{0i}(\varpi^i_{I^j_i}a))\\
				&+\theta(t)\sum\limits_{(j,a)\notin D_i}\gamma^{0i}(\varpi^i_{I^j_i}a)\ln(\gamma^i(t;\varpi^i_{I^j_i}a))-\frac{1}{2}\sum\limits_{j\in M_i}\sum\limits_{a\in A(I^j_i)}(\gamma^i(t;\varpi^i_{I^j_i}a)-\hat\gamma^i(\varpi^i_{I^j_i}a))^2\\
				\text{s.t.} & \sum\limits_{a\in A(I^j_i)}\gamma^i(t;\varpi^i_{I^j_i}a)-(1 - \theta(t))\gamma^i(t;\varpi^i_{I^j_i})-\theta(t)\gamma^{0i}(\varpi^i_{I^j_i})=0,\;j\in M_i.
			\end{array}
		\end{equation}
		For $(\hat{\gamma},0)\in \Omega_L\times\{0\}$, define $\varphi(\hat{\gamma},0)=(\gamma^i(0):i\in N)$, where $\gamma^i(0)$ corresponds to the unique solution of the strictly convex optimization problem,
		\begin{equation}
			\label{logbsfgop3}
			\begin{array}{rl}
				\max\limits_{\gamma^i(0)} & \sum\limits_{j\in M_i}\sum\limits_{a\in A(I^j_i)}\gamma^i(0;\varpi^i_{I^j_i}a)g^i(\varpi^i_{I^j_i}a,\hat\gamma^{-i})-\frac{1}{2}\sum\limits_{j\in M_i}\sum\limits_{a\in A(I^j_i)}(\gamma^i(0;\varpi^i_{I^j_i}a)-\hat\gamma^i(\varpi^i_{I^j_i}a))^2\\
				\text{s.t.} & \sum\limits_{a\in A(I^j_i)}\gamma^i(0;\varpi^i_{I^j_i}a)-\gamma^i(0;\varpi^i_{I^j_i})=0,\;j\in M_i.
			\end{array}
		\end{equation}According to Theorem 2.2.2 in~\cite{FiaccoIntroductionSensitivityStability1983}, the mapping $\varphi(\gamma, t)$ is continuous from $\Omega_L \times [0,2]$ into $\Omega_L$. Define the set $\mathscr{F} = \{(\gamma(t), t) \in \Omega_L \times [0,2] \mid \varphi(\gamma(t), t) = \gamma(t)\}$. Theorem~\ref{nfpe-thm-log2} guarantees the existence of a connected component of $\mathscr{F}$ that intersects both $\mathbb{R}^{n_0} \times \{2\}$ and $\mathbb{R}^{n_0} \times \{0\}$. We designate this connected component by $\mathscr{F}^c$, and specifically identify the subset corresponding to $t>0$ as $\widetilde{\mathscr{F}}^c$.
		
		By applying the optimality conditions to the problem~(\ref{nfpe-log-opt-3}), we obtain a polynomial system equivalent to the system~(\ref{nfpe-eqt-log1}). Consequently, for each $(\gamma(t), t) \in \widetilde{\mathscr{F}}^c$, there exists a unique pair $(\lambda, \nu)$ such that the system~(\ref{nfpe-eqt-log1}) holds. Define $\widetilde{\mathscr{C}}^c_L=\{(\gamma(t),t,\lambda,\nu)\in \widetilde{\mathscr{C}}_L|(\gamma(t),t)\in \widetilde{\mathscr{F}}^c\}$ and let $\mathscr{C}^c_L$ be its closure. From the previous arguments, it follows that $\mathscr{C}^c_L$ constitutes a connected component within $\mathscr{C}_L$ intersecting $\mathbb{R}^{n_0} \times \{2\} \times \mathbb{R}^{n_0} \times \mathbb{R}^{m_0}$. Let $\{(\gamma(t_k), t_k)\}_{k=1}^\infty \subseteq \widetilde{\mathscr{F}}^c$ be a sequence converging with $\lim_{k\to\infty} t_k = 0$. For each element $(\gamma(t_k), t_k)$, we associate a corresponding pair $(\lambda(t_k), \mu(t_k))$ such that $(\gamma(t_k), t_k, \lambda(t_k), \mu(t_k)) \in \widetilde{\mathscr{C}}^c_L$. Since $\widetilde{\mathscr{C}}^c_L$ is bounded, there exists a convergent subsequence of $\{(\gamma(t_k),t_k,\lambda(t_k),\mu(t_k))\}_{k=1}^{\infty}$. Therefore, the set $\mathscr{C}^c_L$ intersects $\mathbb{R}^{n_0} \times \{0\} \times \mathbb{R}^{n_0} \times \mathbb{R}^{m_0}$, which concludes the proof.
	\end{proof}
	Lemma~\ref{nfpe-lem-log4} asserts that the connected component discussed in Theorem~\ref{nfpe-thm-log3} is uniquely determined and intersects the level $t=2$ at the point $(\gamma^*(2),2,\lambda^*(2),\nu^*(2))$. In order to eliminate the impact of $\gamma^i(t;\varpi^i_{I^j_i}a)\lambda^i(\varpi^i_{I^j_i}a)=0$ for $i\in N,(j,a)\notin D_i$ on the differentiability of the path over the interval $(0,1]$, and reduce the number of variables for more efficient computation, we employ a variable substitution technique as outlined in Cao et al.~\cite{Caodifferentiablepathfollowingmethod2022}. Given $\tau_0>0$ and $\kappa_0>1$, define the following functions,
	\begin{small} \[\psi_1(v,r;\tau_0,\kappa_0)=\left(\frac{v+\sqrt{v^2+4\tau_0r}}{2}\right)^{\kappa_0} \text{ and } \psi_2(v,r;\tau_0,\kappa_0)=\left(\frac{-v+\sqrt{v^2+4\tau_0r}}{2}\right)^{\kappa_0}.\]\end{small}It can be deduced that $\psi_1(v,r;\tau_0,\kappa_0)\psi_2(v,r;\tau_0,\kappa_0)=(\tau_0r)^{\kappa_0}$. Since $\kappa_0>1$, both functions possess continuous differentiability over the domain $\mathbb{R}\times(0,\infty)$. For $x=(x^i(\varpi^i_{I^j_i}a):i\in N,j\in M_i,a\in A(I^j_i))\in \mathbb{R}^{n_0}$, we define $\gamma(x,t)=(\gamma^i(x,t;\varpi^i_{I^j_i}a):i\in N,j\in M_i,a\in A(I^j_i))$ and $\lambda(x,t)=(\lambda^i(x,t;\varpi^i_{I^j_i}a):i\in N,j\in M_i,a\in A(I^j_i))$, where
	\begin{equation}\label{nfpe-eqt-log3}\begin{array}{l}
			\gamma^i(x,t;\varpi^i_{I^j_i}a)=\left\{\begin{array}{ll}
				\rho(t)(1-\theta(t))\eta^{0i}(\varpi^i_{I^j_i}a)&\\\hspace{0.6cm}+\psi_1(x^i(\varpi^i_{I^j_i}a),c(t)^{1/\kappa_0}; \gamma^{0i}(\varpi^i_{I^j_i}a)^{1/\kappa_0}, \kappa_0) & (j,a)\in D_i,\\
				\psi_1(x^i(\varpi^i_{I^j_i}a),\theta(t)^{1/\kappa_0}; \gamma^{0i}(\varpi^i_{I^j_i}a)^{1/\kappa_0},\kappa_0) & (j,a)\notin D_i,
			\end{array}\right.\\
			\lambda^i(x,t;\varpi^i_{I^j_i}a)=\left\{\begin{array}{ll}
				\psi_2(x^i(\varpi^i_{I^j_i}a),c(t)^{1/\kappa_0};\gamma^{0i}(\varpi^i_{I^j_i}a)^{1/\kappa_0},\kappa_0) & (j,a)\in D_i,\\
				\psi_2(x^i(\varpi^i_{I^j_i}a),\theta(t)^{1/\kappa_0};\gamma^{0i}(\varpi^i_{I^j_i}a)^{1/\kappa_0}, \kappa_0) & (j,a)\notin D_i.
			\end{array}\right.
	\end{array}\end{equation}
	It is evident that $(\gamma^i(x,t;\varpi^i_{I^j_i}a)-\rho(t)(1-\theta(t))\eta^{0i}(\varpi^i_{I^j_i}a)) \lambda^i(x,t;\varpi^i_{I^j_i}a)=c(t)\gamma^{0i}(\varpi^i_{I^j_i}a)$ for $i\in N,(j,a)\in D_i$ and $\gamma^i(x,t;\varpi^i_{I^j_i}a) \lambda^i(x,t;\varpi^i_{I^j_i}a)=\theta(t)\gamma^{0i}(\varpi^i_{I^j_i}a)$ for $i\in N,(j,a)\notin D_i$. Let $\alpha=(\alpha(\varpi^i_{I^j_i}a):i\in N, j\in M_i, a\in A(I^j_i))\in\mathbb{R}^{n_0}$ be an arbitrary vector with $||\alpha||$ sufficiently small. By replacing $\gamma^i(t;\varpi^i_{I^j_i}a)$ and $\lambda^i(\varpi^i_{I^j_i}a)$ with $\gamma^i(x,t;\varpi^i_{I^j_i}a)$ and $ \lambda^i(x,t;\varpi^i_{I^j_i}a)$ in the system~(\ref{nfpe-eqt-log1}), and subtracting $c(t)(1-\theta(t))\alpha$ from the left-hand side of the first set of equations, we derive an equivalent system formulated with fewer variables and constraints,
	\begin{equation}\label{nfpe-eqt-log4}\begin{array}{l}
			(1-c(t))g^i(\varpi^i_{I^j_i}a,\gamma^{-i}(x,t))+ \lambda^i(x,t;\varpi^i_{I^j_i}a)-\nu^i_{I^j_i} + (1-\theta(t))\zeta^i_{I^j_i}(a)\\
			\hspace{3cm}-c(t)(1-\theta(t))\alpha(\varpi^i_{I^j_i}a)=0,\;i\in N,j\in M_i,a\in A(I^j_i),\\
			\sum\limits_{a\in A(I^j_i)}\gamma^i(x,t;\varpi^i_{I^j_i}a)-(1 - \theta(t))\gamma^i(x,t;\varpi^i_{I^j_i})-\theta(t)\gamma^{0i}(\varpi^i_{I^j_i})=0, i\in N,j\in M_i.\\
		\end{array}
	\end{equation}
	The parameter $\alpha$ is introduced to address degenerate cases and has no effect on the convergence analysis in Subsection~\ref{nfpe-sct-log1}. At $t=2$, the system~(\ref{nfpe-eqt-log4}) has a unique solution $(x^*(2),2,\nu^*(2))$ with $x^{*i}(2;\varpi^i_{I^j_i}a)=\gamma^{0i}(\varpi^i_{I^j_i}a)^{1/\kappa_0}-1$ and $\nu^{*i}_{I^j_i}(2)=1$. Let $\widetilde{\mathscr{P}}_L=\{(x,t,\nu)|(x,t,\nu)\text{ satisfies the system~(\ref{nfpe-eqt-log4}) with } 0<t\leq 2\}$ and $\mathscr{P}_L$ be the closure of $\widetilde{\mathscr{P}}_L$. We have the following theorem.
	\begin{theorem}\label{nfpe-thm-log4}{\em Given almost any $\alpha\in\mathbb{R}^{n_0}$ with sufficiently small $||\alpha||$, a smooth path can be identified in $\mathscr{P}_L$ that begins at $(x^*(2), 2, \nu^*(2))$ when $t = 2$ and leads to a normal-form perfect equilibrium as $t$ approaches zero.}
	\end{theorem}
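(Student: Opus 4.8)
The plan is to show that, for a generic choice of $\alpha$, the closure $\mathscr{P}_L$ of the solution set of~(\ref{nfpe-eqt-log4}) is (locally) a one-dimensional $C^1$ manifold, to isolate the connected component through the unique point at $t=2$ supplied by Lemma~\ref{nfpe-lem-log4}, and to argue that this component must run down to the level $t=0$, where Theorem~\ref{nfpe-thm-log5} converts its endpoint into a normal-form perfect equilibrium. The first task is the regularity step. Write $\Phi(x,t,\nu;\alpha)$ for the left-hand side of~(\ref{nfpe-eqt-log4}), regarded as a map $\{(x,t,\nu)\mid 0<t<2\}\times\mathbb{R}^{n_0}\to\mathbb{R}^{n_0+m_0}$. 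One first checks that $\Phi$ is continuously differentiable: this is where the $C^1$ piecewise-polynomial forms of $\rho,\theta$ and of $c(t)=\exp(1-1/\rho(t))$, together with the $C^1$-smoothness of $\psi_1,\psi_2$ on $\mathbb{R}\times(0,\infty)$ for $\kappa_0>1$ and the strict positivity of the $\gamma$-coordinates along solutions (established next), make all compositions $C^1$, cf.\ Cao et al.~\cite{Caodifferentiablepathfollowingmethod2022}. Then one shows that $0$ is a regular value of $\Phi$: the derivative of the first group of equations in $\alpha$ is $-c(t)(1-\theta(t))I_{n_0}$, nonsingular for every $t\in(0,2)$ since $c(t)>0$ and $\theta(t)<1$ there; and, after noting that $x\mapsto\gamma(x,t;\cdot)$ is a local $C^1$ diffeomorphism at any solution (its Jacobian is diagonal with positive entries $\partial_v\psi_1$, using that each $\gamma^i(x,t;w^i_{I^j_i}a)>0$ strictly — a consequence of $\eta^{0i}>0$, $\rho(t)(1-\theta(t))>0$ and the recursive balance constraints~(\ref{logexsfnecst}), which force every coordinate off zero once the $D_i$-coordinates are), the derivative of the second group in $x$ is, up to this diffeomorphism, the flow-balance operator of~(\ref{logexsfnecst}), whose $m_0$ rows are linearly independent as for the sequence form in general. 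A Schur-complement/block argument then yields that $D_{(x,t,\nu,\alpha)}\Phi$ has full rank $n_0+m_0$ at every zero. By the parametric transversality (Sard) theorem, for almost every $\alpha$ with $\|\alpha\|$ small, $0$ is a regular value of $\Phi(\cdot;\alpha)$, so $\mathscr{P}_L\cap\{0<t<2\}$ is a $C^1$ one-manifold.

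Next, the $t=2$ endpoint. By Lemma~\ref{nfpe-lem-log4} and the uniqueness already recorded for~(\ref{nfpe-eqt-log4}), $\mathscr{P}_L\cap\{t=2\}=\{(x^*(2),2,\nu^*(2))\}$, and the $\alpha$-term disappears there because $1-\theta(2)=0$. At $t=2$ the system decouples into $\lambda^i(x,2;w^i_{I^j_i}a)=\nu^i_{I^j_i}$ and $\sum_{a\in A(I^j_i)}\gamma^i(x,2;w^i_{I^j_i}a)=\gamma^{0i}(w^i_{I^j_i})$, and one computes that $D_{(x,\nu)}\Phi$ is nonsingular at $(x^*(2),2,\nu^*(2))$ (its Schur complement is diagonal with entries $\sum_{a}\partial_v\psi_1/\partial_v\psi_2<0$). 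By the implicit function theorem the solution set near this point is a $C^1$ arc transversal to $\{t=2\}$ and entering $t<2$; hence $\mathscr{P}_L$ is a one-manifold with boundary there, with $(x^*(2),2,\nu^*(2))$ a boundary point.

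Now tracing and termination. Let $\mathscr{P}^c_L$ be the connected component of $\mathscr{P}_L$ containing $(x^*(2),2,\nu^*(2))$. By the two previous steps it is a connected, compact (boundedness of $\mathscr{P}_L$ from the appendix, plus closedness), $C^1$ one-manifold with boundary contained in $\{t=0\}\cup\{t=2\}$; possessing a boundary point at $t=2$, it is diffeomorphic to a closed interval rather than a circle. Its second endpoint cannot be at $t=2$ (the solution there is unique and is already one endpoint) and cannot run off to infinity, so it lies on $\{t=0\}$; that $\mathscr{P}^c_L$ in fact reaches $\{t=0\}$ is confirmed by re-running the Browder-type connectedness argument behind Theorem~\ref{nfpe-thm-log3} for the $\alpha$-perturbed strictly convex programs (adding the linear term $-c(t)(1-\theta(t))\sum\alpha(w^i_{I^j_i}a)\gamma^i(t;w^i_{I^j_i}a)$ to the objective preserves strict convexity). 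Following $\mathscr{P}^c_L$ from $(x^*(2),2,\nu^*(2))$ therefore produces a smooth path $(x^*(t),t,\nu^*(t))$ with $t$ decreasing to $0$. Finally, pick $t_k\downarrow0$ along this path; for large $k$, $t_k\in(0,1]$, so $\theta(t_k)=0$, the profile $\gamma^*(t_k)=\gamma(x^*(t_k),t_k;\cdot)$ is a realization plan in $\text{int}(\Lambda)$ (all coordinates positive), it satisfies~(\ref{nfpe-eqt-log4}), and because the $\alpha$-term is $O(c(t_k))\to 0$ (so $\alpha$ does not affect the convergence analysis of Subsection~\ref{nfpe-sct-log1}), Lemma~\ref{nfpe-lem-log3} and Theorem~\ref{nfpe-thm-log5} apply and give that every limit point of $\{\sigma(\gamma^*(t_k))\}$ is a normal-form perfect equilibrium of $\Gamma$ — the equilibrium to which the path leads.

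I expect the main obstacle to be the regularity step: showing that the augmented Jacobian $D_{(x,t,\nu,\alpha)}\Phi$ has full rank $n_0+m_0$ at \emph{every} zero, which must stitch together the surjectivity that $\alpha$ contributes on the first block, the local-diffeomorphism property of the substitution (itself resting on strict positivity of all $\gamma$-coordinates along solutions, a consequence of $\eta^0>0$ and the recursive constraints~(\ref{logexsfnecst})), and the full-rank flow-balance structure of~(\ref{logexsfnecst}) on the second block. A secondary, more delicate point — and the reason the specific forms of $\rho,\theta,c$ and the exponent $\kappa_0>1$ are chosen — is verifying the $C^1$-smoothness of $\Phi$ across $t=1$, where the logarithmic barrier on the $(j,a)\notin D_i$ sequences switches on and the strategy space passes from $\Omega(t)$ to the realization-plan space.
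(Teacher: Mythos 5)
Your proposal follows essentially the same route as the paper's proof: verify that the Jacobian of the map defined by system~(\ref{nfpe-eqt-log4}) has full row rank (both on $\mathbb{R}^{n_0}\times(0,2)\times\mathbb{R}^{m_0}$ and at $t=2$, exactly as in Appendix~\ref{nfpe-appen-1}), invoke the transversality theorem to make zero a regular value of $p_\alpha$ for almost all small $\alpha$, apply the implicit function theorem together with the Browder-based connected component of Theorem~\ref{nfpe-thm-log3} and the boundedness from Appendix~\ref{nfpe-appen-2} to obtain a smooth path from $(x^*(2),2,\nu^*(2))$ down to $t=0$, and conclude via Theorem~\ref{nfpe-thm-log5} that its limit is a normal-form perfect equilibrium. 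The extra detail you supply (the compact one-manifold-with-boundary bookkeeping and the $O(c(t))$ estimate on the $\alpha$-term) is consistent with, and elaborates on, the paper's argument rather than departing from it.
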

	\begin{proof}
		Let $p(x,t,\nu;\alpha)$ denote the left-hand sides of the equations in the system~(\ref{nfpe-eqt-log4}), and define $p_\alpha(x,t,\nu) = p(x,t,\nu;\alpha)$ when $\alpha$ is treated as a constant. The function $p(x,t,\nu;\alpha)$ is continuously differentiable over $\mathbb{R}^{n_0}\times(0,2)\times\mathbb{R}^{m_0}\times\mathbb{R}^{n_0}$. As demonstrated in Appendix~\ref{nfpe-appen-1}, its Jacobian matrix has full-row rank in this region. By using the transversality theorem as stated by Eaves and Schmedders~\cite{EavesGeneralequilibriummodels1999}, one can prove that zero is a regular value of $p_\alpha(x,t,\nu)$ over $\mathbb{R}^{n_0}\times(0,2)\times\mathbb{R}^{m_0}$ for almost any $\alpha$ with $||\alpha||<\epsilon$, where $\epsilon$ is a sufficiently small positive constant.
		
		We select an appropriate parameter $\alpha$ such that zero is a regular value of $p_\alpha(x,t,\nu)$ over $\mathbb{R}^{n_0}\times(0,2)\times\mathbb{R}^{m_0}$. Applying the implicit function theorem, it follows that the component described in Theorem~\ref{nfpe-thm-log3} defines a smooth path in $\mathscr{P}_L$, originating at $((x^*(2), 2,\nu^*(2))$ and terminates at $t=0$. As demonstrated in Appendix~\ref{nfpe-appen-1}, the regularity of zero is preserved at $t=2$ within $\mathbb{R}^{n_0}\times\mathbb{R}^{m_0}$, thereby precluding tangential intersections with $\mathbb{R}^{n_0}\times\{2\}\times\mathbb{R}^{m_0}$. Finally, Theorem~\ref{nfpe-thm-log5} confirms that this smooth path culminates in a normal-form perfect equilibrium. This completes the proof.
	\end{proof}

	\section{Extension of Harsanyi's tracing procedures to the sequence form}\label{nfpe-sec-prm5}
		According to the analysis in \cite{vandenelzenAlgorithmicApproachTracing1999}, the tracing procedure proposed by von Stengel et al. \cite{VonStengelComputingNormalForm2002} for two-player extensive-form games is equivalent to Harsanyi's linear tracing procedure when applied in sequence form. In subsection~\ref{nfpe-sec-htp1}, we extend this tracing procedure to accommodate $n$-player games and develop an alternative differentiable path-following method for computing normal form perfect equilibria. Furthermore, to improve the smoothness of the equilibrium paths, we extend Harsanyi's logarithmic tracing procedure to the sequence form in subsection~\ref{nfpe-sec-htp2}, providing an additional differentiable path-following method.
	\subsection{Harsanyi's linear tracing procedure in sequence form}\label{nfpe-sec-htp1}
	Let $p^0=(p^{0i}(\varpi^i):i\in N,\varpi^i\in W^i)$ be a given prior realization plan profile. For $\gamma\in \Lambda$ and $t\in [0,1]$, define $y(\gamma,t)=(y(\gamma^i,t;\varpi^i):i\in N,\varpi^i\in W^i)$ with $y(\gamma^i,t;\varpi^i)=(1-t)\gamma^i(\varpi^i)+tp^{0i}(\varpi^i)$. This construction ensures $y(\gamma,t)\in \Lambda$. For $t\in(0,1]$, we introduce an artificial game $\Gamma_{s}^v(t)$ in sequence form, where each player $i$ finds their best response to $\hat\gamma\in \Lambda$ by solving the following linear optimization problem,
	\begin{equation}
		\label{nfpe-opt-ev1}
		\begin{array}{rl}
			\max\limits_{\gamma^i} &\sum\limits_{j\in M_i}\sum\limits_{a\in A(I^j_i)}\gamma^i(\varpi^i_{I^j_i}a)g^i(\varpi^i_{I^j_i}a,y(\hat\gamma^{-i},t))\\
			\text{s.t.}&\sum\limits_{a\in A(I^j_i)}\gamma^i(\varpi^i_{I^j_i}a)-\gamma^i(\varpi^i_{I^j_i})=0,\;j\in M_i,\\
			&0\le \gamma^i(\varpi^i_{I^j_i}a),\;(j,a)\in  D_i.
		\end{array}
	\end{equation}
	By applying the optimality conditions to the problem (\ref{nfpe-opt-ev1}) and incorporating the fixed-point argument $\hat\gamma=\gamma$, we derive the polynomial equilibrium system of $\Gamma_{s}^v(t)$,
	\begin{equation}\label{nfpe-eqt-ev1}\begin{array}{l}
			g^i(\varpi^i_{I^j_i}a,y(\gamma^{-i},t))+ \lambda^i(\varpi^i_{I^j_i}a)
			-\nu^i_{I^j_i}=0,\;i\in N,(j,a)\in D_i,\\
			
			g^i(\varpi^i_{I^j_i}a,y(\gamma^{-i},t))-\nu^i_{I^j_i}+\zeta^i_{I^j_i}(a)=0,\;i\in N,(j,a)\notin D_i,\\
			
			\sum\limits_{a\in A(I^j_i)}\gamma^i(\varpi^i_{I^j_i}a)-\gamma^i(\varpi^i_{I^j_i})=0,\;i\in N,j\in M_i,\\
			
			\gamma^i(\varpi^i_{I^j_i}a) \lambda^i(\varpi^i_{I^j_i}a)=0,\;0\le\gamma^i(\varpi^i_{I^j_i}a),\;0\le \lambda^i(\varpi^i_{I^j_i}a),\;i\in N,(j,a)\in D_i,
		\end{array}
	\end{equation}
	where
	$\zeta^i_{I^j_i}(a)=\sum_{j_q\in M_i(\varpi^i_{I^j_i}a)}\nu^i_{I^{j_q}_i}$. As a result, the solution $\gamma^*$ solves the optimization problem~(\ref{nfpe-opt-ev1}) against itself if and only if a pair $(\lambda^*, \nu^*)$ exists together with $\gamma^*$ that collectively satisfies the system~(\ref{nfpe-eqt-ev1}). Referring to the proof of Lemma~\ref{nfpe-lem-sfc4}, we deduce that the artificial game $\Gamma_{s}^v(t)$ can be equivalently reformulated as the perturbed game $\Gamma_s(t)$ with the perturbation $\eta(t)=(\eta^i(t;\varpi^i):i\in N,\varpi^i\in W^i)$ defined by $\eta^i(t;\varpi^i)=tp^{0i}(\varpi^i)$. The strategy profile $\gamma^*$ is a Nash equilibrium of $\Gamma_{s}^v(t)$ whenever $y(\gamma^*,t)$ constitutes a Nash equilibrium in $\Gamma_s(t)$. This leads to the following theorem.
	
	\begin{theorem}\label{nfpe-thm-ev1}{\em
			Consider the sequence $\{y(\gamma^{*k},t_k)\}_{k=1}^{\infty}$, where each $\gamma^{*k}$ serves as a Nash equilibrium of $\Gamma_{s}^v(t_k)$ with $t_k\in(0,1]$ and $\lim_{k\to \infty}t_k=0$. Then every limit point of the sequence $\{\sigma(y(\gamma^{*k},t_k))\}_{k=1}^{\infty}$ yields a normal-form perfect equilibrium.
		}
	\end{theorem}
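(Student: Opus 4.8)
The plan is to derive the theorem directly from the sequence-form characterization in Theorem~\ref{nfpe-the-sfc1}, using the equivalence recorded in the paragraph just above: the bijection $y(\cdot,t)$ carries the artificial game $\Gamma_s^v(t)$ onto the perturbed game $\Gamma_s(t)$ with perturbation $\eta(t)=t p^0$. The first thing to verify is that, for each $t_k\in(0,1]$, the vector $\eta(t_k)=(t_kp^{0i}(w^i):i\in N,w^i\in W^i)$ is an admissible perturbation, i.e.\ it satisfies the system~(\ref{nfpe-equ-cha1}) with $\varepsilon=t_k$. The recursive equalities hold because $p^0$ is a realization plan profile (so $\sum_{a\in A(I^j_i)}p^{0i}(w^i_{I^j_i}a)=p^{0i}(w^i_{I^j_i})$); the strict inequality $\eta^i(t_k;w^i_{I^j_i}a)>0$ holds since $t_k>0$ and the prior is taken totally mixed, $p^0\in\text{int}(\Lambda)$; and $\eta^i(t_k;w^i_{I^j_i}a)=t_kp^{0i}(w^i_{I^j_i}a)\le t_k$ holds because realization-plan values never exceed $p^{0i}(\emptyset)=1$. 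Hence $\Gamma_s(t_k)$ is a legitimate perturbed game in the sense of Section~\ref{nfpe-sec-prm2}, and by the equivalence between the problems~(\ref{nfpe-opt-ev1}) and~(\ref{nfpe-opt-sfc2}) established in the proof of Lemma~\ref{nfpe-lem-sfc4} (noting $\eta^i(t_k;\emptyset)=t_k$), each $y(\gamma^{*k},t_k)$ is a Nash equilibrium of $\Gamma_s(t_k)$.

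Next I would run the limiting argument. Fix an arbitrary limit point $\sigma^*$ of $\{\sigma(y(\gamma^{*k},t_k))\}_{k=1}^{\infty}$ and pass to a subsequence, kept in the same notation, along which $\sigma(y(\gamma^{*k},t_k))\to\sigma^*$. This is well posed because $y(\gamma^{*k},t_k)=(1-t_k)\gamma^{*k}+t_kp^0\in\text{int}(\Lambda)$ whenever $t_k>0$, so $\sigma(y(\gamma^{*k},t_k))$ is defined and $(\sigma(y(\gamma^{*k},t_k)),y(\gamma^{*k},t_k))\in T$. Since $\Lambda$ is compact, pass to a further subsequence with $y(\gamma^{*k},t_k)\to\gamma^*$ for some $\gamma^*\in\Lambda$; as $T$ is closed, $(\sigma^*,\gamma^*)\in T$.

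Finally I would invoke Theorem~\ref{nfpe-the-sfc1} with $\varepsilon^k=t_k$, the perturbed games $\Gamma_s(t_k)$, and the Nash equilibria $\gamma^*(\varepsilon^k)=y(\gamma^{*k},t_k)$: since $\lim_{k\to\infty}t_k=0$ and $\lim_{k\to\infty}y(\gamma^{*k},t_k)=\gamma^*$, the sufficiency direction of that theorem gives that $\sigma^*$ is a normal-form perfect equilibrium of $\Gamma$. Because $\sigma^*$ was an arbitrary limit point, this completes the argument.

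I do not expect a genuinely hard step here; the proof mirrors that of Theorem~\ref{nfpe-thm-log5} but is in fact shorter, since $\Gamma_s^v(t)$ is itself a perturbed game rather than a logarithmic-barrier relaxation, so no Hoffman-type error bound (Lemma~\ref{nfpe-lem-log3}) is required. The one point that needs care is the first step: verifying that $\eta(t_k)=t_kp^0$ meets all three conditions of~(\ref{nfpe-equ-cha1}), and in particular that the prior $p^0$ must be completely mixed so that the strict-positivity condition holds and $\sigma(\cdot)$ is defined along the whole sequence.
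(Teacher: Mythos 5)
Your proposal is correct and follows essentially the same route the paper intends: the paper derives Theorem~\ref{nfpe-thm-ev1} directly from the preceding observation that $\Gamma_s^v(t)$ is the perturbed game $\Gamma_s(t)$ with $\eta^i(t;w^i)=tp^{0i}(w^i)$, so that $y(\gamma^{*k},t_k)$ is a Nash equilibrium of $\Gamma_s(t_k)$ and the sufficiency direction of Theorem~\ref{nfpe-the-sfc1} applies along a convergent subsequence. Your explicit check that $\eta(t_k)=t_kp^0$ satisfies~(\ref{nfpe-equ-cha1}) (requiring the prior $p^0$ to be totally mixed) is a detail the paper leaves implicit, not a different argument.
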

	To identify a unique starting point, we introduce a logarithmic term to the problem~(\ref{nfpe-opt-ev1}) that extends the artificial game $\Gamma_{s}^v(t)$ to $t\in(0,2]$, thereby giving rise to the following convex optimization problem,
	\begin{equation}
		\label{nfpe-opt-hlt1}
		\begin{array}{rl}
			\max\limits_{\gamma^i(t)} & (1-\theta(t))\sum\limits_{j\in M_i}\sum\limits_{a\in A(I^j_i)}\gamma^i(t;\varpi^i_{I^j_i}a)g^i(\varpi^i_{I^j_i}a,y(\hat\gamma^{-i},\rho(t)))\\
			&-\theta(t)\sum\limits_{j\in M_i}\sum\limits_{a\in A(I^j_i)}\gamma^{0i}(\varpi^i_{I^j_i}a)\ln\gamma^i(t;\varpi^i_{I^j_i}a)\\
			
			\text{s.t.} & \sum\limits_{a\in A(I^j_i)}\gamma^i(t;\varpi^i_{I^j_i}a)-(1 - \theta(t))\gamma^i(t;\varpi^i_{I^j_i})-\theta(t)\gamma^{0i}(\varpi^i_{I^j_i})=0,\;j\in M_i,\\
			
			& 0\le\gamma^i(t;\varpi^i_{I^j_i}a),\;(j,a)\in D_i.
		\end{array}
	\end{equation}
	Applying the optimality conditions to the problem~(\ref{nfpe-opt-hlt1}), together with $\hat\gamma=\gamma(t)$, yields a polynomial equilibrium system. This system is specified by~(\ref{nfpe-eqt-hlt1}) for $t\in (1,2]$ and is equivalent to~(\ref{nfpe-eqt-ev1}) for $t\in (0,1]$.  
	\begin{equation}\label{nfpe-eqt-hlt1}\begin{array}{l}
			(1-\theta(t))g^i(\varpi^i_{I^j_i}a,y(\gamma^{-i}(t),\rho(t)))+ \lambda^i(\varpi^i_{I^j_i}a)\\
			\hspace{3.7cm}-\nu^i_{I^j_i}+(1 - \theta(t))\zeta^i_{I^j_i}(a)=0,\;i\in N,j\in M_i,a\in A(I^j_i),\\
			\sum\limits_{a\in A(I^j_i)}\gamma^i(t;\varpi^i_{I^j_i}a)-(1 - \theta(t))\gamma^i(t;\varpi^i_{I^j_i})-\theta(t)\gamma^{0i}(\varpi^i_{I^j_i})=0,\;j\in M_i,\\
			\gamma^i(t;\varpi^i_{I^j_i}a) \lambda^i(\varpi^i_{I^j_i}a)=\theta(t)\gamma^{0i}(\varpi^i_{I^j_i}a),\;0<\gamma^i(t;\varpi^i_{I^j_i}a),\;i\in N,j\in M_i,a\in A(I^j_i).
		\end{array}
	\end{equation}
	
	In order to overcome the non-differentiability induced by the boundary constraint conditions when $t\in (0,1]$, we implement a variable substitution. For $x=(x^i(\varpi^i_{I^j_i}a):i\in N,j\in M_i,a\in A(I^j_i))\in \mathbb{R}^{n_0}$, we define $\gamma(x,t)=(\gamma^i(x,t;\varpi^i_{I^j_i}a):i\in N,j\in M_i,a\in A(I^j_i))$ and $\lambda(x,t)=(\lambda^i(x,t;\varpi^i_{I^j_i}a):i\in N,j\in M_i,a\in A(I^j_i))$,  where
	\begin{equation*}\label{nfpe-eqt-hlt2}\begin{array}{l}
			\gamma^i(x,t;\varpi^i_{I^j_i}a)=\psi_1(x^i(\varpi^i_{I^j_i}a),\theta(t)^{1/\kappa_0}; \gamma^{0i}(\varpi^i_{I^j_i}a)^{1/\kappa_0},\kappa_0)\\
			\lambda^i(x,t;\varpi^i_{I^j_i}a)=\psi_2(x^i(\varpi^i_{I^j_i}a),\theta(t)^{1/\kappa_0};\gamma^{0i}(\varpi^i_{I^j_i}a)^{1/\kappa_0},\kappa_0),\;i\in N,j\in M_i,a\in A(I^j_i),
	\end{array}\end{equation*}
    and $\gamma^i(x,t;\emptyset)=1$. Consequently, we observe that
	$\gamma^i(x,t;\varpi^i_{I^j_i}a)\lambda^i(x,t;\varpi^i_{I^j_i}a)=\theta(t)\gamma^{0i}(\varpi^i_{I^j_i}a)$ holds for $i\in N,j\in M_i,a\in A(I^j_i)$. Substituting $\gamma(x,t)$ and $\lambda(x,t)$ into the system (\ref{nfpe-eqt-hlt1}) for $\gamma(t)$ and $\lambda$ and subtracting the expression $c(t)(1-\theta(t))\alpha$ from the obtained system, we reach the system (\ref{nfpe-eqt-hlt3}),
	\begin{equation}\label{nfpe-eqt-hlt3}\begin{array}{l}
			(1-\theta(t))g^i(\varpi^i_{I^j_i}a,y(\gamma^{-i}(x,t),\rho(t)))+\lambda^i(x,t;\varpi^i_{I^j_i}a)\\
			\hspace{0.4cm}-\nu^i_{I^j_i}+(1-\theta(t))\zeta^i_{I^j_i}(a)-c(t)(1-\theta(t))\alpha(\varpi^i_{I^j_i}a)=0,\;i\in N,j\in M_i,a\in A(I^j_i),\\
			\sum\limits_{a\in A(I^j_i)}\gamma^i(x,t;\varpi^i_{I^j_i}a)-(1 - \theta(t))\gamma^i(x,t;\varpi^i_{I^j_i})-\theta(t)\gamma^{0i}(\varpi^i_{I^j_i})=0,\;i\in N,j\in M_i.
		\end{array}
	\end{equation}
	At $t=2$, the system possesses a unique solution represented by $(x^*(2),\nu^*(2))$. The components are expressed as  $x^{*i}(2;\varpi^i_{I^j_i}a)=\gamma^{0i}(\varpi^i_{I^j_i}a)^{1/\kappa_0}-1\text{ for }i\in N,j\in M_i,a\in A(I^j_i)$, and $\nu^{*i}_{I^j_i}(2)=1\text{ for }i\in N,j\in M_i$.
	
	Define $\widetilde{\mathscr{P}}_V=\{(x,\nu,t)|(x,\nu,t)\text{ satisfies the system~(\ref{nfpe-eqt-hlt3}) with } 0<t\leq 2\}$, and let $\mathscr{P}_V$ denote its closure. By employing the transversality theorem alongside the implicit function theorem, it follows that for almost any $\alpha\in\mathbb{R}^{n_0}$ with sufficiently small $||\alpha||$, there exists a smooth path within $\mathscr{P}_V$. This path originates at $(x^*(2),\nu^*(2),2)$ when $t=2$, and as $t$ tends to zero, the limit of $\sigma(y(\gamma(x^*(t),t),\rho(t)))$ with $(x^*(t),t)$ along this smooth path, converges to a normal-form perfect equilibrium.
	\subsection{Harsanyi's logarithmic tracing procedure in sequence form}\label{nfpe-sec-htp2}
	Harsanyi's logarithmic tracing procedure aims to approximate the piecewise equilibrium path from the linear tracing method with a smooth path, thereby improving its efficiency. This section extends the procedure to the sequence form.
	
	Let $\varepsilon_0$ be a positive constant and $\delta=(\delta^{i}(\varpi^i):i\in N,\varpi^i\in W^i)$ denote the centroid realization plan profile with $\delta^i(\varpi^i_{I^j_i}a)=\delta^i(\varpi^i_{I^j_i})/|A(I^j_i)|$. By expanding the influence interval of the logarithmic term to $(0,2]$ in the problem (\ref{nfpe-opt-hlt1}), we derive a new artificial game $\Gamma_s^h(t)$, in which each player $i\in N$ finds their optimal strategy by solving the convex optimization problem,
	\begin{equation}
		\label{nfpe-opt-hlg1}
		\begin{array}{rl}
			\max\limits_{\gamma^i(t)} & (1-\theta(t))\sum\limits_{j\in M_i}\sum\limits_{a\in A(I^j_i)}\gamma^i(t;\varpi^i_{I^j_i}a)g^i(\varpi^i_{I^j_i}a,y(\hat\gamma^{-i},\rho(t)))\\
			&+\sum\limits_{j\in M_i}\sum\limits_{a\in A(I^j_i)}(\theta(t)\gamma^{0i}(\varpi^i_{I^j_i}a)+c(t)(1-\theta(t))\varepsilon_0\delta^i(\varpi^i_{I^j_i}a))\ln\gamma^i(t;\varpi^i_{I^j_i}a)\\
			
			\text{s.t.} & \sum\limits_{a\in A(I^j_i)}\gamma^i(t;\varpi^i_{I^j_i}a)-(1 - \theta(t))\gamma^i(t;\varpi^i_{I^j_i})-\theta(t)\gamma^{0i}(\varpi^i_{I^j_i})=0,\;j\in M_i.
		\end{array}
	\end{equation}
	When $t\in (0,1]$, each player $i$ is incentivized to adjust their strategy closer to the centroid strategy $\delta^i$, and (\ref{nfpe-opt-hlg1}) increasingly approximates (\ref{nfpe-opt-hlt1}) as $\varepsilon_0$ goes to $0$. Applying the optimality conditions to the problem~(\ref{nfpe-opt-hlg1}) and enforcing $\hat\gamma = \gamma(t)$ produces the following polynomial equilibrium system,
	\begin{equation}\label{nfpe-eqt-hlg1}\begin{array}{l}
			(1-\theta(t))g^i(\varpi^i_{I^j_i}a,y(\gamma^{-i}(t),\rho(t)))+ \lambda^i(\varpi^i_{I^j_i}a)-\nu^i_{I^j_i}\\
			\hspace{3.5cm}+(1 - \theta(t))\zeta^i_{I^j_i}(a)=0,\;i\in N,j\in M_i,a\in A(I^j_i),\\
			\sum\limits_{a\in A(I^j_i)}\gamma^i(t;\varpi^i_{I^j_i}a)-(1 - \theta(t))\gamma^i(t;\varpi^i_{I^j_i})-\theta(t)\gamma^{0i}(\varpi^i_{I^j_i})=0,\;i\in N,j\in M_i,\\
			\gamma^i(t;\varpi^i_{I^j_i}a) \lambda^i(\varpi^i_{I^j_i}a)=\theta(t)\gamma^{0i}(\varpi^i_{I^j_i}a)+c(t)(1-\theta(t))\varepsilon_0\delta^i(\varpi^i_{I^j_i}a),\\
			\hspace{3cm}0<\gamma^i(t;\varpi^i_{I^j_i}a),\;i\in N,j\in M_i,a\in A(I^j_i).
		\end{array}
	\end{equation}
	For $x=(x^i(\varpi^i_{I^j_i}a):i\in N,j\in M_i,a\in A(I^j_i))\in \mathbb{R}^{n_0}$, we define $\gamma(x,t)=(\gamma^i(x,t;\varpi^i_{I^j_i}a):i\in N,j\in M_i,a\in A(I^j_i))$ and $\lambda(x,t)=(\lambda^i(x,t;\varpi^i_{I^j_i}a):i\in N,j\in M_i,a\in A(I^j_i))$,  where
	\begin{equation}\label{/}\begin{array}{l}
			\gamma^i(x,t;\varpi^i_{I^j_i}a)=\psi_1(x^i(\varpi^i_{I^j_i}a),(\theta(t)\gamma^{0i}(\varpi^i_{I^j_i}a)+c(t)(1-\theta(t))\varepsilon_0\delta^i)^{1/\kappa_0};1,\kappa_0),\\
			\lambda^i(x,t;\varpi^i_{I^j_i}a)=\psi_2(x^i(\varpi^i_{I^j_i}a),(\theta(t)\gamma^{0i}(\varpi^i_{I^j_i}a)+c(t)(1-\theta(t))\varepsilon_0\delta^i)^{1/\kappa_0};1,\kappa_0).\\
	\end{array}\end{equation}
	Substituting $\gamma(x,t)$ and $\lambda(x,t)$ into the system (\ref{nfpe-eqt-hlg1}) for $\gamma(t)$ and $\lambda$, and subsequently subtracting the expression $c(t)(1-\theta(t))\alpha$ from the equivalent system, we obtain the system (\ref{nfpe-eqt-hlg3}),
	\begin{equation}\label{nfpe-eqt-hlg3}\begin{array}{l}
			(1-\theta(t))g^i(\varpi^i_{I^j_i}a,y(\gamma^{-i}(x,t),\rho(t)))+ \lambda^i(x,t;\varpi^i_{I^j_i}a)\\
			\hspace{0.6cm}-\nu^i_{I^j_i}+(1 - \theta(t))\zeta^i_{I^j_i}(a)-c(t)(1-\theta(t))\alpha(\varpi^i_{I^j_i}a)=0,\;i\in N,,j\in M_i,a\in A(I^j_i),\\
			\sum\limits_{a\in A(I^j_i)}\gamma^i(x,t;\varpi^i_{I^j_i}a)-(1 - \theta(t))\gamma^i(x,t;\varpi^i_{I^j_i})-\theta(t)\gamma^{0i}(\varpi^i_{I^j_i})=0,\;i\in N,j\in M_i.
		\end{array}
	\end{equation}
	At $t=2$, this system has a unique solution, $(x^*(2),\nu^*(2))$, which is identical to the sole solution of the system (\ref{nfpe-eqt-hlt3}). Furthermore, the same conclusion as in the previous subsection can still be drawn, namely, that a distinguished smooth path exists in the solution set of the system (\ref{nfpe-eqt-hlg3}), originating from $(x^*(2),\nu^*(2))$ at $t=2$, and converging to a normal-form perfect equilibrium as $t$ approaches zero.
	
	\section{Numerical performance}\label{nfpe-sec-prm6}
	In this section, we present a set of numerical experiments aimed at evaluating the effectiveness and efficiency of the proposed methods. Our investigation centers on three primary aspects: \begin{itemize}
		\item The ability of our algorithm converging to a more stable normal-form perfect equilibrium in extensive-form games, especially in those possessing unstable extensive-form perfect equilibria.
		\item The effectiveness of our methods in addressing complex multi-player, multi-action games.
		\item A comparative analysis of the three methods in handling large-scale games. \end{itemize}
	To achieve these objectives, we employed the predictor-corrector method to numerically trace the smooth paths defined by the systems (\ref{nfpe-eqt-log4}), (\ref{nfpe-eqt-hlt3}), and (\ref{nfpe-eqt-hlg3}), respectively referred to as LOGB, HLTP, and HLOG ($\varepsilon_0=1$). In addition, to enhance the smoothness of the paths near $t=1$, we adopted the variable-substitution adjustment technique proposed in~\cite{houSequenceformDifferentiablePathfollowing2025}. During the tracing procedure, each iteration comprised a predictor to approximate the next solution and a corrector to refine this approximation for improved accuracy. Detailed implementation guidelines can be found in Allgower and Georg~\cite{AllgowerNumericalContinuationMethods1990} and Rheinboldt~\cite{rheinboldtNumericalContinuationMethods2000}. The adopted parameter settings including a predictor step size of $0.05t^{0.3}$ and a corrector accuracy of $0.5t^{0.3}$. The successful termination criterion $t<10^{-4}$ was applied, with failure occurring when the number of iterations or computational time surpassed predefined limits. All computations were conducted on a Windows Server 2016 Standard with an Intel(R) Xeon(R) CPU E5-2650 v4 @ 2.20GHz (2 processors) and 128GB of RAM.
	\begin{example}\label{nfpe-exm-num1} {\em In this example, we examine two extensive-form games, depicted in Figures~\ref{nfpe-fig-gam1}--\ref{nfpe-fig-gam2}, to validate the capability of our methods in converging to normal-form perfect equilibria when existing unstable extensive-form perfect equilibria. For the first game, the sole normal-form perfect equilibrium is given by $\sigma^1(ac)=1,\sigma^2(A)=1$. In the second game, the unique normal-form perfect equilibrium is $\sigma^1(L_1L_2L_3)=1,\sigma^2(A)=1,\sigma^3(C)=1$. However, additional unstable extensive-form perfect equilibria exist in both games. Using our methods with randomly chosen starting points from the feasible region, we obtain smooth paths that converge to the normal-form perfect equilibria, as illustrated in Figures~\ref{NFPE-fig-pth1}--\ref{NFPE-fig-pth6}.
			\begin{figure}[H]
				\centering
				\begin{minipage}[b]{0.41\textwidth}
					\centering
					\includegraphics[width=0.9\textwidth]{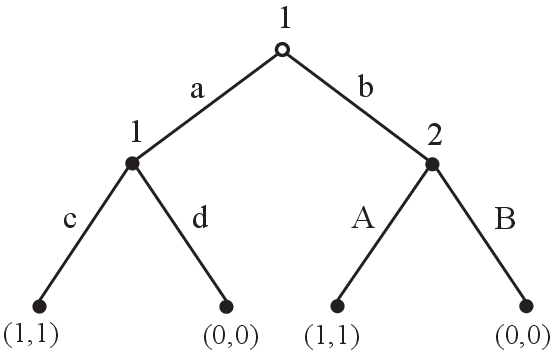}
					\caption{\label{nfpe-fig-gam1}{\small An Extensive-Form Game from van Damme~\cite{vandammeRelationPerfectEquilibria1984}}}
				\end{minipage}\hfill
				\begin{minipage}[b]{0.57\textwidth}
					\centering
					\includegraphics[width=0.9\textwidth]{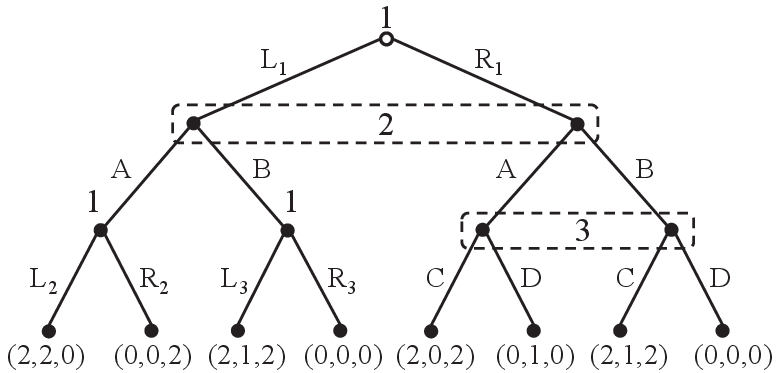}
					\caption{\label{nfpe-fig-gam2}{\small An Extensive-Form Game}}
				\end{minipage}
		\end{figure}}
	\end{example}
	\begin{figure}[htp]
		\centering
		\begin{minipage}[b]{0.49\textwidth}
			\centering
			\includegraphics[width=1\textwidth, height=0.20\textheight]{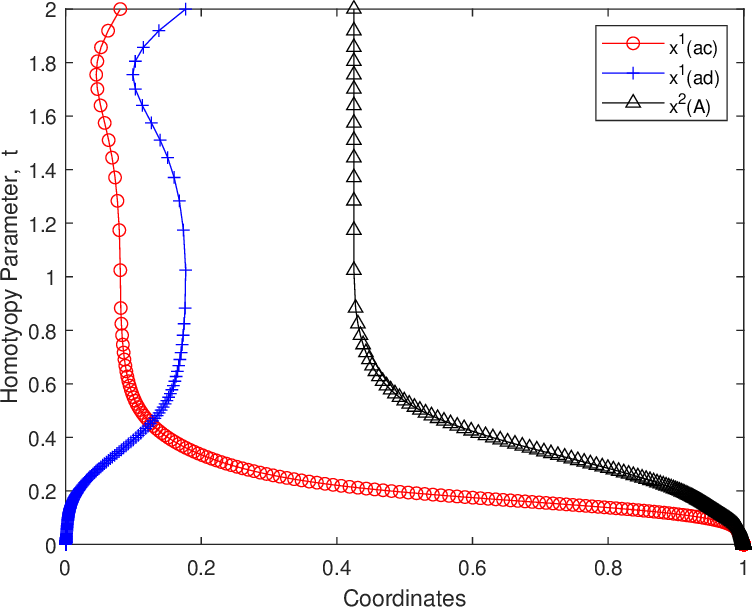}
			\caption{\label{NFPE-fig-pth1}{\footnotesize Path of Mixed Strategies Generated by LOGB for the Game in Fig.~\ref{nfpe-fig-gam1}}}\end{minipage}\hfill
		\begin{minipage}[b]{0.49\textwidth}
			\centering
			\includegraphics[width=1\textwidth, height=0.20\textheight]{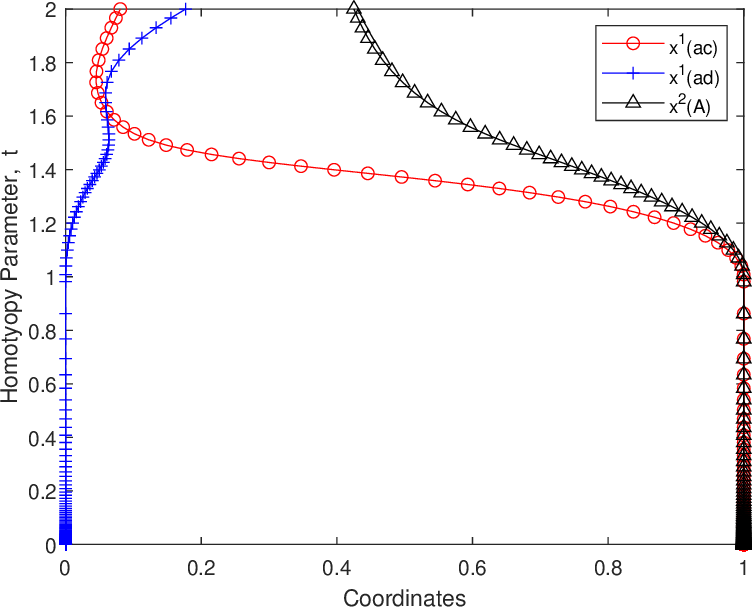}
			\caption{\label{NFPE-fig-pth2}{\footnotesize Path of Mixed Strategies Generated by HLTP for the Game in Fig.~\ref{nfpe-fig-gam1}}} \end{minipage}
	\end{figure}
	\begin{figure}[htp]
		\centering
		\begin{minipage}{0.49\textwidth}
			\centering
			\includegraphics[width=1\textwidth, height=0.20\textheight]{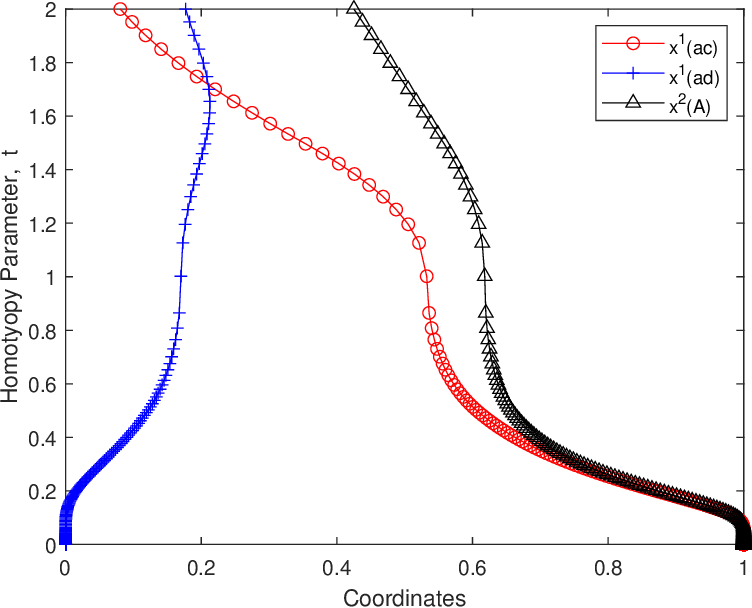}
			\caption{\label{NFPE-fig-pth3}{\footnotesize Path of Mixed Strategies Generated by HLOG for the Game in Fig.~\ref{nfpe-fig-gam1}}}\end{minipage}\hfill
		\begin{minipage}{0.49\textwidth}
			\centering
			\includegraphics[width=1\textwidth, height=0.20\textheight]{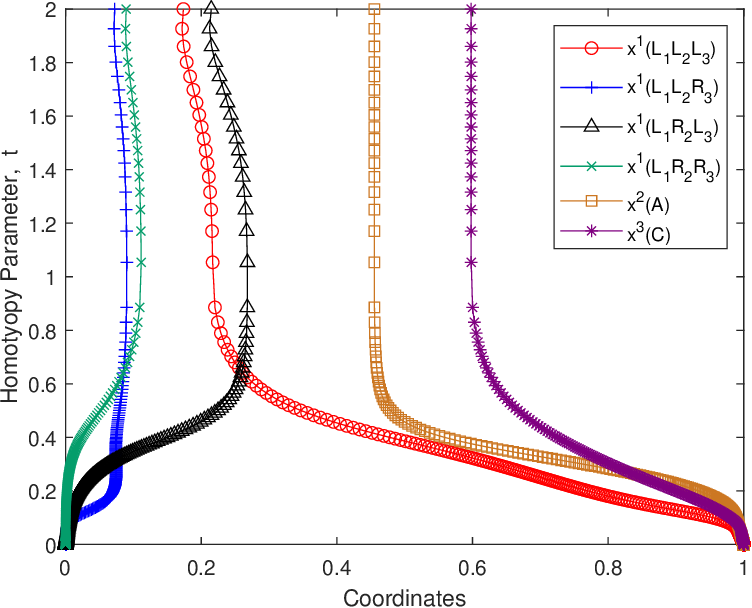}
			\caption{\label{NFPE-fig-pth4}{\footnotesize Path of Mixed Strategies Generated by LOGB for the Game in Fig.~\ref{nfpe-fig-gam2}}} \end{minipage}
	\end{figure}
	\begin{figure}[htp]
		\centering
		\begin{minipage}{0.49\textwidth}
			\centering
			\includegraphics[width=1\textwidth, height=0.20\textheight]{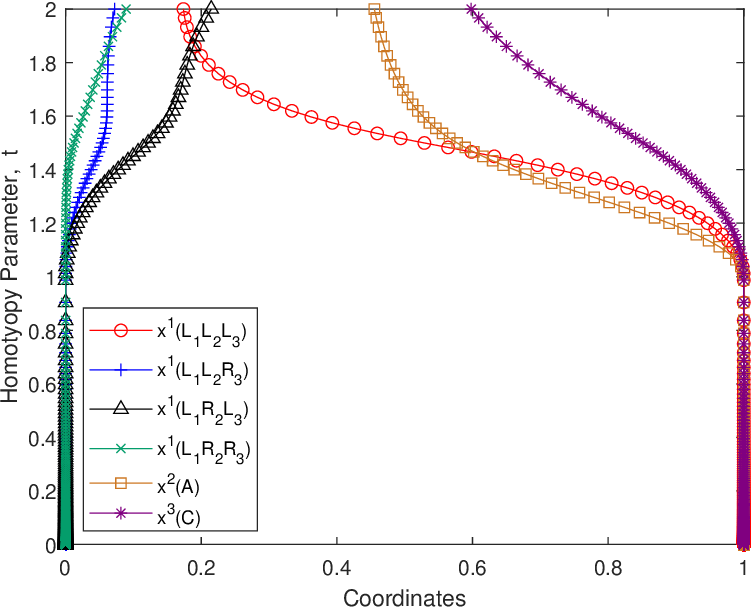}
			\caption{\label{NFPE-fig-pth5}{\footnotesize Path of Mixed Strategies Generated by HLTP for the Game in Fig.~\ref{nfpe-fig-gam2}}}\end{minipage}\hfill
		\begin{minipage}{0.49\textwidth}
			\centering
			\includegraphics[width=1\textwidth, height=0.20\textheight]{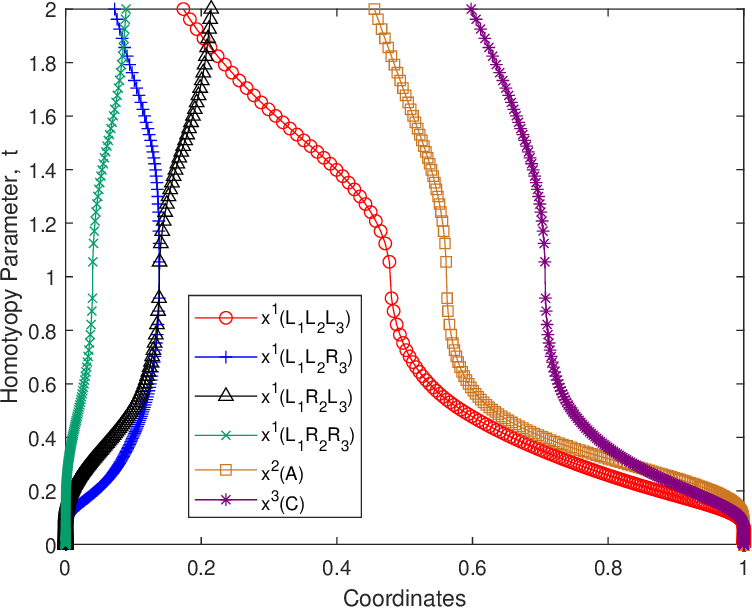}
			\caption{\label{NFPE-fig-pth6}{\footnotesize Path of Mixed Strategies Generated by HLOG for the Game in Fig.~\ref{nfpe-fig-gam2}}} \end{minipage}
	\end{figure}
	
	\begin{example}\label{nfpe-exm-num2} {\em In this example, we consider two extensive-form games, depicted in Figures~\ref{nfpe-fig-gam3}--\ref{nfpe-fig-gam4}, to evaluate the effectiveness of our methods in solving complex multi-player, multi-action games. The starting point is randomly chosen from the feasible region, and the corresponding paths in mixed strategies are illustrated in Figures~\ref{NFPE-fig-pth7}--\ref{NFPE-fig-pth12}. These paths successfully converge to a normal-form perfect equilibrium.
			\begin{figure}[H]
				\centering
				\begin{minipage}[b]{0.42\textwidth}
					\centering
					\includegraphics[width=0.9\textwidth]{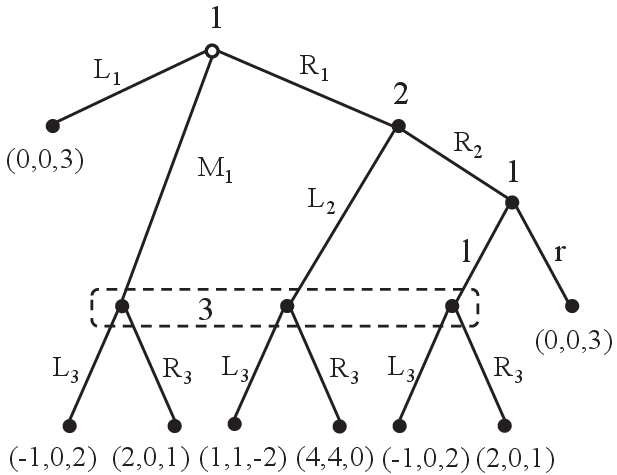}
					\caption{\label{nfpe-fig-gam3}{\small An Extensive-Form Game from Mas-Colell et al.~\cite{andreuMicroeconomicTheory1995}}}
				\end{minipage}
				\begin{minipage}[b]{0.56\textwidth}
					\centering
					\includegraphics[width=0.9\textwidth]{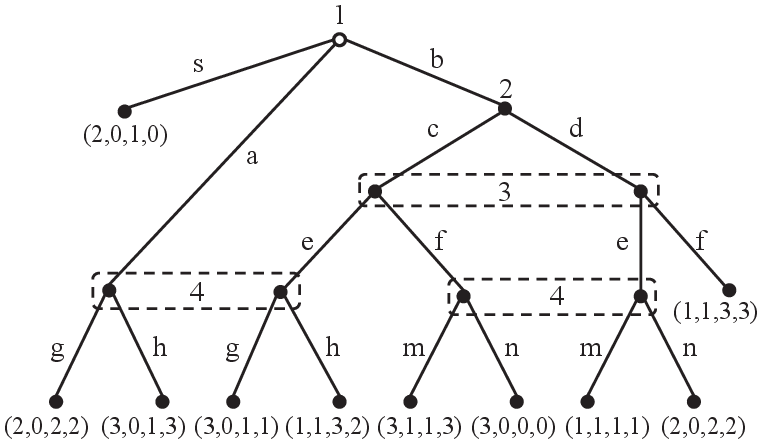}
					\caption{\label{nfpe-fig-gam4}{\small An Extensive-Form Game from Bonanno~\cite{bonanno2015game}}}
				\end{minipage}\hfill
			\end{figure}}
	\end{example}
	\begin{figure}[htp]
		\centering
		\begin{minipage}{0.49\textwidth}
			\centering
			\includegraphics[width=1\textwidth, height=0.20\textheight]{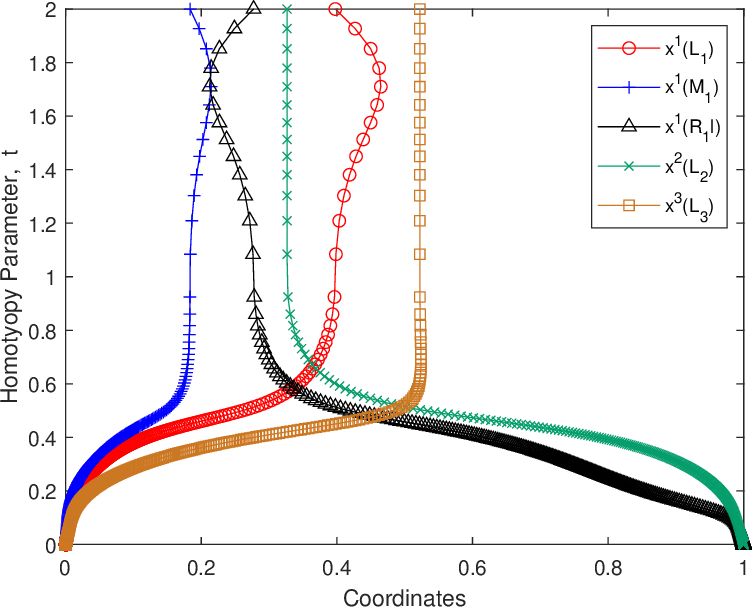}
			\caption{\label{NFPE-fig-pth7}{\footnotesize Path of Mixed Strategies Generated by LOGB for the Game in Fig.~\ref{nfpe-fig-gam3}}}\end{minipage}\hfill
		\begin{minipage}{0.49\textwidth}
			\centering
			\includegraphics[width=1\textwidth, height=0.20\textheight]{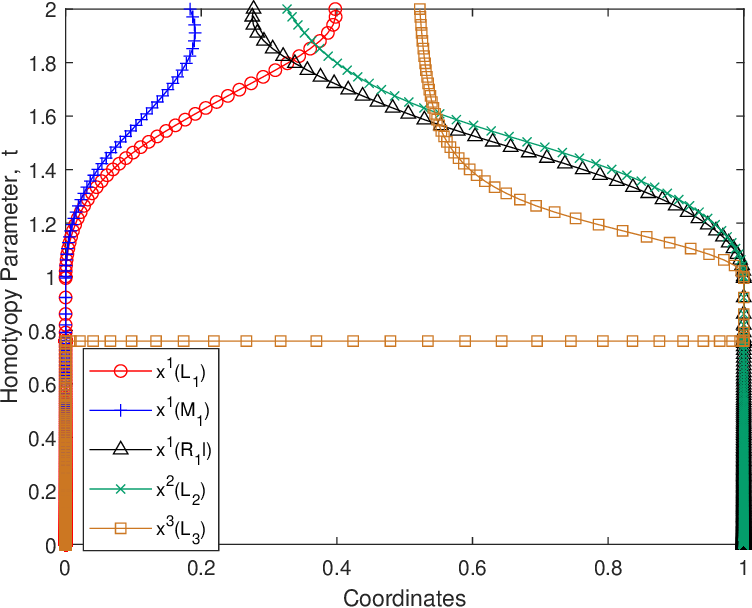}
			\caption{\label{NFPE-fig-pth8}{\footnotesize Path of Mixed Strategies Generated by HLTP the Game in Fig.~\ref{nfpe-fig-gam3}}} \end{minipage}
	\end{figure}
	\begin{figure}[htp]
		\centering
		\begin{minipage}{0.49\textwidth}
			\centering
			\includegraphics[width=1\textwidth, height=0.2\textheight]{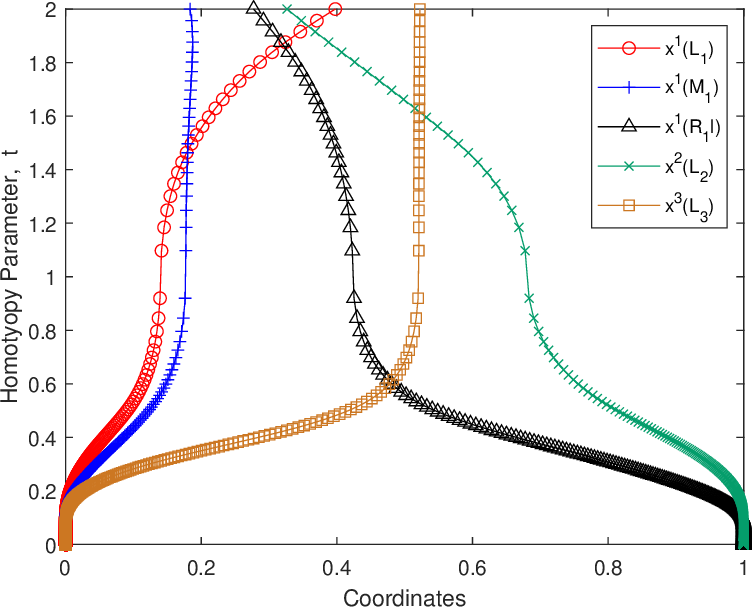}
			\caption{\label{NFPE-fig-pth9}{\footnotesize Path of Mixed Strategies Generated by HLOG the Game in Fig.~\ref{nfpe-fig-gam3}}}\end{minipage}\hfill
		\begin{minipage}{0.49\textwidth}
			\centering
			\includegraphics[width=1\textwidth, height=0.2\textheight]{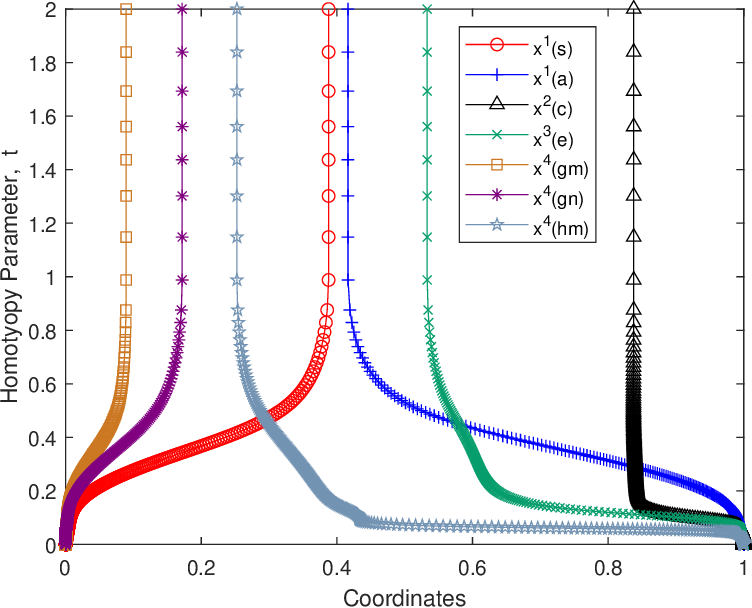}
			\caption{\label{NFPE-fig-pth10}{\footnotesize Path of Mixed Strategies Generated by LOGB the Game in Fig.~\ref{nfpe-fig-gam4}}} \end{minipage}
	\end{figure}
	\begin{figure}[htp]
		\centering
		\begin{minipage}{0.49\textwidth}
			\centering
			\includegraphics[width=1\textwidth, height=0.2\textheight]{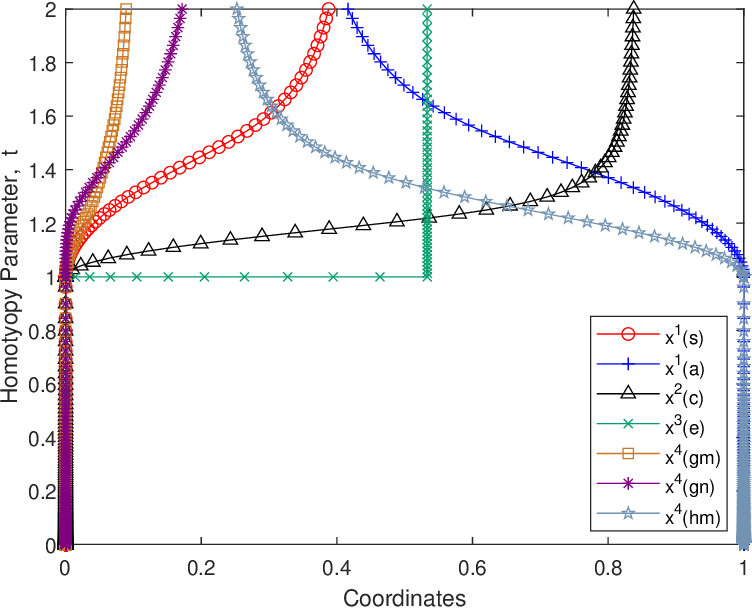}
			\caption{\label{NFPE-fig-pth11}{\footnotesize Path of Mixed Strategies Generated by HLTP the Game in Fig.~\ref{nfpe-fig-gam4}}}\end{minipage}\hfill
		\begin{minipage}{0.49\textwidth}
			\centering
			\includegraphics[width=1\textwidth, height=0.2\textheight]{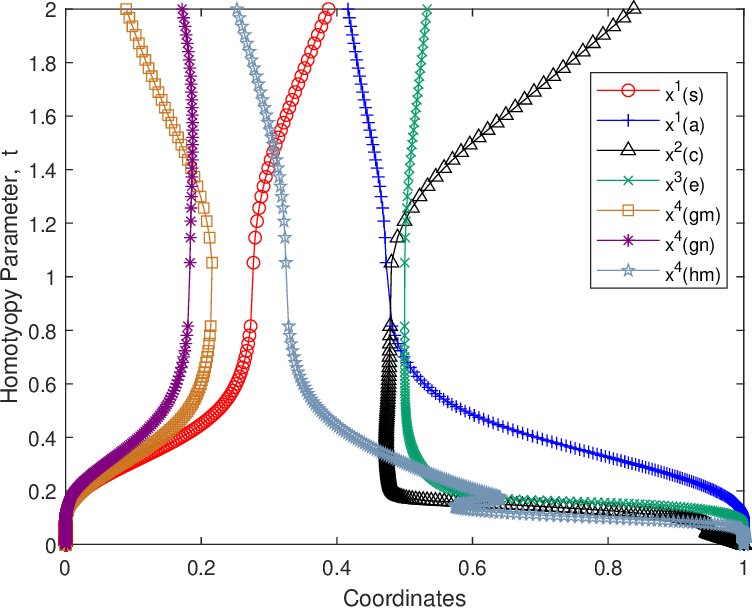}
			\caption{\label{NFPE-fig-pth12}{\footnotesize Path of Mixed Strategies Generated by HLOG the Game in Fig.~\ref{nfpe-fig-gam4}}} \end{minipage}
	\end{figure}
	\begin{example}\label{nfpe-exm-num3} {\em To compare the convergence performance of our methods, we employ two structurally distinct types of random extensive-form games, as shown in Figures~\ref{nfpe-fig-gam5}--\ref{nfpe-fig-gam6}. Both game types are parameterized by the number of players ($n$), the maximum historical depth ($\mathcal{L}$), and the number of allowable actions per information set ($\mathcal{A}$). In these games, players act cyclically, with the terminal payoffs determined by random integers uniformly distributed between $-10$ and $10$. A detailed explanation of the two game types is provided below.
	\begin{itemize}
		\item \textbf{Type 1:} As shown in Figures~\ref{nfpe-fig-gam5}, histories are classified into the same information set only when they diverge in the final actions taken. Moreover, all terminal histories exhibit an identical length.
		\item \textbf{Type 2:} As represented in Figures~\ref{nfpe-fig-gam6}, this structural configuration is commonly found in the literature. For odd-indexed players, each information set consists of a single history. In contrast, for even-indexed players, histories are grouped into the same information set only when they share an identical corresponding sequence. The probability that player $0$ chooses each of the available actions is equal, and the total number of actions is fixed at $3$, without loss of generality.
	\end{itemize}
	Since the number of players does not directly impact the game size, we set $n=3$ for Type 1 games and $n=4$ for Type 2 games, adjusting the other two parameters to control the game size. To conduct a thorough comparative evaluation of the three path-following methods, $20$ random games with distinct payoffs were generated and solved for each parameter configuration $(\mathcal{L}, \mathcal{A})$ across both types of games. For each game, all three methods utilized the same randomly generated starting point, and the parameters of the predictor-corrector algorithm remained consistent throughout the entire experiment.
	
	The numerical results in Tables~\ref{t2} and \ref{t3} show that the LOGB method consistently outperforms HLTP and HLOG in terms of numerical stability, efficiency, and scalability. LOGB achieves a $0\%$ failure rate across all tested games, requires fewer iterations, and has the shortest computational time overall. Although HLTP sometimes converges faster in iteration numbers for small-scale games, it suffers from high failure rates and poor scalability. HLOG performs better than HLTP in terms of stability but is significantly slower and less efficient. Overall, LOGB provides the most reliable and efficient performance, making it the preferred method for computing normal-form perfect equilibria across a wide range of game sizes.
		\begin{figure}[H]
				\centering
				\begin{minipage}[b]{0.55\textwidth}
					\centering
					\includegraphics[width=0.95\textwidth]{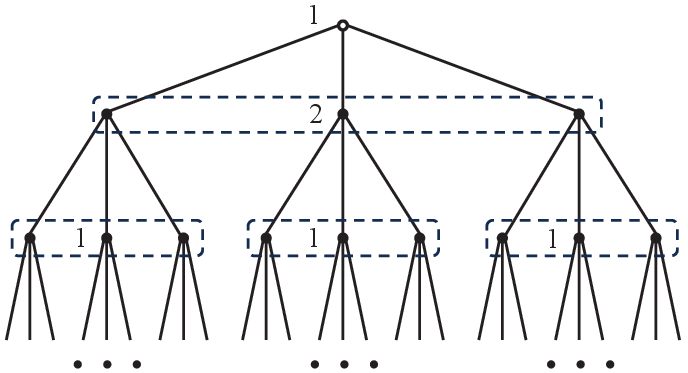}
					\caption{\label{nfpe-fig-gam5}{\small A Random Extensive-Form Game of Type 1}}
				\end{minipage}\hfill
				\begin{minipage}[b]{0.43\textwidth}
					\centering
					\includegraphics[width=0.95\textwidth]{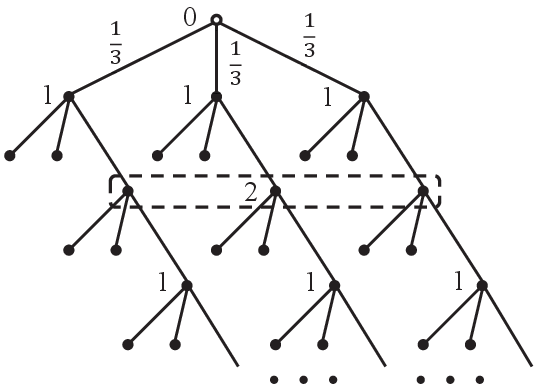}
					\caption{\label{nfpe-fig-gam6}{\small A Random Extensive-Form Game of Type 2}}
				\end{minipage}
		\end{figure}}
	\end{example}
	\begin{table}[htbp]\centering\renewcommand\arraystretch{0.95}\setlength{\tabcolsep}{5pt}
		\caption{Numerical Comparisons for the Game in Fig.~\ref{nfpe-fig-gam5}}\label{t2}
		\begin{tabular*}{\textwidth}{@{\extracolsep\fill}>{\rowmac}c>{\rowmac}l>{\rowmac}l>{\rowmac}l>{\rowmac}l>{\rowmac}l>{\rowmac}l>{\rowmac}l>{\rowmac}l>{\rowmac}l>{\rowmac}l<{\clearrow}}\toprule
			\multirow{2}*{$(\mathcal{L},\mathcal{A})$}&  & \multicolumn{3}{l}{Iteration Numbers} & \multicolumn{3}{l}{Computational Time} & \multicolumn{3}{l}{Failure Rates} \\\cmidrule(r){3-5}\cmidrule(r){6-8}\cmidrule(r){9-11}
			& & LOGB & HLTP & HLOG & LOGB & HLTP & HLOG & LOGB & HLTP & HLOG\\\midrule
			$(5,2)$ & max & 807 & - & 527 & 120.9 & - & 41.7 & 0\% & 5\% & 0\%\\
			& min & 165 & 226 & 232 & 7.2 & 8.2 & 8.7 & & & \\
			&\setrow{\bfseries} med & 192.5 & 465.5 & 259.5 & 8.1 & 24.4 & 10.4 & & &\\
			$(6,2)$ & max & 480 & - & 3657 & 107.5 & - & 658.6 & 0\% & 20\% & 0\%\\
			& min & 177 & 402 & 229 & 28.8 & 65.8 & 32.8 & & & \\
			&\setrow{\bfseries} med & 257.0 & 1590.0 & 316.0 & 40.3 & 362.4 & 45.8 & & &\\
			$(7,2)$ & max & 4299 & - & - & 3531.9 & - & - & 0\% & 70\% & 25\%\\
			& min & 224 & 543 & 305 & 165.8 & 536.2 & 217.4 & & & \\
			&\setrow{\bfseries} med & 286.0 & - & 419.5 & 211.5 & - & 290.3 & & &\\
			$(8,2)$ & max & - & - & - & - & - & - & 20\% & 95\% & 25\%\\
			& min & 276 & 766 & 349 & 1482.9 & 5327.8 & 1544.6 & & & \\
			&\setrow{\bfseries} med & 427.0 & - & 565.5 & 2090.6 & - & 2661.2 & & &\\
			$(4,3)$ & max & 492 & - & 3356 & 44.3 & - & 364.6 & 0\% & 25\% & 0\%\\
			& min & 200 & 341 & 234 & 19.8 & 50.6 & 24.1 & & & \\
			&\setrow{\bfseries} med & 243.0 & 2120.0 & 306.5 & 24.7 & 385.8 & 30.4 & & &\\
			$(4,4)$ & max & 694 & - & - & 587.1 & - & - & 0\% & 40\% & 5\%\\
			& min & 248 & 374 & 269 & 169.5 & 372.1 & 184.2 & & & \\
			&\setrow{\bfseries} med & 298.5 & 3243.0 & 371.0 & 213.9 & 3329.6 & 251.9 & & &\\
			$(4,5)$ & max & - & - & - & - & - & - & 25\% & 95\% & 30\%\\
			& min & 301 & 679 & 376 & 1052.9 & 3545.4 & 1225.1 & & & \\
			&\setrow{\bfseries} med & 471.0 & - & 635.5 & 2236.7 & - & 2385.1 & & &\\
			$(4,6)$ & max & - & - & - & - & - & - & 80\% & 100\% & 95\%\\
			& min & 323 & - & 360 & 4975.7 & - & 5305.4 & & & \\
			&\setrow{\bfseries} med & - & - & - & - & - & - & & &\\
			\bottomrule
		\end{tabular*}
	\end{table}
	\begin{table}[htbp]\centering\renewcommand\arraystretch{0.95}\setlength{\tabcolsep}{5pt}
		\caption{Numerical Comparisons for the Game in~Fig.~\ref{nfpe-fig-gam6}}\label{t3}
		\begin{tabular*}{\textwidth}{@{\extracolsep\fill}>{\rowmac}c>{\rowmac}l>{\rowmac}l>{\rowmac}l>{\rowmac}l>{\rowmac}l>{\rowmac}l>{\rowmac}l>{\rowmac}l>{\rowmac}l>{\rowmac}l<{\clearrow}}\toprule
			\multirow{2}*{$(\mathcal{L},\mathcal{A})$}& & \multicolumn{3}{l}{Iteration Numbers} & \multicolumn{3}{l}{Computational Time} & \multicolumn{3}{l}{Failure Rates}\\\cmidrule(r){3-5}\cmidrule(r){6-8}\cmidrule(r){9-11}
			& & LOGB & HLTP & HLOG & LOGB & HLTP & HLOG & LOGB & HLTP & HLOG\\\midrule
			$(10,2)$ & max & 502 & 5436 & 1613 & 46.9 & 1023.0 & 287.4 & 0\% & 0\% & 0\%\\
                & min & 91 & 121 & 138 & 7.3 & 7.2 & 9.2 & & & \\
                &\setrow{\bfseries} med & 147.0 & 190.0 & 216.5 & 11.9 & 19.2 & 14.6 & & &\\
			$(20,2)$ & max & 813 & - & 2452 & 1000.3 & - & 606.5 & 0\% & 5\% & 0\%\\
			& min & 113 & 115 & 204 & 40.1 & 34.9 & 56.6 & & & \\
			&\setrow{\bfseries} med & 160.0 & 186.0 & 361.0 & 51.6 & 100.7 & 101.4 & & &\\
			$(30,2)$ & max & 483 & 5394 & 5075 & 843.2 & 4070.3 & 2896.1 & 0\% & 0\% & 0\%\\
			& min & 171 & 164 & 210 & 128.6 & 142.0 & 141.7 & & & \\
			&\setrow{\bfseries} med & 258.5 & 284.0 & 537.5 & 206.9 & 389.2 & 379.3 & & &\\
			$(40,2)$ & max & 473 & - & - & 868.7 & - & - & 0\% & 5\% & 5\%\\
			& min & 293 & 172 & 250 & 503.1 & 269.9 & 424.4 & & & \\
			&\setrow{\bfseries} med & 396.0 & 471.0 & 1217.0 & 674.4 & 1234.0 & 1728.4 & & &\\
			$(10,4)$ & max & 9600 & 1945 & 1435 & 3522.8 & 724.5 & 446.6 & 0\% & 0\% & 0\%\\
			& min & 143 & 81 & 190 & 47.5 & 32.6 & 55.5 & & & \\
			&\setrow{\bfseries} med & 294.0 & 171.5 & 358.0 & 96.9 & 83.6 & 101.2 & & &\\
			$(10,6)$ & max & 431 & - & 1164 & 377.3 & - & 891.6 & 0\% & 5\% & 0\%\\
			& min & 165 & 117 & 213 & 139.0 & 96.5 & 156.9 & & & \\
			&\setrow{\bfseries} med & 266.5 & 180.5 & 416.5 & 217.1 & 175.3 & 281.7 & & &\\
			$(10,8)$ & max & 1023 & - & 1097 & 1545.9 & - & 1326.0 & 0\% & 5\% & 0\%\\
			& min & 98 & 82 & 144 & 8.5 & 6.8 & 10.7 & & & \\
			&\setrow{\bfseries} med & 321.0 & 186.0 & 336.0 & 527.3 & 311.7 & 477.8 & & &\\
			$(10,10)$ & max & 711 & - & 2198 & 2108.6 & - & 4535.2 & 0\% & 5\% & 0\%\\
			& min & 293 & 97 & 224 & 770.9 & 353.6 & 565.0 & & & \\
			&\setrow{\bfseries} med & 346.5 & 178.0 & 489.0 & 934.5 & 511.3 & 1137.1 & & &\\
			\bottomrule
		\end{tabular*}
	\end{table}
	\section{Conclusion}\label{nfpe-sec-prm7}
	The sequence form's holistic property enables the development of a sequence-form characterization for normal-form refinements of Nash equilibria, and its compactness contributes to computational efficiency. Inspired by this, we have developed a sequence-form characterization of normal-form perfect equilibria for extensive-form games with perfect recall. Guided by this theoretical foundation, we have proposed three distinct sequence-form differential path-following methods for computing normal-form perfect equilibria and rigorously proved their convergence. These methods are underpinned by the construction of artificial games, where the first method incorporates logarithmic-barrier terms into the payoff functions, while the last two methods are derived by extending Harsanyi's linear and logarithmic tracing procedure to the sequence form, respectively. Each of the three methods allows for choosing the starting point within a specified range. The existence of smooth paths converging to normal-form perfect equilibria has been established by means of  theoretical analysis and numerical experiments. To evaluate the performance of these methods, we have constructed two distinct types of random games for comparative experiments. The experimental results further substantiate the effectiveness and efficiency of our methods. Future work could investigate the computation of other normal-form refinements of Nash equilibrium in $n$-player games, such as normal-form proper equilibrium.
	
	\section*{Acknowledgments}
	This work was partially supported by GRF: CityU 11306821 of Hong Kong SAR Government.
	
	\begin{appendices}
		\section{}\label{nfpe-appen-2}
		Let $(\gamma^*{(t)}, t, \lambda^*, \nu^*)$ be a solution to the system~(\ref{nfpe-eqt-log1}) for $t\in(0,2]$. This appendix demonstrates the boundedness of $(\lambda^*, \nu^*)$, a key requirement for proving related lemmas and theorems.
	
		By employing backward induction on the first set of equations in the system~(\ref{nfpe-eqt-log1}), we obtain the following equations for $i\in N$, $j\in M_i$, and $a\in A(I^j_i)$,
		\begin{equation}\label{nfpe-eqt-app1}
			\begin{array}{l}
				-\nu^i_{I^j_i}+\sum\limits_{\varpi^i\in W^i,\varpi^i_{I^j_i}a\subseteq \varpi^i}((1-c(t))g^i(\varpi^i,\gamma^{*-i}(t))\\
				\hspace{2.5cm}+\lambda^i(\varpi^i))\prod\limits_{a_q\in \varpi^i\backslash \varpi^i_{I^j_i}a,a_q\in A(I^{j_q}_i)}(1-\theta(t))\beta^i_{I^{j_q}_i}(a_q)=0,
			\end{array}
		\end{equation}
		where \begin{small}\[\beta^i_{I^{j_q}_i}(a_q)=\frac{\gamma^i(t;\varpi^i_{I^{j_q}_i}a_q)-\rho(t)(1-\theta(t))\eta^{0i}(\varpi^i_{I^{j_q}_i}a_q)}{(1-\theta(t))\gamma^i(t;\varpi^i_{I^{j_q}_i})+\theta(t)\gamma^{0i}(\varpi^i_{I^{j_q}_i})-\rho(t)(1-\theta(t))\eta^{0i}(\varpi^i_{I^{j_q}_i})}>0.\]\end{small}It can be seen that $\sum_{a_q\in A(I^{j_q}_i)}\beta^i_{I^{j_q}_i}(a_q)=1$. Specifically, for $(j,a)\in D_i$, the equations (\ref{nfpe-eqt-app1}) follow directly from the first set of equations in the system (\ref{nfpe-eqt-log1}). When $(j,a)\notin D_i$, we assume that the equations (\ref{nfpe-eqt-app1}) hold for all $j_q\in M_i(\varpi^i_{I^j_i}a)$ and $a_q\in A(I^{j_q}_i)$. By multiplying both sides of the equations (\ref{nfpe-eqt-app1}) by $\beta^i_{I^{j_q}_i}(a_q)$ and summing over $a_q\in A(I^{j_q}_i)$, we obtain the expression for $\nu^i_{I^{j_q}_i}$. Finally, by substituting $\zeta^i_{I^j_i}(a)$ with the recursive outcomes, the resulting equation~(\ref{nfpe-eqt-app1}) for $(j,a)\notin D_i$ is derived. For further analysis, we multiply $\beta^i_{I^{j}_i}(a)$ on both sides of the equation (\ref{nfpe-eqt-app1}) and sum over $a\in A(I^j_i)$, yielding
		\begin{equation}\label{nfpe-eqt-app2}
			\begin{array}{l}
				-\nu^i_{I^j_i}+\sum\limits_{a\in A(I^j_i)}\sum\limits_{\varpi^i\in W^i,\varpi^i_{I^j_i}a\subseteq \varpi^i}((1-c(t))g^i(\varpi^i,\gamma^{*-i}(t))\\
				\hspace{2.5cm}+\lambda^i(\varpi^i))\beta^i_{I^{j}_i}(a)\prod\limits_{a_q\in \varpi^i\backslash \varpi^i_{I^j_i}a,a_q\in A(I^{j_q}_i)}(1-\theta(t))\beta^i_{I^{j_q}_i}(a_q)=0.
			\end{array}
		\end{equation}
		Let $L^i_0=\min_{h\in Z}u^i(h)$, $U^i_0=\max_{h\in Z}u^i(h)$ and $Y^i_0=\max_{j\in M_i,a\in A(I^j_i)}\gamma^{0i}(\varpi^i_{I^j_i}a)/(\gamma^{0i}(\varpi^i_{I^j_i})-\eta^{0i}(\varpi^i_{I^j_i}))$ for $i\in N$. The equations (\ref{nfpe-eqt-app2}) indicate that $\nu^i_{I^j_i}\geq-|L^i_0|$ for any $i\in N,j\in M_i$. Then we proceed to analyze the upper bound of $(\lambda^*,\nu^*)$ under two distinct cases.
		
		\textbf{Case 1 ($\frac{3}{2}\leq t\leq 2$):} In this case, the inequality $\frac{1}{3} \leq \theta(t) \leq 1$ holds. From the equations (\ref{nfpe-eqt-app2}) and the third set of equations in the system (\ref{nfpe-eqt-log1}), it follows that $\nu^i_{I^j_i} \leq |U^i_0| + 3|W^i|Y^i_0$ for all $i \in N$ and $j \in M_i$. This result further implies, based on the first set of equations in the system (\ref{nfpe-eqt-log1}), that $\lambda^i(\varpi^i_{I^j_i}a) \leq |W^i||U^i_0| + 3|W^i|Y^i_0 + |L^i_0| + |M_i(\varpi^i_{I^j_i}a)||L^i_0|$ for all $i \in N$, $j \in M_i$, and $a \in A(I^j_i)$.
		
		\textbf{Case 2 ($0< t\leq \frac{3}{2}$):} In this case, the inequality $0\leq \theta(t) \leq \frac{1}{3}$ holds. Consider $i \in N$ and $j \in M_i$ such that $\varpi^i_{I^j_i} = \emptyset$. Since $\lambda^i(\varpi^i)\beta^i_{I^{j}_i}(a)\prod_{a_q\in \varpi^i\backslash \varpi^i_{I^j_i}a,a_q\in A(I^{j_q}_i)}(1-\theta(t))\beta^i_{I^{j_q}_i}(a_q)\leq |W^i|Y^i_0$, it can be drawn from the equations~(\ref{nfpe-eqt-app2}) that $\nu^i_{I^j_i}\leq|U^i_0|+|W^i|Y^i_0\triangleq V^j_i$. Furthermore, $(1-\theta(t))\zeta^i_{I^j_i}(a)\leq V^j_i+|L^i_0|$ holds based on the first set of equations in the system (\ref{nfpe-eqt-log1}). As $\theta(t)\leq\frac{1}{3}$, it follows that $\nu^i_{I^{j_q}_i}\leq \frac{3}{2}(V^j_i+|L^i_0|)+(|M_i(\varpi^i_{I^j_i}a)|-1)|L^i_0|\triangleq V^{j_q}_i$ for $a\in A(I^j_i)$ and $j_q\in M_i(\varpi^i_{I^j_i}a)$. Proceeding by forward induction and noting the game's finiteness, $\nu^i_{I^j_i}$ is bounded above by $V^i_0 = \max_{j_q \in M_i} V^{j_q}_i$ for any $i\in N,j \in M_i$. Finally, the first set of equations in the system (\ref{nfpe-eqt-log1}) implies that $\lambda^i(\varpi^i_{I^j_i}a) \leq V^i_0 + |L^i_0| + |M_i(\varpi^i_{I^j_i}a)||L^i_0|$ for $i\in N,j\in M_i$ and $a\in A(I^j_i)$.
		\section{}\label{nfpe-appen-1}
		This appendix demonstrates that the Jacobian matrix $Dp(x,t,\nu;\alpha)$ of $p(x,t,\nu;\alpha)$ has full-row rank on the domain $\mathbb{R}^{n_0}\times(0,2)\times\mathbb{R}^{m_0}\times\mathbb{R}^{n_0}$, and that $Dp_\alpha(x,2,\nu)$ maintains full-row rank on $\mathbb{R}^{n_0}\times\mathbb{R}^{m_0}$, a property essential for the proof of Theorem~\ref{nfpe-thm-log4}.\par
		Defining $g(x,t,\nu;\alpha)$ as the initial $n_0$ components of $p(x,t,\nu;\alpha)$, we write the Jacobian matrix $Dp(x,t,\nu;\alpha)$ as
		\begin{equation}
			Dp(x,t,\nu;\alpha)=\left(\begin{array}{cccc}
				D_{x} g&D_t g&D_\nu g&-c(t)(2-\theta(t))I^{n_0\times n_0} \\
				B_1+B_2&C&0&0\\
			\end{array}\right),
			\nonumber
		\end{equation}
		where $I^{n_0\times n_0}$ denotes the identity matrix,
		\begin{equation}
			B_1=\left(\begin{array}{cccc}
				{b^1_1}^\top&&&\\
				&{b^2_1}^\top&&\\
				&&\ddots&\\
				&&&{b^{m_n}_n}^\top
			\end{array}\right)\in \mathbb{R}^{m_0\times n_0} 
			\nonumber
		\end{equation}
		with
		\[b^{j}_i=(\frac{d}{dx^i(\varpi^i_{I^j_i}a)}\gamma^i(x,t;\varpi^i_{I^j_i}a):a\in A(I^j_i))^\top\in\mathbb{R}^{|A(I^j_i)|}.\]
		The matrix $B_2\in \mathbb{R}^{m_0\times n_0}$ assigns to each element, where both row and column indices correspond to $\varpi^i_{I^j_i}\neq\emptyset,i\in N,j\in M_i$, the value $(\theta(t)-1)d\gamma^i(x,t;\varpi^i_{I^j_i})/dx^i(\varpi^i_{I^j_i})$, with remaining entries zero. The vector $C\in\mathbb{R}^{m_0}$ represents the partial derivative of the last $m_0$ components of $p(x,t,\nu;\alpha)$ with respect to $t$. Since both $I^{n_0\times n_0}$ and $B_1+B_2$ are of full-row rank, the Jacobian matrix $Dp(x,t,\nu;\alpha)$ has full-row rank on $\mathbb{R}^{n_0}\times(0,2)\times\mathbb{R}^{m_0}\times\mathbb{R}^{n_0}$.\par
		When $t=2$, the system~(\ref{nfpe-eqt-log4}) is reduced into
		\begin{equation*}\label{nfpe-eqt-apn1}\begin{array}{l}
				\lambda^i(x,2;\varpi^i_{I^j_i}a)-\nu^i_{I^j_i}=0,\;i\in N,j\in M_i,a\in A(I^j_i),\\
				\sum\limits_{a\in A(I^j_i)}\gamma^i(x,2;\varpi^i_{I^j_i}a)-\gamma^{0i}(\varpi^i_{I^j_i})=0, i\in N,j\in M_i.\\
			\end{array}
		\end{equation*}
		The Jacobian matrix then takes the form
		\begin{equation}
			Dp_\alpha(x,2,\nu)=\left(\begin{array}{cc}
				F&-E\\
				B_1&0\\
			\end{array}\right),
			\nonumber
		\end{equation}
		where
		\begin{equation}
			E=\left(\begin{array}{cccc}
				{e^1_1}&&&\\
				&{e^2_1}&&\\
				&&\ddots&\\
				&&&{e^{m_n}_n}
			\end{array}\right)\in \mathbb{R}^{n_0\times m_0} \text{ with } e^{j}_i=(1,1,\ldots,1)^\top\in\mathbb{R}^{|A(I^j_i)|},
			\nonumber
		\end{equation}
		and $F=\text{diag}(d\lambda^i(x,2;\varpi^i_{I^j_i}a)/dx^i(\varpi^i_{I^j_i}a): i\in N,j\in M_i,a\in A(I^j_i))$. By applying row and column operations to $Dp_\alpha(x,2,\nu)$, we obtain
		\begin{equation}
			Dp_\alpha(x,2,\nu)=\left(\begin{array}{cc}
				F&-E\\
				0&B_1F^{-1}E\\
			\end{array}\right).
			\nonumber
		\end{equation}		
		Since both $F$ and $B_1F^{-1}E$ are of full-row rank, it follows that $Dp_\alpha(x,2,\nu)$ retains full-row rank on $\mathbb{R}^{n_0}\times\mathbb{R}^{m_0}$.
	\end{appendices}
	
	\newpage
	\bibliography{library}

@article{aliExploringTwoNew2024,
  title = {Exploring two new iterative methods for solving absolute value equations},
  author = {Ali, Rashid and Zhang, Zhao},
  year = 2024,
  month = dec,
  journal = {J. Appl. Math. Comput.},
  volume = {70},
  number = {6},
  pages = {6245--6258},
  doi = {10.1007/s12190-024-02211-3}
  }

@article{hussainNumericalExplorationTwo2025,
  title = {Numerical exploration of two generalized iteration methods for solving nonlinear complementarity problems},
  author = {Hussain, Dawood and Pan, Kejia and Kumar, Bharat},
  year = 2025,
  month = jun,
  journal = {J. Appl. Math. Comput.},
  volume = {71},
  number = {3},
  pages = {3381--3397},
  issn = {1865-2085},
  doi = {10.1007/s12190-024-02334-7}
  }

@book{AllgowerNumericalContinuationMethods1990,
  title = {Numerical Continuation Methods: An Introduction},
  author = {Allgower, Eugene Louis and Georg, Kurt},
  year = {1990},
  volume = {13},
  publisher = {Springer},
  address = {Berlin}
}

@article{allgowerPiecewiseLinearMethods2000a,
  title = {Piecewise linear methods for nonlinear equations and optimization},
  author = {Allgower, Eugene L. and Georg, Kurt},
  year = {2000},
  month = dec,
  journal = {J. Comput. Appl. Math.},
  volume = {124},
  number = {1},
  pages = {245--261},
  doi = {10.1016/S0377-0427(00)00427-1}
}

@book{andreuMicroeconomicTheory1995,
  title = {Microeconomic Theory},
  author = {Andreu, Mas-Colell and Michael, Dennis Whinston and Jerry, R. Green},
  year = {1995},
  publisher = {Oxfors University Press},
  address = {New York}
}

@book{bonanno2015game,
  title = {Game Theory (Open Access Textbook with 165 Solved Exercises)},
  author = {Bonanno, Giacomo},
  year = {2015},
  doi = {10.13140/RG.2.1.3369.7360},
  publisher = {Open Access}
}

@article{Browdercontinuityfixedpoints1960,
  title = {On continuity of fixed points under deformations of continuous mappings},
  author = {Browder, Felix Earl},
  year = {1960},
  journal = {Summa Bras. Math.},
  volume = {4},
  pages = {183--191}
}

@article{caoComplementarityEnhancedNashs2022,
  title = {Complementarity enhanced Nash's mappings and differentiable homotopy methods to select perfect equilibria},
  author = {Cao, Yiyin and Dang, Chuangyin and Sun, Yabin},
  year = {2022},
  month = feb,
  journal = {J. Optim. Theory Appl.},
  volume = {192},
  number = {2},
  pages = {533--563},
  doi = {10.1007/s10957-021-01977-x}
}

@article{Caodifferentiablepathfollowingmethod2022,
  title = {A differentiable path-following method to compute subgame perfect equilibria in stationary strategies in robust stochastic games and its applications},
  author = {Cao, Yiyin and Dang, Chuangyin and Xiao, Zhongdong},
  year = {2022},
  month = may,
  journal = {Eur. J. Oper. Res.},
  volume = {298},
  number = {3},
  pages = {1032--1050},
  doi = {10.1016/j.ejor.2021.06.059}
}

@article{caoDifferentiablePathFollowingMethod2023,
  title = {A differentiable path-following method with a compact formulation to compute proper equilibria},
  author = {Cao, Yiyin and Chen, Yin and Dang, Chuangyin},
  year = {2023},
  month = oct,
  journal = {INFORMS J. Comput.},
 volume = {36},
  number = {2},
  pages = {377-396},
  doi = {10.1287/ijoc.2022.0148}
}

@article{CaovariantHarsanyitracing2022,
  title = {A variant of Harsanyi's tracing procedures to select a perfect equilibrium in normal form games},
  author = {Cao, Yiyin and Dang, Chuangyin},
  year = {2022},
  month = jul,
  journal = {Games Econ. Behav.},
  volume = {134},
  pages = {127--150},
  doi = {10.1016/j.geb.2022.04.004}
}

@article{caoVariantLogisticQuantal2024,
  title = {A variant of the logistic quantal response equilibrium to select a perfect equilibrium},
  author = {Cao, Yiyin and Chen, Yin and Dang, Chuangyin},
  year = {2024},
  month = jun,
  journal = {J. Optim. Theory Appl.},
  volume = {201},
  number = {3},
  pages = {1026--1062},
  doi = {10.1007/s10957-024-02433-2}
}

@article{Chendifferentiablehomotopymethod2021,
  title = {A differentiable homotopy method to compute perfect equilibria},
  author = {Chen, Yin and Dang, Chuangyin},
  year = {2021},
  month = jan,
  journal = {Math. Program.},
  volume = {185},
  number = {1},
  pages = {77--109},
  doi = {10.1007/s10107-019-01422-y}
}

@article{Chenextensionquantalresponse2020,
  title = {An extension of quantal response equilibrium and determination of perfect equilibrium},
  author = {Chen, Yin and Dang, Chuangyin},
  year = {2020},
  month = nov,
  journal = {Games Econ. Behav.},
  volume = {124},
  pages = {659--670},
  doi = {10.1016/j.geb.2017.12.023}
}

@article{chenReformulationbasedSimplicialHomotopy2019,
  title = {A reformulation-based simplicial homotopy method for approximating perfect equilibria},
  author = {Chen, Yin and Dang, Chuangyin},
  year = {2019},
  month = oct,
  journal = {Comput. Econ.},
  volume = {54},
  number = {3},
  pages = {877--891},
  doi = {10.1007/s10614-018-9847-0}
}

@article{Chenreformulationbasedsmoothpathfollowing2016,
  title = {A reformulation-based smooth path-following method for computing Nash equilibria},
  author = {Chen, Yin and Dang, Chuangyin},
  year = {2016},
  month = oct,
  journal = {Econ. Theory Bull.},
  volume = {4},
  number = {2},
  pages = {231--246},
  doi = {10.1007/s40505-015-0083-7}
}

@incollection{dalkey1953equivalence,
  title = {Equivalence of information patterns and essentially determinate games},
  booktitle = {Contributions to the {{Theory}} of {{Games}}},
  author = {Dalkey, Norman},
  editor = {Kuhn, Harold William and Tucker, Albert William},
  year = {1953},
  volume = {2},
  pages = {217--243},
  publisher = {Princeton University Press},
  address = {Princeton}
}

@article{Doupnewsimplicialvariable1987,
  title = {A new simplicial variable dimension algorithm to find equilibria on the product space of unit simplices},
  author = {Doup, Timothy Martin and Talman, Albert Jan Johannes},
  year = {1987},
  month = oct,
  journal = {Math. Program.},
  volume = {37},
  number = {3},
  pages = {319--355},
  doi = {10.1007/BF02591741}
}

@article{EavesGeneralequilibriummodels1999,
  title = {General equilibrium models and homotopy methods},
  author = {Eaves, Bennett Curtis and Schmedders, Karl},
  year = {1999},
  month = sep,
  journal = {J. Econ. Dyn. Control},
  volume = {23},
  number = {9},
  pages = {1249--1279},
  doi = {10.1016/S0165-1889(98)00073-6}
}

@book{FiaccoIntroductionSensitivityStability1983,
  title = {Introduction to Sensitivity and Stability Analysis in Nonlinear Programming},
  author = {Fiacco, Anthony Vincent},
  year = {1983},
  month = aug,
  publisher = {Academic Press},
  address = {New York}
}

@incollection{GarciaSimplicialApproximationEquilibrium1973,
  title = {Simplicial approximation of an equilibrium point for non-cooperative N-person games},
  booktitle = {Mathematical {{Programming}}},
  author = {Garcia, Carlos Benito and Lemke, Carlton Edward and Luethi, Heinz},
  editor = {Hu, T. C. and Robinson, Stephen M.},
  year = {1973},
  month = jan,
  pages = {227--260},
  publisher = {Academic Press},
  address = {New York},
  doi = {10.1016/B978-0-12-358350-5.50011-7}
}

@article{gattiCharacterizationQuasiperfectEquilibria2020,
  title = {A characterization of quasi-perfect equilibria},
  author = {Gatti, Nicola and Gilli, Mario and Marchesi, Alberto},
  year = {2020},
  month = jul,
  journal = {Games Econ. Behav.},
  volume = {122},
  pages = {240--255},
  doi = {10.1016/j.geb.2020.04.012}
}

@article{GovindanglobalNewtonmethod2003,
  title = {A global Newton method to compute Nash equilibria},
  author = {Govindan, Srihari and Wilson, Robert},
  year = {2003},
  month = may,
  journal = {J. Econ. Theory},
  volume = {110},
  number = {1},
  pages = {65--86},
  doi = {10.1016/S0022-0531(03)00005-X}
}

@article{govindanStructureTheoremsGame2002,
  title = {Structure theorems for game trees},
  author = {Govindan, Srihari and Wilson, Robert},
  year = {2002},
  month = jun,
  journal = {Proc. Natl. Acad. Sci.},
  volume = {99},
  number = {13},
  pages = {9077--9080},
  doi = {10.1073/pnas.082249599}
}

@book{HarsanyiGeneralTheoryEquilibrium1988,
  title = {A General Theory of Equilibrium Selection in Games},
  author = {Harsanyi, John Charles and Selten, Reinhard},
  year = {1988},
  volume = {1},
  publisher = {The MIT Press},
  address = {Cambridge}
}

@article{HeringsComputationNashEquilibrium2002,
  title = {Computation of the Nash equilibrium selected by the tracing procedure in N-person games},
  author = {Herings, Pierre Jean-Jacques and {van den Elzen}, Antoon},
  year = {2002},
  month = jan,
  journal = {Games Econ. Behav.},
  volume = {38},
  number = {1},
  pages = {89--117},
  doi = {10.1006/game.2001.0856}
}

@article{Heringsdifferentiablehomotopycompute2001,
  title = {A differentiable homotopy to compute Nash equilibria of n-person games},
  author = {Herings, Pierre Jean-Jacques and Peeters, Ronald Johannes Antonius Petrus},
  year = {2001},
  month = jul,
  journal = {Econ. Theory},
  volume = {18},
  number = {1},
  pages = {159--185},
  doi = {10.1007/PL00004129}
}

@article{houSequenceformDifferentiablePathfollowing2025,
  title = {A sequence-form differentiable path-following method to compute Nash equilibria},
  author = {Hou, Yuqing and Cao, Yiyin and Dang, Chuangyin and Wang, Yong},
  year = {2025},
  month = jun,
  journal = {Comput. Optim. Appl.},
  doi = {10.1007/s10589-025-00702-y}
}

@article{KohlbergStrategicStabilityEquilibria1986,
  title = {On the strategic stability of equilibria},
  author = {Kohlberg, Elon and Mertens, Jean-Francois},
  year = {1986},
  month = sep,
  journal = {Econometrica},
  volume = {54},
  number = {5},
  pages = {1003--1037},
  doi = {10.2307/1912320}
}

@article{Kollercomplexitytwopersonzerosum1992,
  title = {The complexity of two-person zero-sum games in extensive form},
  author = {Koller, Daphne and Megiddo, Nimrod},
  year = {1992},
  month = oct,
  journal = {Games Econ. Behav.},
  volume = {4},
  number = {4},
  pages = {528--552},
  doi = {10.1016/0899-8256(92)90035-Q}
}

@article{KollerEfficientComputationEquilibria1996,
  title = {Efficient computation of equilibria for extensive two-person games},
  author = {Koller, Daphne and Megiddo, Nimrod and {von Stengel}, Bernhard},
  year = {1996},
  month = jun,
  journal = {Games Econ. Behav.},
  volume = {14},
  number = {2},
  pages = {247--259},
  doi = {10.1006/game.1996.0051}
}

@article{KollerFindingmixedstrategies1996,
  title = {Finding mixed strategies with small supports in extensive form games},
  author = {Koller, Daphne and Megiddo, Nimrod},
  year = {1996},
  month = mar,
  journal = {Int. J. Game Theory},
  volume = {25},
  number = {1},
  pages = {73--92},
  doi = {10.1007/BF01254386}
}

@article{KollerRepresentationssolutionsgametheoretic1997,
  title = {Representations and solutions for game-theoretic problems},
  author = {Koller, Daphne and Pfeffer, Avi},
  year = {1997},
  month = jul,
  journal = {Artif. Intell.},
  volume = {94},
  number = {1},
  pages = {167--215},
  doi = {10.1016/S0004-3702(97)00023-4}
}

@article{KrepsSequentialEquilibria1982,
  title = {Sequential equilibria},
  author = {Kreps, David Marc and Wilson, Robert},
  year = {1982},
  month = jul,
  journal = {Econometrica},
  volume = {50},
  number = {4},
  pages = {863-894},
  doi = {10.2307/1912767}
}

@article{KuhnExtensiveGames1950,
  title = {Extensive {{games}}},
  author = {Kuhn, Harold William},
  year = {1950},
  month = oct,
  journal = {Proc. Natl. Acad. Sci.},
  volume = {36},
  number = {10},
  pages = {570--576},
  doi = {10.1073/pnas.36.10.570}
}

@article{LemkeBimatrixEquilibriumPoints1965,
  title = {Bimatrix equilibrium points and mathematical programming},
  author = {Lemke, Carlton Edward},
  year = {1965},
  month = may,
  journal = {Manag. Sci.},
  volume = {11},
  number = {7},
  pages = {681--689},
  doi = {10.1287/mnsc.11.7.681}
}

@article{LemkeEquilibriumPointsBimatrix1964,
  title = {Equilibrium points of bimatrix games},
  author = {Lemke, Carlton Edward and Howson, Jr., James Thomas},
  year = {1964},
  month = jun,
  journal = {J. Soc. Ind. Appl. Math.},
  volume = {12},
  number = {2},
  pages = {413--423},
  doi = {10.1137/0112033}
}

@article{luoExtensionHoffmansError1994,
  title = {Extension of {{Hoffman}}'s {{Error Bound}} to {{Polynomial Systems}}},
  author = {Luo, Xiao-Dong and Luo, Zhi-Quan},
  year = {1994},
  month = may,
  journal = {SIAM J. Optim.},
  volume = {4},
  number = {2},
  pages = {383--392},
  doi = {10.1137/0804021}
}

@article{MiltersenComputingquasiperfectequilibrium2010,
  title = {Computing a quasi-perfect equilibrium of a two-player game},
  author = {Miltersen, Peter Bro and S{\o}rensen, Troels Bjerre},
  year = {2010},
  month = jan,
  journal = {Econ. Theory},
  volume = {42},
  number = {1},
  pages = {175--192},
  doi = {10.1007/s00199-009-0440-6}
}

@article{MyersonRefinementsNashequilibrium1978,
  title = {Refinements of the Nash equilibrium concept},
  author = {Myerson, Roger Bruce},
  year = {1978},
  month = jun,
  journal = {Int. J. Game Theory},
  volume = {7},
  number = {2},
  pages = {73--80},
  doi = {10.1007/BF01753236}
}

@book{OsborneCourseGameTheory1994,
  title = {A Course in Game Theory},
  author = {Osborne, Martin John and Rubinstein, Ariel},
  year = {1994},
  month = jul,
  volume = {1},
  publisher = {The MIT Press},
  address = {Cambridge}
}

@article{rheinboldtNumericalContinuationMethods2000,
  title = {Numerical continuation methods: a perspective},
  author = {Rheinboldt, Werner C.},
  year = {2000},
  month = dec,
  journal = {J. Comput. Appl. Math.},
  volume = {124},
  number = {1-2},
  pages = {229--244},
  doi = {10.1016/s0377-0427(00)00428-3}
}

@article{RosenmullerGeneralizationLemkeHowson1971,
  title = {On a generalization of the Lemke--Howson algorithm to noncooperative N-person games},
  author = {Rosenm{\"u}ller, Johann},
  year = {1971},
  month = jul,
  journal = {SIAM J. Appl. Math.},
  volume = {21},
  number = {1},
  pages = {73--79},
  doi = {10.1137/0121010}
}

@article{SeltenReexaminationperfectnessconcept1975,
  title = {Reexamination of the perfectness concept for equilibrium points in extensive games},
  author = {Selten, Reinhard},
  year = {1975},
  month = mar,
  journal = {Int. J. Game Theory},
  volume = {4},
  number = {1},
  pages = {25--55},
  doi = {10.1007/BF01766400}
}

@article{stalnakerExtensiveStrategicForms1999,
  title = {Extensive and strategic forms: games and models for games},
  author = {Stalnaker, Robert},
  year = {1999},
  month = sep,
  journal = {Res. Econ.},
  volume = {53},
  number = {3},
  pages = {293--319},
  doi = {10.1006/reec.1999.0200}
}

@article{vandammeRelationPerfectEquilibria1984,
  title = {A relation between perfect equilibria in extensive form games and proper equilibria in normal form games},
  author = {{van Damme}, E.},
  year = {1984},
  month = mar,
  journal = {Int. J. Game Theory},
  volume = {13},
  number = {1},
  pages = {1--13},
  doi = {10.1007/BF01769861}
}

@book{vanDammeStabilityperfectionNash1987,
  title = {Stability and perfection of Nash equilibria},
  author = {{van Damme}, Eric},
  year = {1987},
  edition = {1},
  publisher = {Springer},
  address = {Berlin}
}

@article{vandenelzenAlgorithmicApproachTracing1999,
  title = {An algorithmic approach toward the tracing procedure for bi-matrix games},
  author = {{van den Elzen}, Antoon and Talman, Dolf},
  year = {1999},
  month = jul,
  journal = {Games Econ. Behav.},
  volume = {28},
  number = {1},
  doi = {10.1006/game.1998.0688}
}

@article{vandenelzenProcedureFindingNash1991,
  title = {A procedure for finding Nash equilibria in bi-matrix games},
  author = {{van den Elzen}, Arnoldus Hendrikus and Talman, Albert Jan Johannes},
  year = {1991},
  month = jan,
  journal = {Z. Oper. Res.},
  volume = {35},
  number = {1},
  pages = {27--43},
  doi = {10.1007/BF01415958}
}

@article{vanderLaanComputationFixedPoints1982,
  title = {On the computation of fixed points in the product space of unit simplices and an application to noncooperative N person games},
  author = {{van der Laan}, Gerrit and Talman, Albert Jan Johannes},
  year = {1982},
  month = feb,
  journal = {Math. Oper. Res.},
  volume = {7},
  number = {1},
  pages = {1--13},
  doi = {10.1287/moor.7.1.1}
}

@article{vonStengelComputingNormalForm2002,
  title = {Computing normal form perfect equilibria for extensive two-person games},
  author = {{von Stengel}, Bernhard and {van den Elzen}, Antoon and Talman, Dolf},
  year = {2002},
  month = mar,
  journal = {Econometrica},
  volume = {70},
  number = {2},
  pages = {693--715},
  doi = {10.1111/1468-0262.00300}
}

@article{vonStengelEfficientComputationBehavior1996,
  title = {Efficient computation of behavior strategies},
  author = {{von Stengel}, Bernhard},
  year = {1996},
  month = jun,
  journal = {Games Econ. Behav.},
  volume = {14},
  number = {2},
  pages = {220--246},
  doi = {10.1006/game.1996.0050}
}

@article{WilsonComputingEquilibriaNPerson1971,
  title = {Computing equilibria of N-person games},
  author = {Wilson, Robert},
  year = {1971},
  month = jul,
  journal = {SIAM J. Appl. Math.},
  volume = {21},
  number = {1},
  pages = {80--87},
  doi = {10.1137/0121011}
}

@article{wilsonComputingEquilibriaTwoperson1972,
  title = {Computing equilibria of two-person games from the extensive form},
  author = {Wilson, Robert},
  year = {1972},
  month = mar,
  journal = {Manag. Sci.},
  volume = {18},
  number = {7},
  pages = {448--460},
  doi = {10.1287/mnsc.18.7.448}
}

@article{zhanSmoothPathfollowingAlgorithm2018,
  title = {A smooth path-following algorithm for market equilibrium under a class of piecewise-smooth concave utilities},
  author = {Zhan, Yang and Dang, Chuangyin},
  year = {2018},
  month = nov,
  journal = {Comput. Optim. Appl.},
  volume = {71},
  number = {2},
  pages = {381--402},
  doi = {10.1007/s10589-018-0009-z}
}
\end{document}